\documentclass{article}   	
\usepackage{amsthm,amsmath,amssymb}
\usepackage{mathtools}
\theoremstyle{plain}
\newtheorem{theorem}{Theorem}[section]
\newtheorem{lemma}[theorem]{Lemma}
\newtheorem{proposition}[theorem]{Proposition}
\newtheorem{corollary}[theorem]{Corollary}

\theoremstyle{definition}
\newtheorem{definition}[theorem]{Definition}

\theoremstyle{remark}
\newtheorem{remark}[theorem]{Remark}

\newcommand{\pba}{\mathsf{pba}}

\DeclareMathOperator{\spec}{spec}
\newcommand{\Outloc}[2]{\mathrm{Out}^{\mathrm{loc}}_{#1}(#2)}
\newcommand{\Outglob}[2]{\mathrm{Out}^{\mathrm{glob}}_{#1}(#2)}

\usepackage{geometry}                		
\usepackage{graphicx}
\usepackage{cite}
\providecommand{\bigsqcap}{\mathop{\raisebox{-0.25ex}{\scalebox{1.45}{$\sqcap$}}}}

\usepackage[dvipsnames,svgnames,table]{xcolor}	

\title{Quantum uncertainty in a macroscopic domain}

\author{Othman Q. Malhas$^*$
\and
Bacim Alali$^\dag$}

\date{}				

\begin{document}
\maketitle

\begin{abstract}
We develop a classical model-theoretic representation of partial Boolean algebras and use it to formulate quantum-like uncertainty without replacing classical propositional logic.  Given a surjection from the initial formul\ae\ of a propositional language onto a partial Boolean algebra $(\mathcal V,\Pi)$, we construct a consistent theory $\mathcal T_g$ whose core---the ordered set of equivalence classes of initial formul\ae---is isomorphic to $(\mathcal V,\Pi)$.  The models of $\mathcal T_g$ are characterized as upward-closed clusters, and this characterization yields a model-theoretic formulation of KS-colourability: an $n$-dimensional partial Boolean algebra admits a KS-colouring exactly when the induced theory has a model meeting every pre-frame in one primitive formula.

For finite-spectrum observables, we define uncertainty by the number of locally admissible atomic outcomes.  A model is dispersion-free precisely when it meets every pre-frame in one primitive formula; equivalently, its core is KS-colourable.  After introducing a measurement-update rule as an additional modelling postulate, we show in a concrete example that measurement of an incompatible observable can destroy sharpness, giving a finite model-theoretic form of back-action.  Two finite structures illustrate the distinction: the $12$-vertex partial Boolean algebra is KS-colourable, whereas a rigorously constructed $140$-vertex, four-dimensional partial Boolean algebra is not, as shown by a parity argument.  The underlying incidence structure of the latter is isomorphic to the Peres $24$-ray, $24$-basis configuration in $\mathbb R^4$, so the example is simultaneously combinatorial and Hilbert-space realizable.  Both structures may be assigned macroscopic interpretations, demonstrating that the formal phenomena arise from the organization of propositions and models rather than from nonclassical deduction itself.  Finally, we relate certain models to the probability-one propositions of pure states and density operators, while emphasizing that such certainty models do not determine the full quantum state.
\end{abstract}

\section{Introduction}

The propositional structure associated with a quantum system is generally non-Boolean.  In the standard Hilbert-space formulation, closed subspaces, or equivalently orthogonal projections, form compatible Boolean algebras only locally; globally, distributivity fails and not every pair of propositions belongs to a common Boolean context.  This observation motivated the development of quantum logic and remains central to the study of incompatibility and contextuality \cite{Varadarajan,KS}.

The failure of global Boolean structure does not, however, force a rejection of classical propositional deduction.  A classical theory has a Boolean Lindenbaum algebra, but the equivalence classes represented by its initial formul\ae\ need not form a Boolean subalgebra.  They form what we call the \emph{core} of the theory.  The central structural result of this paper is that every partial Boolean algebra can be realized, up to isomorphism, as the core of a consistent theory formulated entirely in classical propositional logic.  The non-Boolean structure is therefore located in the distinguished family of propositions and their compatibility relations, not in a modification of the classical truth tables or consequence relation.

More precisely, let $(\mathcal V,\Pi)$ be a partial Boolean algebra and let $g:U\twoheadrightarrow\mathcal V$ map the initial formul\ae\ of a propositional language onto its vertices.  From order and orthogonality in $(\mathcal V,\Pi)$ we define a theory $\mathcal T_g$.  We prove that $\mathcal T_g$ is consistent, characterize all of its models, and show that its core is isomorphic to $(\mathcal V,\Pi)$.  The model characterization is expressed in terms of \emph{clusters}: sets of initial formul\ae\ containing no orthogonal pair.  Every cluster generates an upward-closed model, and every model is an upward-closed cluster.  This gives a direct bridge between the combinatorics of the partial Boolean algebra and the ordinary two-valued semantics of classical propositional logic.

That bridge permits a model-theoretic treatment of contextuality and uncertainty.  For an atomic finite-dimensional partial Boolean algebra, a pre-frame is a set of primitive formul\ae\ mapped onto a frame of mutually orthogonal atoms.  We use the term \emph{KS-colourable} for the existence of a $0$--$1$ assignment selecting exactly one atom from every frame.  We prove that KS-colourability is equivalent to the existence of a model meeting every pre-frame in exactly one primitive formula.  Such models are called \emph{dispersion-free}.  Thus the obstruction usually associated with the Kochen--Specker phenomenon appears here as the nonexistence of a particular kind of classical model, even though the ambient deductive system remains classical.

We then introduce finite-spectrum observables and intrinsic formul\ae\ of the form $(P,E)$, interpreted as ``the value of $P$ lies in $E$.''  Relative to a model, the locally admissible atomic outcomes of $P$ determine a finite measure of uncertainty.  Sharpness means that only one atomic outcome is locally admissible.  This definition does not require a probability measure and should therefore be understood as a possibilistic or certainty-based notion of uncertainty.  A model may also leave an observable with \emph{no} locally admissible outcome; we isolate this degenerate case explicitly, since the count of admissible outcomes measures uncertainty only when at least one outcome survives.  We additionally study dynamics by adopting a natural measurement-update rule that replaces the current model by the model generated by the observed atom.  The update rule is a modelling postulate, not a consequence of propositional logic alone; once adopted, however, its consequences are fixed by the cluster structure.  In the $12$-vertex example, a sharp observable remains sharp under immediate repetition, while measurement of an incompatible observable can render it nonsharp.  This gives an explicit finite example of model-theoretic back-action.

Two finite partial Boolean algebras organize the main examples.  The first, $\mathcal V_{12}$, is three-dimensional and KS-colourable.  It supports dispersion-free models and illustrates uncertainty that can be removed by refining to a suitable global valuation.  The second, $\mathcal V_{140}$, is a four-dimensional partial Boolean algebra constructed from a $3\times3$ array of modules.  We prove that its quotient operations are well defined, count its $140$ vertices, identify its frames, and establish by a parity argument that it is not KS-colourable.  Consequently, the induced theory has no dispersion-free model, and indeed no model in which every finite-spectrum observable is sharp.  The underlying $24$-point, $24$-block incidence structure turns out to be isomorphic to the Peres $24$-ray configuration in $\mathbb R^{4}$ \cite{Peres1991}, and the nine loops used in the parity argument reproduce the $18$-vector, $9$-basis proof of Cabello, Estebaranz and Garc\'ia-Alcaine \cite{Cabello1996}; the contribution here is the model-theoretic reading of that obstruction, not the configuration itself.  The initial formul\ae\ in either example may be given macroscopic interpretations.  These interpretations are illustrative: they show that the formal pattern depends on the organization of propositions and contexts, but they do not by themselves establish a physical violation of a probabilistic or temporal inequality.

The paper also clarifies the relation between its two-valued models and quantum states.  A model associated with an atom records exactly the propositions having probability one in the corresponding pure state.  More generally, a model associated with a nonzero support element records the probability-one propositions common to all density operators with that support.  We therefore call these \emph{certainty models}; they encode the support or certainty content of a state, not its complete probability distribution.  Many models of $\mathcal T_g$ are neither pure-state nor support-state certainty models.

This work continues a programme initiated in \cite{Malhas1987}, developed through the representation of quantum-logical structures as the poset or core of a classical theory in \cite{malhas1992quantum,mal2}, and extended model-theoretically in \cite{Malhas1995}.  The new contributions here are the systematic cluster characterization of models, the model-theoretic equivalence with KS-colourability, the finite-spectrum formulation of uncertainty, the measurement-update analysis, the dispersion-free criterion, the theorem that a model making \emph{every} observable sharp forces KS-colourability, and the complete construction and parity analysis of $\mathcal V_{140}$.  The connection with contextuality as a resource for quantum information \cite{Cabello2014,Howard2014} is structural at this stage: the present framework represents contextual compatibility patterns, but it does not yet supply probabilities, a resource monotone, or a computational speed-up.

The paper is organized as follows.  Sections~\ref{sec:basics}--\ref{sec:pba} develop the logical and partial-Boolean-algebraic foundations.  Sections~\ref{sec:simple-pba}--\ref{sec:compare} construct $\mathcal T_g$, characterize its models, and identify its core.  Section~\ref{sec:finite-pba} introduces finite-dimensional structures, Section~\ref{pm} discusses certainty models, and Section~\ref{ksp} gives the model-theoretic KS criterion.  Sections~\ref{sec:obs}--\ref{fuzzy} introduce observables, intrinsic formul\ae, macroscopic interpretations, and uncertainty.  Measurement dynamics and dispersion-free models are treated in Sections~\ref{sec:dynamics} and~\ref{sec:hv}.  Section~\ref{sec:4dim} constructs $\mathcal V_{140}$ and proves the parity obstruction, and Section~\ref{sec:outlook} records the interpretive limits and open directions.  The appendices relate certainty models to quantum states, describe the optional single-module example, and provide a finite verification certificate for $\mathcal V_{140}$.

\section{Preliminaries, terminology, and notation}
\label{sec:basics}

Much of the material in this section can be found in elementary textbooks on mathematical logic; see, for example \cite{Enderton}.
It is included here primarily to fix notation and terminology, and to introduce the central concept of this paper, namely the concept of \emph{the core of a theory} (see Section~\ref{sec:core}).

Let $U$ be a \emph{nonempty} set of symbols, whose members will be referred to as \emph{initial formul\ae}. 
Intuitively, the initial formul\ae\ are the building blocks of a propositional language; they represent the \emph{simplest meaningful assertions} within a given field of study.

We define the \emph{set of all well-formed formul\ae}, or simply the \emph{set of all formul\ae}, to be the \emph{smallest} set $\Gamma$ of finite strings of symbols satisfying the following conditions:
\begin{enumerate}
\item $U \subseteq \Gamma$;
\item If $A \in \Gamma$, then $(\neg A) \in \Gamma$;
\item If $A \in \Gamma$ and $B \in \Gamma$, then $(A \vee B)$, $(A \wedge B)$, $(A \to B)$, and $(A \leftrightarrow B)$ belong to $\Gamma$.
\end{enumerate}

Thus, each member of $\Gamma$ is a \emph{finite} sequence of symbols constructed from a finite number of applications of logical connectives and parentheses to elements of $U$. 
It is \emph{not} assumed that the set $U$ of initial formul\ae\ itself is finite.
Parentheses may be omitted whenever no ambiguity arises.
Initial formul\ae\ will be denoted by lowercase English letters $p, q, r, \dots$.

For any subset $V \subseteq U$, the \emph{characteristic function} of $V$ is the function $\chi_V : U \to \{0,1\}$ defined by
\[
\chi_V(p) =
\begin{cases}
1, & p \in V, \\
0, & p \in U \setminus V .
\end{cases}
\]
In particular, $\chi_{\emptyset}(p) = 0$ for every $p \in U$, where $\emptyset$ denotes the empty set.

For \emph{every} $V \subseteq U$, including $V = \emptyset$, the function $\chi_V$ can be extended uniquely to a \emph{valuation}, that is, a function
\[
\overline{\chi_V} : \Gamma \to \{0,1\},
\]
whose value at any formula $\mathcal{A} \in \Gamma$ is determined by the usual truth tables of propositional logic.
A formula $\mathcal{A}$ is said to be a \emph{tautology} if
\[
\overline{\chi_V}(\mathcal{A}) = 1
\quad \text{for every } V \subseteq U .
\]
Using this definition, one verifies, for example, that $(p \vee \neg p)$, $p \to (q \to p)$, and $\neg p \to (p \to q)$ are tautologies.

In this context, the symbols $0$ and $1$ are referred to as \emph{truth values}, commonly read as \emph{false} and \emph{true}, respectively.
This terminology, however, does not by itself explain the \emph{meaning} of truth and falsity.
In empirical science, any interpretation of these notions must ultimately rely on meta-theoretic considerations and laboratory procedures
that relate formal language to observations of the external world.
We begin to explore this relationship in Section~\ref{fuzzy}.

\begin{definition}\label{def:model}
Let $\mathcal T\subseteq\Gamma$, and let $\mathfrak M\subseteq U$; the empty set is allowed.
The expression $\overline{\chi_{\mathfrak M}}(\mathcal T)=1$ means that
$\overline{\chi_{\mathfrak M}}(\mathcal A)=1$ for every $\mathcal A\in\mathcal T$.
In this case, we say that $\mathfrak M$ is a \emph{model} of $\mathcal T$.
\end{definition}
This definition is standard; see, for example, \cite[p.~4]{ChangKeisler}.
Allowing $\mathfrak M=\emptyset$ is essential because $\chi_{\emptyset}$ is one of the ordinary classical valuations introduced above.  For the induced theories $\mathcal T_g$ considered later, every model is nevertheless nonempty: surjectivity of $g$ supplies an initial formula $p$ with $g(p)=1$, and such a $p$ is an axiom.

A set $\mathcal T$ of formul\ae\ is said to be \emph{consistent} if it has a model.
A formula $\mathcal A$ is a \emph{logical consequence} of $\mathcal T$ if every model of $\mathcal T$ is also a model of $\mathcal A$.
The \emph{modus ponens} rule holds: if $A$ and $A \to B$ are logical consequences of $\mathcal T$, then so is $B$.
If $\mathcal T$ is consistent and contains all of its logical consequences, then $\mathcal T$ is called a \emph{theory}.
A theory is often presented in textbooks as the set of all logical consequences of a chosen set of \emph{axioms}, that is, a consistent set of formul\ae\ which, for reasons specific to the given field of study, are taken to be true and sufficiently comprehensive.

\subsection{The Lindenbaum algebra of a theory}
\label{subsec:lind}

Let $\mathcal T$ be a consistent theory.
Two formul\ae\ $\mathcal A$ and $\mathcal B$ are said to be \emph{equivalent modulo $\mathcal T$}, written $\mathcal A \equiv \mathcal B$, if
$(\mathcal A \leftrightarrow \mathcal B) \in \mathcal T$.
The equivalence class of a formula $\mathcal A$ is denoted by $[\mathcal A]$.
The set of all equivalence classes forms a Boolean algebra, denoted by $\mathbb{L}(\mathcal T)$, called the \emph{Lindenbaum algebra} of $\mathcal T$ \cite{Enderton,ChangKeisler}.

The partial order on $\mathbb{L}(\mathcal T)$ is defined by
\[
[\mathcal A] \leq [\mathcal B] \quad \text{if and only if} \quad (\mathcal A \to \mathcal B) \in \mathcal T.
\]
The least upper bound and greatest lower bound of $[\mathcal A]$ and $[\mathcal B]$ are given by
\[
[\mathcal A] \vee [\mathcal B] = [\mathcal A \vee \mathcal B],
\qquad
[\mathcal A] \wedge [\mathcal B] = [\mathcal A \wedge \mathcal B],
\]
respectively.
The zero element is
\[
0 = [\neg \mathcal A \wedge \mathcal A],
\]
and the unit element is
\[
1 = [\neg \mathcal A \vee \mathcal A],
\]
where $\mathcal A$ is any formula.
The Boolean complement of $[\mathcal A]$ is given by
\[
[\mathcal A]' = [\neg \mathcal A].
\]
\section{The core of a theory}
\label{sec:core}

We now introduce the central concept of this paper.
Let $\mathcal T$ be a consistent theory.
For an initial formula $p \in U$, the equivalence class of $p$ is defined by
\begin{equation}
[p] = \{ \mathcal A \in \Gamma : (\mathcal A \leftrightarrow p) \in \mathcal T \}.
\label{eq:core-class}
\end{equation}

The \emph{core} of the theory $\mathcal T$, previously called the \emph{poset} in earlier work \cite{malhas1992quantum}, is the subset
\[
\mathbb C(\mathcal T) \subseteq \mathbb L(\mathcal T)
\]
whose elements are precisely the equivalence classes of initial formul\ae.
Since $\mathbb C(\mathcal T)$ is a subset of the Lindenbaum algebra $\mathbb L(\mathcal T)$, it inherits a natural partial order.

\begin{definition}
\label{def:core-order}
For all $p,q \in U$, we define
\[
[p] \le [q] \quad \text{if and only if} \quad (p \to q) \in \mathcal T.
\]
\end{definition}

In general, very little can be said about the structure of the core of a theory beyond the fact that it is the subset of the Lindenbaum algebra determined by the initial formul\ae.
Any additional properties of $\mathbb C(\mathcal T)$ depend on the specific theory $\mathcal T$.

We shall show that:
\begin{enumerate}
\item[(i)] There exists a consistent theory $\mathcal T$ whose core is not merely a partially ordered set, but also a nontrivial \emph{partial Boolean algebra} (see the next section).
\item[(ii)] Every partial Boolean algebra—such as the partial Boolean algebra of subspaces of a finite-dimensional Hilbert space—is (isomorphic to) the core of a consistent theory formulated within classical propositional logic with its usual logical apparatus.
\end{enumerate}

\section{Partial Boolean algebras}
\label{sec:pba}

We use a component presentation of partial Boolean algebras.
\begin{definition}\label{def:pba}
A
\emph{partial Boolean algebra}, abbreviated $\pba$, is a pair
$(\mathcal V,\Pi)$, where $\mathcal V$ is a nonempty set of vertices and
$\Pi$ is a family of nonempty subsets of $\mathcal V$, called
\emph{components}.  Each component is equipped with a Boolean-algebra
structure, and the following conditions are required; compare
\cite[pp.~126--127]{Varadarajan}.

\begin{enumerate}
\item[\rm(P1)] Every vertex belongs to at least one component.  Two vertices
$x,y$ are called \emph{compatible}, written $x\simeq y$, when they belong to a
common component.

\item[\rm(P2)] Every $\mathcal B\in\Pi$ is a Boolean algebra, and all
components have the same zero and unit, denoted $\mathbf 0$ and $\mathbf 1$.
We assume throughout that $\mathbf 0\ne\mathbf 1$; a partial Boolean algebra
with $\mathbf 0=\mathbf 1$ is called \emph{degenerate} and is excluded.

\item[\rm(P3)] If $\mathcal B_1,\mathcal B_2\in\Pi$, then
$\mathcal B_1\cap\mathcal B_2$ is a component and is a Boolean subalgebra of
both $\mathcal B_1$ and $\mathcal B_2$.  The two Boolean structures agree on
the intersection.

\item[\rm(P4)] If $\mathcal B_1\subseteq\mathcal B_2$ are components, then
$\mathcal B_1$ is a Boolean subalgebra of $\mathcal B_2$.

\item[\rm(P5)] Every finite set of pairwise compatible vertices is contained
in a component.

\item[\rm(P6)] For $x,y\in\mathcal V$, set $x\le y$ when $x$ and $y$ lie in a
common component and $x\le y$ in that Boolean algebra.  This relation is
required to be a partial order on $\mathcal V$.
\end{enumerate}

\end{definition}

Condition~\rm(P3) makes the local Boolean operations independent of the
component in which they are computed.  Thus, whenever $x\simeq y$, the elements
$x\sqcap y$ and $x\sqcup y$ are well defined; and the Boolean complement of
$x$ has the same value in every component containing $x$.  We denote that
complement by $x^\perp$.

\begin{proposition}\label{prop:local-operations}
Let $x,y\in\mathcal V$ be compatible.  If $\mathcal B_1$ and $\mathcal B_2$
are components containing $x$ and $y$, then the meet, join, and complements
of $x$ and $y$ computed in $\mathcal B_1$ agree with those computed in
$\mathcal B_2$.
\end{proposition}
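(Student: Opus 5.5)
The plan is to pass through the intersection component supplied by (P3). Set $\mathcal C=\mathcal B_1\cap\mathcal B_2$. By (P3), $\mathcal C$ is itself a component and a Boolean subalgebra of both $\mathcal B_1$ and $\mathcal B_2$, and the Boolean structures of $\mathcal B_1$ and $\mathcal B_2$ agree on $\mathcal C$. Since $x,y\in\mathcal B_1$ and $x,y\in\mathcal B_2$, we have $x,y\in\mathcal C$. The whole argument consists of computing the operations in $\mathcal C$ and transporting the result to each of $\mathcal B_1$ and $\mathcal B_2$.

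First I would treat the meet. Because $\mathcal C$ is a Boolean subalgebra of $\mathcal B_1$, it is closed under the meet of $\mathcal B_1$, and the meet of $\mathcal C$ is the restriction of that operation. Hence $x\sqcap_{\mathcal B_1}y=x\sqcap_{\mathcal C}y$. The same reasoning with $\mathcal B_2$ gives $x\sqcap_{\mathcal B_2}y=x\sqcap_{\mathcal C}y$, so the two meets coincide. The identical argument works for joins.

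For complements I would argue in the same way. A Boolean subalgebra is closed under complementation and contains the common $\mathbf 0$ and $\mathbf 1$ guaranteed by (P2). Therefore the complement of $x$ in $\mathcal B_1$ lies in $\mathcal C$ and equals the complement of $x$ in $\mathcal C$. Likewise, the complement of $x$ in $\mathcal B_2$ equals the complement of $x$ in $\mathcal C$, and the same holds for $y$. As a check on uniqueness, one can also observe that in $\mathcal B_i$ the complement is the unique $z$ with $x\sqcap z=\mathbf 0$ and $x\sqcup z=\mathbf 1$, and the element computed in $\mathcal C$ satisfies these equations in $\mathcal B_i$.

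The main obstacle is interpretive rather than technical. One has to read "Boolean subalgebra" in (P3) as closure under the ambient operations, so that the operations of $\mathcal C$ are restrictions of those of $\mathcal B_i$, and not merely as a subset that happens to carry some Boolean structure of its own. The clause "the two Boolean structures agree on the intersection" secures exactly this reading. I would therefore state it explicitly before carrying out the two computations above.
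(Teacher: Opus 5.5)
Your proposal is correct and follows the paper's proof: both use (P3) to get that $\mathcal B_1\cap\mathcal B_2$ is a Boolean subalgebra of each component with agreeing induced structure, observe that $x,y$ lie in it, and conclude that the meet, join and complements coincide. Your explicit reading of ``Boolean subalgebra'' as closure under the ambient operations is the point the paper leaves implicit.
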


\begin{proof}
The intersection $\mathcal B_1\cap\mathcal B_2$ is, by~\rm(P3), a Boolean
subalgebra of each component with the same induced Boolean structure.  Since
$x,y\in\mathcal B_1\cap\mathcal B_2$, their meet, join, and complements belong
to the intersection and have the same values in both components.
\end{proof}

\begin{definition}\label{def:orth}
For $x,y\in\mathcal V$, we write $x\perp y$ and say that $x$ is
\emph{orthogonal} to $y$ when $x\simeq y$ and $x\le y^\perp$.
\end{definition}

By Proposition~\ref{prop:local-operations}, orthogonality does not depend on
the chosen common component.  It is symmetric: inside any component
containing $x$ and $y$, the Boolean inequality $x\le y^\perp$ is equivalent
to $y\le x^\perp$.

\begin{corollary}\label{cor:weak-oml}
If $a\le b$ in a partial Boolean algebra, then $a$ and $b$ are compatible,
$a^\perp\sqcap b$ is defined, and
\[
 b=a\sqcup(a^\perp\sqcap b).
\]
\end{corollary}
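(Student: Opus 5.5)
By (P6), the relation $a\le b$ on $\mathcal V$ holds only when $a$ and $b$ lie in a common component and $a\le b$ there. So I will fix such a component $\mathcal B$ with $a,b\in\mathcal B$ and $a\le b$ in its Boolean order. This immediately gives $a\simeq b$, which is the first claim.

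Next, since $\mathcal B$ is a Boolean algebra by (P2), it contains the complement $a^\perp$ of $a$. By Proposition~\ref{prop:local-operations}, this complement agrees with the complement computed in any other component containing $a$. Thus $a^\perp$ and $b$ both lie in $\mathcal B$, so $a^\perp\simeq b$. The meet $a^\perp\sqcap b$ is therefore defined, and again by Proposition~\ref{prop:local-operations} it does not depend on the component used. Moreover, $a$ and $a^\perp\sqcap b$ lie in $\mathcal B$, so $a\sqcup(a^\perp\sqcap b)$ is defined and may be computed inside $\mathcal B$.

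The identity then reduces to a computation in the single Boolean algebra $\mathcal B$. Using distributivity,
\[
a\sqcup(a^\perp\sqcap b)=(a\sqcup a^\perp)\sqcap(a\sqcup b)=\mathbf 1\sqcap(a\sqcup b)=a\sqcup b=b,
\]
where the last step uses $a\le b$ in $\mathcal B$.

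The only point needing care is well-definedness: I must make sure the order relation of (P6) really comes from one common component. This holds by the definition in (P6). Independence of the chosen component is then exactly what Proposition~\ref{prop:local-operations} supplies, so no real obstacle remains.
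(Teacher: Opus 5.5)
Your proposal is correct and follows essentially the same route as the paper's proof: use (P6) to place $a$ and $b$ in a common component, verify the identity there as an ordinary Boolean computation, and invoke Proposition~\ref{prop:local-operations} for independence of the component. Your explicit distributivity calculation simply spells out what the paper calls ``the ordinary Boolean identity.''
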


\begin{proof}
By~\rm(P6), $a$ and $b$ lie in a common Boolean component.  The displayed
identity is the ordinary Boolean identity computed in that component, and
Proposition~\ref{prop:local-operations} shows that its value is independent of
the component chosen.
\end{proof}

\begin{remark}\label{rem:global-order}
The transitivity of the global relation $\le$ is part of~\rm(P6); it does not
follow from pairwise compatibility alone.  Condition~\rm(P5) may be invoked
only after all pairs in the finite set have been shown to be compatible.
\end{remark}

\begin{figure}[ht]
\centering
\includegraphics[width=0.35\textwidth]{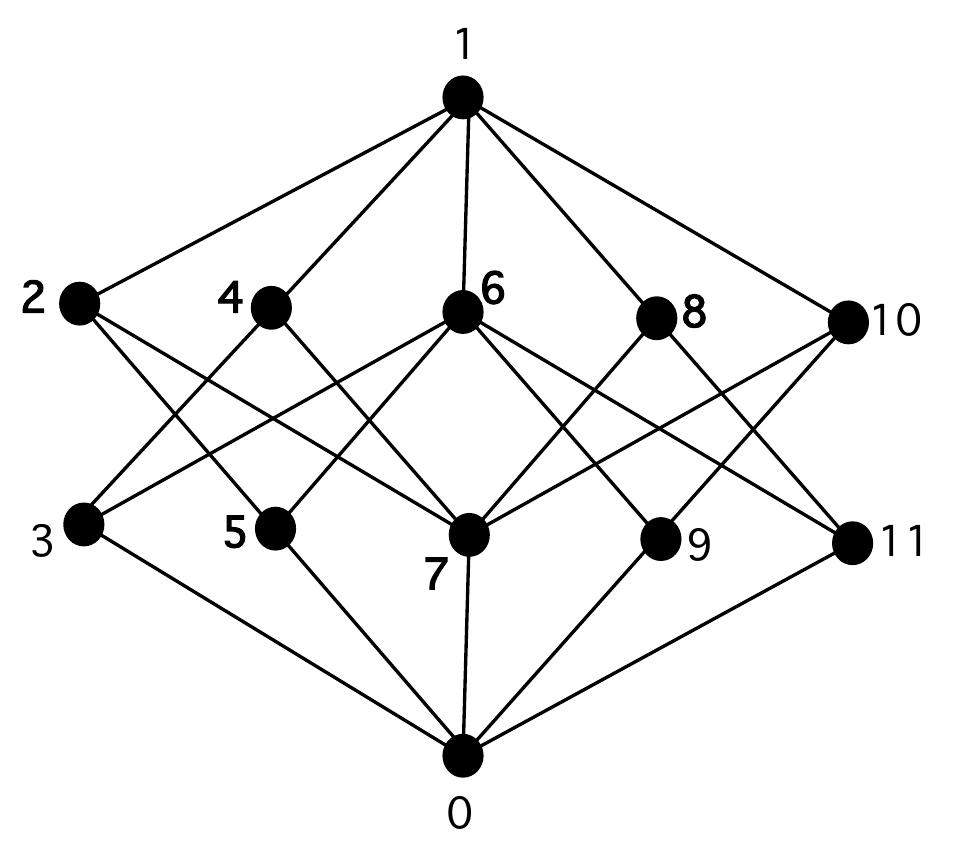}\hfill
\includegraphics[width=0.55\textwidth]{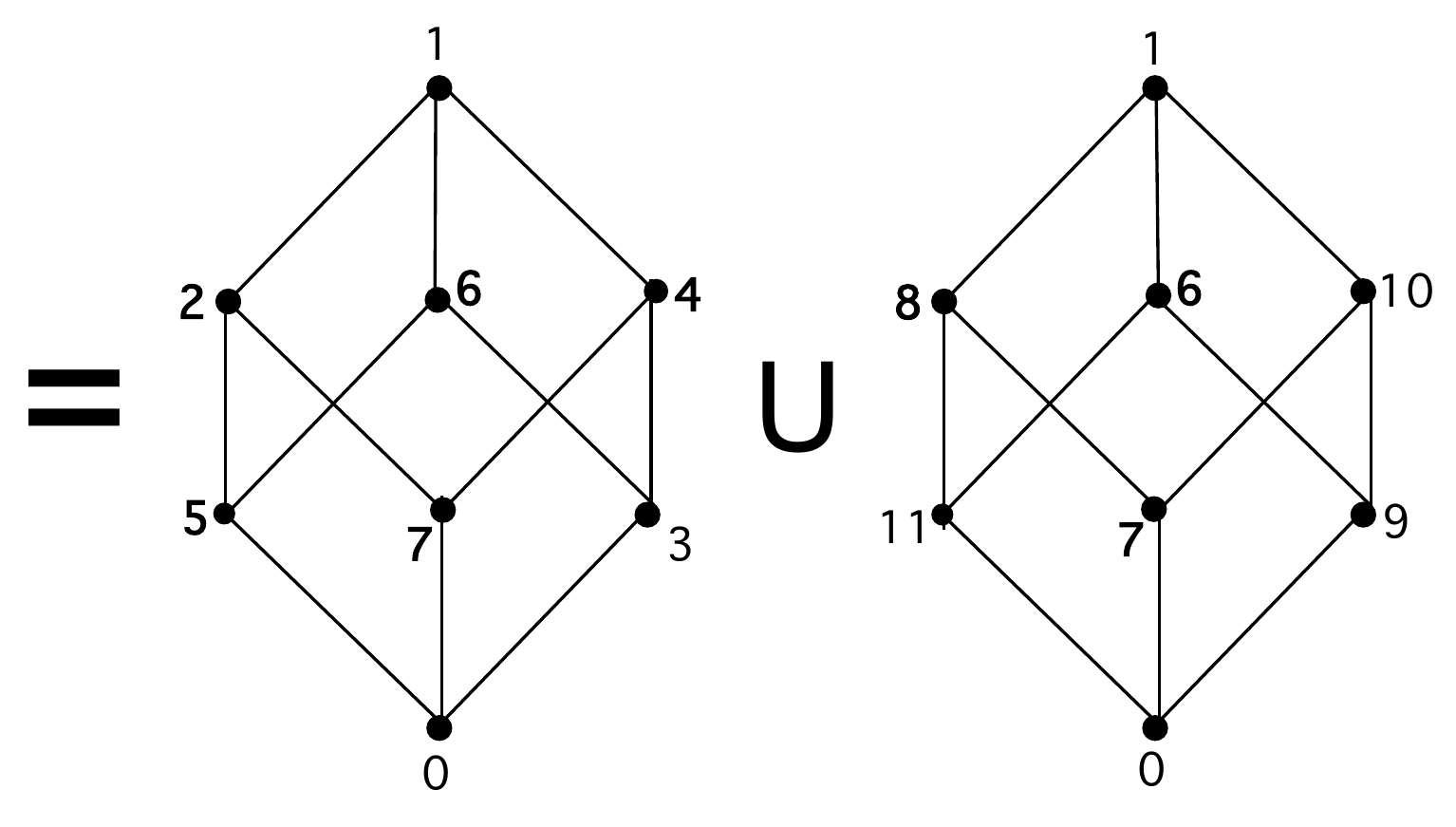}

\vspace{0.5em}

$\hspace{7em}\mathcal V_{12} \hfill \hspace{7em}\mathcal B_1 \hfill \mathcal B_2\hspace{4em}$

\caption{\small
The Hasse diagram $\mathcal V_{12}$, shown on the left, is an
orthocomplemented lattice obtained by gluing the Boolean algebras
$\mathcal B_1$ and $\mathcal B_2$ along their common Boolean subalgebra
$\{0,1,6,7\}$.  With the component family specified in
\eqref{eq:p12}, it is also a partial Boolean algebra.
}
\label{fig:v12-pba}
\end{figure}

\section{A simple nontrivial example of a partial Boolean algebra}
\label{sec:simple-pba}

For a finite-dimensional Hilbert space, the quantum propositions may be
represented by its linear subspaces.  Compatible propositions are those whose
orthogonal projections commute; every finite compatible family lies in a
Boolean algebra generated by a common orthogonal decomposition.  In this way
the subspaces form a partial Boolean algebra \cite{Varadarajan}.

A smaller finite example is the set $\mathcal V_{12}$ of the twelve vertices
in Figure~\ref{fig:v12-pba}.  The vertices are labelled
$0,1,\ldots,11$.  As a lattice, $\mathcal V_{12}$ is
orthocomplemented and nondistributive \cite{ZierlerSchlessinger}.
Orthocomplementation is given by
\begin{equation}\label{eq:oc}
1=0^\perp,\quad 3=2^\perp,\quad 5=4^\perp,\quad
7=6^\perp,\quad 9=8^\perp,\quad 11=10^\perp,
\qquad (n^\perp)^\perp=n.
\end{equation}

Let $\mathcal B_1$ and $\mathcal B_2$ be the two eight-element Boolean
algebras shown in Figure~\ref{winnie}.  Their intersection is the
four-element Boolean algebra $\{0,1,6,7\}$.  Define
\begin{equation}\label{eq:p12}
\Pi^\circ=
\bigl\{
\{0,1\},
\{0,6,7,1\},
\{0,4,5,1\},
\{0,2,3,1\},
\{0,8,9,1\},
\{0,10,11,1\},
\mathcal B_1,
\mathcal B_2
\bigr\}.
\end{equation}
These are exactly the Boolean subalgebras of the two maximal components.

\begin{proposition}\label{prop:v12-pba}
The pair $(\mathcal V_{12},\Pi^\circ)$ is a partial Boolean algebra.  Its
maximal components are $\mathcal B_1$ and $\mathcal B_2$, and two vertices
are compatible if and only if they lie together in one of these two maximal
components.
\end{proposition}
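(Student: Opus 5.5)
We verify (P1)--(P6) directly for the finite family $\Pi^\circ$, using the explicit description of $\mathcal B_1$ and $\mathcal B_2$ from the figure. The figure does not list them in the text, so we fix the labelling: $\mathcal B_1=\{0,1,2,3,4,5,6,7\}$ and $\mathcal B_2=\{0,1,6,7,8,9,10,11\}$. Here $\mathcal B_1$ has atoms $2,4,6$, and $\mathcal B_2$ has atoms $6,8,10$, with the complements given by \eqref{eq:oc}. We read the figure this way and check that the complement pairs of \eqref{eq:oc} lie in a single component and that $\mathcal B_1\cap\mathcal B_2=\{0,1,6,7\}$, as stated.

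First we list the Boolean subalgebras of an eight-element Boolean algebra with atoms $a,b,c$. They are $\{0,1\}$, the three four-element algebras $\{0,x,x^\perp,1\}$ for $x$ an atom, and the whole algebra. Applying this to $\mathcal B_1$ and $\mathcal B_2$ gives exactly the eight members of \eqref{eq:p12}, where $\{0,6,7,1\}$ is shared. This confirms the sentence preceding the proposition.

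(P1) holds because every vertex lies in $\mathcal B_1$ or $\mathcal B_2$. (P2) holds because every component contains $0$ and $1$ as bottom and top, and $0\ne1$. For (P3) and (P4), each member of $\Pi^\circ$ is a subalgebra of $\mathcal B_1$ or of $\mathcal B_2$, carrying the induced structure. An intersection of two subalgebras of the same $\mathcal B_i$ is again a subalgebra of $\mathcal B_i$, hence in $\Pi^\circ$ by the enumeration. If the two components lie in different maximal ones, their intersection lies inside $\mathcal B_1\cap\mathcal B_2=\{0,1,6,7\}$. The subalgebras of this four-element algebra are $\{0,1\}$ and $\{0,1,6,7\}$, both in $\Pi^\circ$. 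For the structures to agree, the Boolean operations on $\{0,1,6,7\}$ induced from $\mathcal B_1$ and from $\mathcal B_2$ must coincide. They do, because in both algebras $6^\perp=7$. Inclusions between components are inclusions of subalgebras, which gives (P4).

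The claim about compatibility follows because every component lies in $\mathcal B_1$ or $\mathcal B_2$. Hence $x\simeq y$ if and only if $\{x,y\}\subseteq\mathcal B_1$ or $\{x,y\}\subseteq\mathcal B_2$.

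The main point is (P5): a finite pairwise compatible set $S$ must lie in a single component. Pairwise compatibility does not by itself give this, since each pair might use a different $\mathcal B_i$. Suppose $S$ is contained in neither $\mathcal B_1$ nor $\mathcal B_2$. Then $S$ contains some $x\in\{2,3,4,5\}=\mathcal B_1\setminus\mathcal B_2$ and some $y\in\{8,9,10,11\}=\mathcal B_2\setminus\mathcal B_1$. Such an $x$ and $y$ share no maximal component, so they are incompatible, a contradiction. Hence $S$ lies in some $\mathcal B_i$, which belongs to $\Pi^\circ$.

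(P6) is then verified on the union of the two orders. Reflexivity and antisymmetry are inherited from the components. For transitivity, suppose $x\le y$ holds in $\mathcal B_i$ and $y\le z$ holds in $\mathcal B_j$ with $i\ne j$. Then $y\in\mathcal B_1\cap\mathcal B_2=\{0,1,6,7\}$, and we treat the cases separately. If $y=0$, then $x=0$ and $x\le z$. If $y=1$, then $z=1$ and $x\le z$. If $y=6$, then $x\in\{0,6\}$, because $6$ is an atom in both algebras; hence $x\in\mathcal B_j$, and $x\le z$ holds in $\mathcal B_j$. If $y=7$, then $x$ may lie outside $\mathcal B_j$. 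However, $z\ge7$ forces $z\in\{7,1\}$, because $7$ is a coatom in both algebras; hence $z\in\mathcal B_i$, and $x\le z$ holds in $\mathcal B_i$. This case analysis at $y=6,7$ is the only step that uses the specific geometry of the gluing, and it completes the verification.
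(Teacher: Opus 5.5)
Your plan is sound, and your treatment of (P1)--(P5) and of compatibility is essentially the paper's. However, you have read the Hasse diagram upside down, and as written this breaks the justification of your two transitivity cases in (P6).

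In the paper the atoms of $\mathcal B_1$ are $3,5,7$ and those of $\mathcal B_2$ are $7,9,11$. Section~\ref{sec:finite-pba} lists the global atoms as $3,5,7,9,11$ with frames $\{5,7,3\}$ and $\{11,7,9\}$. Consistently with this, $\{p_5\}^*=\{p_5,p_2,p_6,p_1\}$ and $g(p_9)\sqcup g(p_{11})=g(p_6)$. So $2,4,6$ and $6,8,10$ are coatoms, $7$ is the shared \emph{atom}, and $6$ is the shared \emph{coatom}. Both of your claims, ``$6$ is an atom in both algebras'' and ``$7$ is a coatom in both algebras'', are therefore false.

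Concretely, in your case $y=6$ one can have $x=3$ or $x=5$, since $6=3\sqcup 5$ in $\mathcal B_1$. Neither lies in $\mathcal B_2$, so ``$x\le z$ holds in $\mathcal B_j$'' fails. The repair is to exchange the two cases:
if $y=7$, then $x\in\{0,7\}\subseteq\mathcal B_j$ and $x\le z$ holds in $\mathcal B_j$;
if $y=6$, then $z\in\{6,1\}\subseteq\mathcal B_i$ and $x\le z$ holds in $\mathcal B_i$.

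No other step depends on which of $n,n^\perp$ is the atom. The subalgebra enumeration, (P3)--(P5) and the compatibility characterization use only the underlying sets and $\perp$, so after this swap the argument is complete. In (P3) you should also say why an intersection of components from different blocks is $\{0,1\}$ or $\{0,1,6,7\}$: it contains $0$ and $1$, is closed under $\perp$, and lies inside $\{0,1,6,7\}$.

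With that correction, your proof of (P6) takes a different route from the paper's. The paper asserts that the union of the two component orders coincides with the order of the orthocomplemented lattice in the Hasse diagram, so transitivity is inherited from a single lattice order. This is short, but it relies on the figure and implicitly on the fact that the gluing creates no new comparabilities.

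You instead verify transitivity of the union directly. Any chain $x\le y\le z$ that switches blocks must pass through $\{0,1,6,7\}$. Because the nontrivial shared elements are an atom and a coatom of both blocks, composing through them yields no comparability not already present in one block. Your version is self-contained, needs no global lattice, and isolates exactly the feature of the gluing that makes (P6) hold. That is precisely the content behind the paper's one-line appeal to the Hasse order.
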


\begin{proof}
Every vertex lies in $\mathcal B_1$ or $\mathcal B_2$, and both maximal
components share the same $0$, $1$, and orthocomplementation.  Their
intersection $\{0,1,6,7\}$ is a Boolean subalgebra of each.  The family
$\Pi^\circ$ contains this intersection and every Boolean subalgebra of either
maximal component, so~\rm(P1)--\rm(P4) hold.

If a finite family is pairwise compatible and contains a vertex from
$\{2,3,4,5\}$, then it contains no vertex from $\{8,9,10,11\}$ and is
contained in $\mathcal B_1$; the symmetric argument places any family meeting
$\{8,9,10,11\}$ in $\mathcal B_2$.  A family meeting neither set is contained
in their intersection.  Hence~\rm(P5) holds.

Finally, the order induced by either maximal component agrees with the order
shown in the Hasse diagram of $\mathcal V_{12}$.  It is therefore the
restriction of a single lattice order and is a partial order, proving~\rm(P6).
The characterization of compatibility and maximality is immediate.
\end{proof}

Consequently,
\[
x\not\simeq y
\quad\Longleftrightarrow\quad
\bigl(x\in\{2,3,4,5\}\ \text{and}\ y\in\{8,9,10,11\}\bigr)
\]
up to interchanging $x$ and $y$.

\begin{figure}[ht]
\centering
\includegraphics[width=0.56\textwidth]{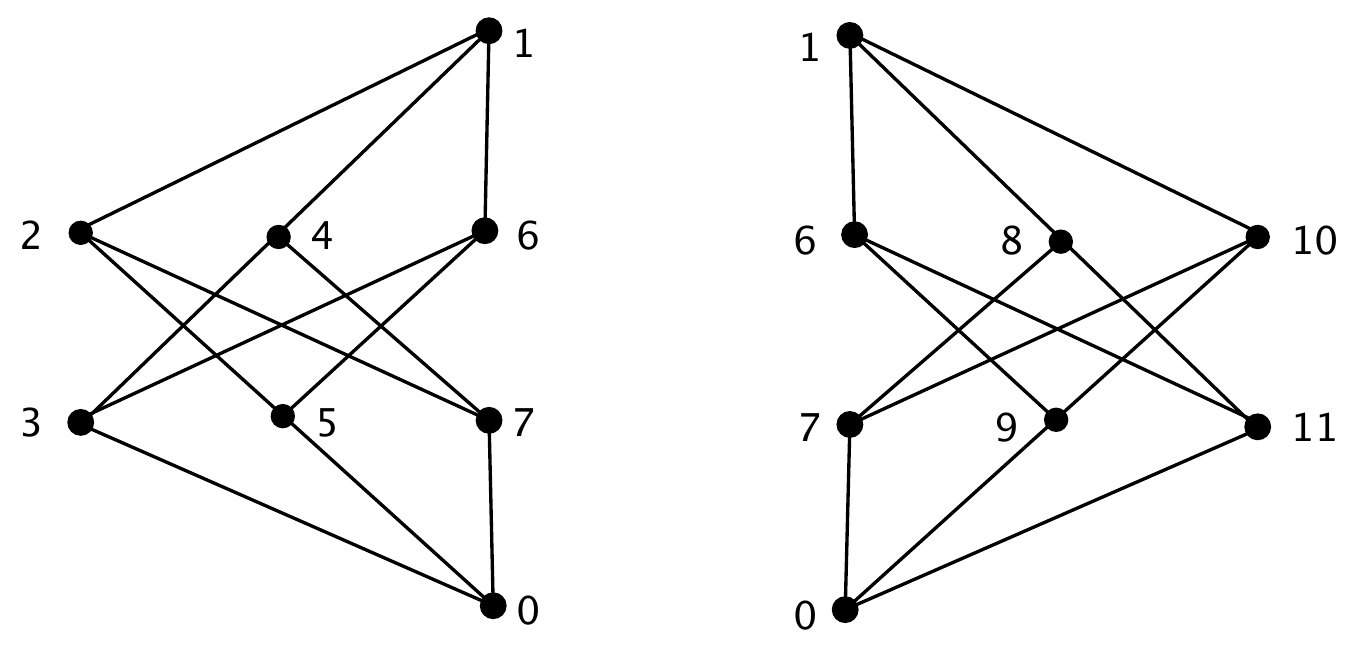}

\vspace{0.5em}

$\hspace{18em}\mathcal B_1 \hfill \mathcal B_2\hspace{18em}$

\caption{\small
The two maximal Boolean components of $\mathcal V_{12}$.  They are glued
along the common Boolean subalgebra $\{0,1,6,7\}$; orthocomplementation is
given by~\eqref{eq:oc}.
}
\label{winnie}
\end{figure}

\begin{definition}\label{def:pba-iso}
An isomorphism
$h:(\mathcal V_1,\Pi_1)\to(\mathcal V_2,\Pi_2)$ of partial Boolean algebras
is a bijection $h:\mathcal V_1\to\mathcal V_2$ such that
\[
\mathcal B\in\Pi_1
\quad\Longleftrightarrow\quad
h(\mathcal B)\in\Pi_2,
\]
and, for every component $\mathcal B\in\Pi_1$, the restriction
$h|_{\mathcal B}$ is a Boolean-algebra isomorphism onto $h(\mathcal B)$.
\end{definition}
\section{The theory induced by $g:U\twoheadrightarrow \mathcal V$}
\label{sec:gaxioms}

Let $(\mathcal V,\Pi)$ be an arbitrary $\pba$, and suppose that $U$, the set of initial formul\ae\ of a propositional language, is sufficiently large to admit a surjection
\[
g:U\twoheadrightarrow \mathcal V .
\]
Intuitively, the function $g$ \emph{sprinkles} the set $U$ over $\mathcal V$ so that every vertex
$x\in\mathcal V$ has at least one initial formula of $U$ assigned to it.

\begin{definition}
\label{def:compat-initial}
Initial formul\ae\ $p,q\in U$ are said to be \emph{compatible}, written $p\simeq q$, if and only if there exists a component of $(\mathcal V,\Pi)$ containing both $g(p)$ and $g(q)$.
Equivalently, $p\simeq q$ if and only if $g(p)\simeq g(q)$.
\end{definition}

Thus, initial formul\ae\ $p,q$ are compatible if and only if the vertices $g(p)$ and $g(q)$ are compatible.
In particular, if $g(p)=\mathbf 1$ and $g(q)=\mathbf 0$, then $p\simeq q$, since the vertices $\mathbf 0$ and $\mathbf 1$ belong to every component.
It also follows directly from the definitions that if $g(p)$ is orthogonal to $g(q)$, then $g(p)\le g(q)^\perp$ and hence $p\simeq q$;
see condition~\rm(P6) of Definition~\ref{def:pba} and Definition~\ref{def:orth}.

Let $\mathbf 0$ denote the lowest (zero) element of $\mathcal V$ and $\mathbf 1$ the highest (unit) element.
The fact that a $\pba$ is partially ordered and orthocomplemented allows us to specify a collection of formul\ae\ that serve as the axioms of a \emph{consistent} theory $\mathcal T_g$ whose core is a $\pba$ isomorphic to $(\mathcal V,\Pi)$.

\subsection{Axioms of $\mathcal T_g$}
\label{sec:mal2}

Let $(\mathcal V,\Pi)$ be a $\pba$, and let $U$ be the set of all initial formul\ae\ of a propositional language.
Assume that $U$ is large enough to admit a surjection $g:U\twoheadrightarrow\mathcal V$.
Each such surjection determines a consistent set of \emph{axioms} as follows.

A formula $\mathcal A$ is called an \emph{axiom induced by $g$}, or simply an \emph{axiom}, if and only if one of the following four conditions is satisfied \cite{mal2}:
\begin{enumerate}
\item[(g1)] $\mathcal A=p$, where $p\in U$ and $g(p)=1$;
\item[(g2)] $\mathcal A=\neg p$, where $p\in U$ and $g(p)=0$;
\item[(g3)] $\mathcal A=(p\to q)$, where $p,q\in U$ and $g(p)\le g(q)$;
\item[(g4)] $\mathcal A=(p\to\neg q)$, where $p,q\in U$ and $g(p)\perp g(q)$.
\end{enumerate}

The compatibility condition $p\simeq q$ in items (g3) and (g4) is superfluous, since both $g(p)\le g(q)$ and $g(p)\perp g(q)$ (that is, $g(p)\le g(q)^\perp$) already presuppose compatibility; see condition~\rm(P6) in Definition~\ref{def:pba} and Definition~\ref{def:orth}.

The theory generated by these axioms is called the \emph{theory induced by $g$} and is denoted by $\mathcal T_g$.
We refer to the surjection $g:U\to\mathcal V$ as the \emph{canonical function}, since it determines a consistent set of laws, axioms, or \emph{canons} governing a fragment of external reality.
We shall show below that $\mathcal T_g$ is consistent.

\subsection{Clusters}
\label{sec:clusters}

Elements $p,q\in U$ are said to be \emph{mutually orthogonal}, written $p\perp q$, if and only if
$g(p)\perp g(q)$, that is, $g(p)\simeq g(q)$ and $g(p)\le g(q)^\perp$.

\begin{definition}
\label{def:cluster}
A nonempty subset $C\subseteq U$ is called a \emph{cluster} if it contains no
pair of mutually orthogonal elements, where the two members of the pair are
\emph{not required to be distinct}.  Explicitly: if $p,q\in C$, then
$g(p)\not\perp g(q)$; in particular no $p\in C$ satisfies $g(p)\perp g(p)$,
that is, $g(p)\ne\mathbf 0$ for every $p\in C$.
\end{definition}

The parenthetical clause is not a technicality.  Reading ``pair'' as ``pair of
distinct elements'' would admit sets containing a formula mapped to
$\mathbf 0$, and Lemma~\ref{lem:zero} and Theorem~\ref{thmC} below would both
fail.  The same convention is used in Lemma~\ref{lem:notmodel}, where $p=q$ is
allowed.

For example, let $U_{12}$ be a collection of twelve initial formul\ae\
\[
U_{12}=\{p_0,p_1,p_2,p_3,p_4,p_5,p_6,p_7,p_8,p_9,p_{10},p_{11}\},
\]
and let $g:U_{12}\twoheadrightarrow\mathcal V_{12}$ be defined by
\begin{equation}
g(p_n)=n,
\label{spkl1}
\end{equation}
so that the initial formula $p_n$ is mapped to the vertex labeled $n\in\{0,1,2,\dots,11\}$.
With Figure~\ref{fig:v12-pba} in view, each of the following sets is a cluster:
\[
\{p_2,p_4,p_6\},\quad
\{p_5,p_{11}\},\quad
\{p_5,p_{11},p_8\},\quad
\{p_5,p_9,p_6\}.
\]
By contrast, the sets
\[
\{p_5,p_6,p_4\},\;
\{p_5,p_7,p_3\},\;
\{p_5,p_{11},p_{10}\},\;
\{p_0,p_5\}
\]
are not clusters.
\subsection{Models of $\mathcal T_g$}
\label{modA}

Recall (see Section~\ref{sec:basics}) that if $V \subseteq U$, then
$\chi_V : U \to \{0,1\}$ denotes the characteristic function of $V$ and
$\overline{\chi_V}$ its unique extension to a valuation on $\Gamma$.

\begin{lemma}
\label{lem:notmodel}
If $V \subseteq U$, $p,q \in V$, and $g(p) \perp g(q)$, then $V$ is \emph{not}
a model of\/ $\mathcal T_g$.
\end{lemma}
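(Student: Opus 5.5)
The plan is to exhibit a single axiom of $\mathcal T_g$ that the valuation $\overline{\chi_V}$ falsifies. Since $\mathcal T_g$ contains all its axioms, failing even one of them shows that $V$ is not a model.

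First, $g(p)\perp g(q)$ means by definition that $g(p)\simeq g(q)$ and $g(p)\le g(q)^\perp$. Hence, by clause (g4), the formula $(p\to\neg q)$ is an axiom induced by $g$, and so it belongs to $\mathcal T_g$.

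Next, we evaluate this axiom under $V$. Because $p,q\in V$, we have $\chi_V(p)=\chi_V(q)=1$. The truth table for negation gives $\overline{\chi_V}(\neg q)=0$, and the truth table for implication then gives $\overline{\chi_V}(p\to\neg q)=0$. Thus $\overline{\chi_V}(\mathcal T_g)\ne 1$, and $V$ is not a model in the sense of Definition~\ref{def:model}.

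The only point needing care is the case $p=q$, which the cluster convention explicitly permits. Here $g(p)\perp g(p)$ forces $g(p)\le g(p)^\perp$ inside a component, so $g(p)=g(p)\sqcap g(p)^\perp=\mathbf 0$. There is no real obstacle: the argument above still applies verbatim, since $(p\to\neg p)$ is itself a (g4)-axiom and is falsified by any $V$ containing $p$. Alternatively, one could invoke the (g2)-axiom $\neg p$, which $V$ also falsifies.
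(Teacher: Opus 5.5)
Your proof is correct and is essentially the paper's argument: both use the (g4)-axiom $p\to\neg q$ and observe that $\overline{\chi_V}$ falsifies it when $p,q\in V$ (the paper phrases this as a contradiction, while you evaluate the axiom directly). Your remark on the case $p=q$ is also correct, though the main argument already covers that case without separate treatment.
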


\begin{proof}
Assume $V \subseteq U$, $p,q \in V$, and $g(p) \perp g(q)$.
Then $p \to \neg q$ is an axiom of $\mathcal T_g$.
If $V$ were a model, we would have
$\overline{\chi_V}(p \to \neg q)=1$.
Since $p \in V$, $\overline{\chi_V}(p)=1$, hence
$\overline{\chi_V}(\neg q)=1$, so $\overline{\chi_V}(q)=0$.
But $q$ is an initial formula and $q \in V$, contradicting
$\chi_V(q)=1$.
\end{proof}

\begin{lemma}
\label{lem:zero}
If $C \subseteq U$ is a cluster and $g(p)=\mathbf 0$, then $p \notin C$.
\end{lemma}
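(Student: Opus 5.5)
The plan is to argue by contradiction directly from Definition~\ref{def:cluster}, using the convention that the two members of a forbidden orthogonal pair need not be distinct. Suppose $C$ is a cluster and $p\in C$ with $g(p)=\mathbf 0$. It suffices to show that $g(p)\perp g(p)$, i.e.\ that $\mathbf 0\perp\mathbf 0$ in the sense of Definition~\ref{def:orth}. Then the pair $(p,p)$ is a mutually orthogonal pair inside $C$, which the definition of a cluster forbids.

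To verify $\mathbf 0\perp\mathbf 0$, two conditions are needed. First, $\mathbf 0\simeq\mathbf 0$. By (P1) the vertex $\mathbf 0$ lies in some component, and by (P2) every component contains $\mathbf 0$. So $\mathbf 0$ shares a component with itself. Second, $\mathbf 0\le\mathbf 0^\perp$. Inside any component $\mathcal B$, the complement of the zero is the unit, so $\mathbf 0^\perp=\mathbf 1$ by Proposition~\ref{prop:local-operations}, and $\mathbf 0\le\mathbf 1$ holds in $\mathcal B$. By (P6) this local inequality is the global order. Hence $g(p)\perp g(p)$.

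The only delicate point is the reflexive case of the cluster condition. The definition excludes $g(p)\perp g(p)$ precisely because the pair is not required to consist of distinct elements, and this is the clause the argument relies on. No use of the axioms (g1)--(g4) or of Lemma~\ref{lem:notmodel} is needed; the statement is a purely combinatorial consequence of the cluster definition together with the basic facts about $\mathbf 0$ in a partial Boolean algebra.
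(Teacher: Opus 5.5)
Your proposal is correct and follows the paper's proof: it also derives $\mathbf 0\perp\mathbf 0$ and uses the reflexive clause of Definition~\ref{def:cluster} to exclude $p$ from $C$. Your version simply checks $\mathbf 0\simeq\mathbf 0$ and $\mathbf 0\le\mathbf 0^\perp=\mathbf 1$ explicitly from (P1), (P2) and (P6), whereas the paper just asserts that $\mathbf 0$ is orthogonal to every vertex.
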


\begin{proof}
The element $\mathbf 0$ is orthogonal to every vertex of $\mathcal V$, and in
particular $\mathbf 0\perp\mathbf 0$.  If $p$ belonged to $C$, then
$g(p)\perp g(p)$, which is excluded by Definition~\ref{def:cluster}.  Hence
$p\notin C$.
\end{proof}

We now show that a partial converse of Lemma~\ref{lem:notmodel} holds:
every cluster can be extended to a model of the axioms.

\begin{lemma}
\label{lem:clusterstar}
Let $C \subseteq U$ be a cluster and define
\begin{equation}
\label{eq:clusterstar}
C^* = \{\, q \in U : g(r)\le g(q) \text{ for some } r \in C \,\}.
\end{equation}
Then $C^*$ is also a cluster.
\end{lemma}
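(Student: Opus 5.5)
We need to show that $C^*$ is nonempty and contains no mutually orthogonal pair, where the two members are allowed to coincide. Nonemptiness is immediate: every $r\in C$ satisfies $g(r)\le g(r)$ by reflexivity of the partial order in~\rm(P6), so $C\subseteq C^*$ and $C^*\ne\emptyset$.

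For the main claim we argue by contradiction. Suppose $q_1,q_2\in C^*$ with $g(q_1)\perp g(q_2)$, possibly $q_1=q_2$. Choose $r_1,r_2\in C$ with $g(r_1)\le g(q_1)$ and $g(r_2)\le g(q_2)$. The aim is to show $g(r_1)\perp g(r_2)$, which contradicts the fact that $C$ is a cluster; when $r_1=r_2$ this is the case excluded by Definition~\ref{def:cluster}, so the diagonal case needs no separate treatment. Write $a_i=g(r_i)$ and $b_i=g(q_i)$. Then $b_1\le b_2^\perp$, so $a_1\le b_1\le b_2^\perp$, and transitivity in~\rm(P6) gives $a_1\le b_2^\perp$. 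Next, $a_2\le b_2$ should give $b_2^\perp\le a_2^\perp$. This holds because $a_2,b_2$ lie in a common component (Corollary~\ref{cor:weak-oml}), complementation is order-reversing in that Boolean algebra, and by Proposition~\ref{prop:local-operations} the complements computed there are the global $a_2^\perp$ and $b_2^\perp$. Applying transitivity again yields $a_1\le a_2^\perp$.

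It remains to pass from $a_1\le a_2^\perp$ to $a_1\perp a_2$. By Definition~\ref{def:orth} this also requires $a_1\simeq a_2$, and that is the step I expect to be the main obstacle. The relation $a_1\le a_2^\perp$ already makes $a_1$ and $a_2^\perp$ compatible by~\rm(P6), so they share a component $\mathcal B$. Since $\mathcal B$ is a Boolean algebra with the common complement (Proposition~\ref{prop:local-operations}), it contains $(a_2^\perp)^\perp$. I would then verify that $(a_2^\perp)^\perp=a_2$, which holds inside any component containing $a_2$ and hence globally. Thus $a_2\in\mathcal B$, so $a_1\simeq a_2$. Together with $a_1\le a_2^\perp$ this gives $a_1\perp a_2$, i.e.\ $g(r_1)\perp g(r_2)$ with $r_1,r_2\in C$, the desired contradiction.
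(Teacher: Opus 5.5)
Your proof is correct and follows the paper's argument exactly: both rest on the chain $g(r_1)\le g(p_0)\le g(q_0)^\perp\le g(r_2)^\perp$, which yields $g(r_1)\perp g(r_2)$ for some $r_1,r_2\in C$ and so contradicts the cluster property. The extra steps you supply are details the paper leaves implicit, and you handle them correctly: nonemptiness of $C^*$ via $C\subseteq C^*$, order-reversal of complementation inside a common component, and the compatibility $g(r_1)\simeq g(r_2)$ obtained from $(a_2^\perp)^\perp=a_2$.
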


\begin{proof}
Suppose $p_0,q_0 \in C^*$ and $g(p_0)\perp g(q_0)$.
Then there exist $r_1,r_2 \in C$ such that
$g(r_1)\le g(p_0)$ and $g(r_2)\le g(q_0)$.
Since $g(p_0)\perp g(q_0)$, we have $g(p_0)\le g(q_0)^\perp$, hence
\[
g(r_1)\le g(p_0)\le g(q_0)^\perp \le g(r_2)^\perp,
\]
so $g(r_1)\perp g(r_2)$, contradicting that $C$ is a cluster.
\end{proof}

We next show that $C^*$ is a model of the axioms.

\begin{theorem}
\label{thmC}
Let $C\subseteq U$ be a cluster and let $C^*$ be defined
by~\eqref{eq:clusterstar}.
Then $\overline{\chi_{C^*}}(\mathcal A)=1$ for every axiom $\mathcal A$ of
$\mathcal T_g$.
Moreover, clusters exist, so $\mathcal T_g$ is consistent.
\end{theorem}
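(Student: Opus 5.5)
The plan is to verify the four axiom schemes (g1)--(g4) one at a time against the valuation $\overline{\chi_{C^*}}$. Then, for the consistency claim, I will exhibit an explicit cluster. Throughout I use that $C^*$ is upward closed: if $q\in C^*$ and $g(q)\le g(q')$, then $q'\in C^*$. This holds by transitivity of the global order, which is part of~\rm(P6); I also use that $C\subseteq C^*$, by reflexivity. By Lemma~\ref{lem:clusterstar}, $C^*$ is itself a cluster.

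For (g1), let $g(p)=\mathbf 1$. Since $C$ is nonempty, pick any $r\in C$. The vertex $g(r)$ lies in some component by~\rm(P1), and there $g(r)\le\mathbf 1$. Hence $p\in C^*$ and $\overline{\chi_{C^*}}(p)=1$.

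For (g2), let $g(p)=\mathbf 0$. Lemma~\ref{lem:zero}, applied to the cluster $C^*$, gives $p\notin C^*$, so $\overline{\chi_{C^*}}(\neg p)=1$.

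For (g3), let $g(p)\le g(q)$. If $p\notin C^*$, the implication is trivially true. Otherwise there is $r\in C$ with $g(r)\le g(p)\le g(q)$. By transitivity in~\rm(P6), $q\in C^*$, so $p\to q$ evaluates to $1$.

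For (g4), let $g(p)\perp g(q)$. If $p$ and $q$ were both in $C^*$, then $C^*$ would contain an orthogonal pair, contradicting Lemma~\ref{lem:clusterstar}. Equivalently, Lemma~\ref{lem:notmodel}'s reasoning applies. So $p\notin C^*$ or $q\notin C^*$, and in either case $\overline{\chi_{C^*}}(p\to\neg q)=1$. This also covers the case $p=q$.

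For existence, surjectivity of $g$ gives some $p$ with $g(p)=\mathbf 1$. I claim $\{p\}$ is a cluster, which amounts to showing $\mathbf 1\not\perp\mathbf 1$. Suppose instead that $\mathbf 1\perp\mathbf 1$. Then $\mathbf 1\le\mathbf 1^\perp=\mathbf 0$ in a component, which together with $\mathbf 0\le \mathbf 1$ forces $\mathbf 0=\mathbf 1$. That contradicts the nondegeneracy assumption in~\rm(P2). So $\{p\}^*$ is a set on which every axiom is true. Since valuations respect modus ponens, it is a model of every logical consequence of the axioms, and hence of $\mathcal T_g$. Therefore $\mathcal T_g$ is consistent.

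The only delicate points are bookkeeping: (g2) relies on the non-distinct-pair convention in Definition~\ref{def:cluster}, and the existence step relies on nondegeneracy. I expect no genuine obstacle beyond citing transitivity from~\rm(P6) correctly.
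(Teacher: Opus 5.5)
Your proof is correct and follows the paper's argument: you check (g1)--(g4) one axiom at a time and then use the singleton cluster $\{p\}$ with $g(p)=\mathbf 1$. The only difference is that you dispatch (g2) and (g4) by citing Lemma~\ref{lem:clusterstar}, so that Lemma~\ref{lem:zero} and the cluster property apply directly to $C^*$, whereas the paper re-derives these facts from $C$ inline. One minor point: passing from the axioms to all of $\mathcal T_g$ needs no appeal to modus ponens, because logical consequence is defined semantically, so a model of the axioms is by definition a model of every consequence.
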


\begin{proof}
By Section~\ref{sec:mal2}, there are four types of axioms.

\begin{enumerate}
\item If $\mathcal A=p$ and $g(p)=1$, then $g(r)\le g(p)$ for all $r\in C$,
so $p\in C^*$ and $\overline{\chi_{C^*}}(p)=1$.

\item If $\mathcal A=\neg p$ and $g(p)=0$, then no $r\in C$ satisfies
$g(r)\le g(p)=\mathbf 0$, since by Lemma~\ref{lem:zero} no element of the cluster
$C$ is mapped to $\mathbf 0$.
Hence $p\notin C^*$ and $\overline{\chi_{C^*}}(\neg p)=1$.

\item If $\mathcal A=p\to q$ with $g(p)\le g(q)$ and
$\overline{\chi_{C^*}}(p\to q)=0$, then
$\overline{\chi_{C^*}}(p)=1$ and $\overline{\chi_{C^*}}(q)=0$.
Thus $p\in C^*$, so $g(r)\le g(p)\le g(q)$ for some $r\in C$,
implying $q\in C^*$, a contradiction.

\item If $\mathcal A=p\to\neg q$ with $g(p)\perp g(q)$ and
$\overline{\chi_{C^*}}(p\to\neg q)=0$, then
$p,q\in C^*$, so there exist $r_1,r_2\in C$ with
$g(r_1)\le g(p)$ and $g(r_2)\le g(q)$.
This implies $g(r_1)\perp g(r_2)$, contradicting that $C$ is a cluster.
\end{enumerate}

Finally, a cluster exists.  Since the partial Boolean algebra is nondegenerate
by~\rm(P2), we have $\mathbf 1\ne\mathbf 0$, so $\mathbf 1\not\perp\mathbf 1$;
choosing $p$ with $g(p)=\mathbf 1$, which is possible because $g$ is onto, the
singleton $\{p\}$ is a cluster.  Hence $\{p\}^*$ is a model and
$\mathcal T_g$ is consistent.
\end{proof}
We call $C^*$ the \emph{model based on the cluster $C$}.
Nondegeneracy is genuinely needed here: if $\mathbf 0=\mathbf 1$, then a single
initial formula $p$ satisfies both~(g1) and~(g2), so $p$ and $\neg p$ are both
axioms and $\mathcal T_g$ is inconsistent.

\begin{theorem}
\label{thm:models-are-clusters}
If $\mathfrak M$ is a model of $\mathcal T_g$, then $\mathfrak M$ is a cluster
and satisfies $\mathfrak M=\mathfrak M^*$.
\end{theorem}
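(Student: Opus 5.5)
The proof splits into three parts: $\mathfrak M$ is nonempty, $\mathfrak M$ contains no orthogonal pair, and $\mathfrak M=\mathfrak M^*$. Each part reduces to one of the four axiom types (g1)--(g4).

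\emph{Nonemptiness.} Since $g$ is onto, there is some $p\in U$ with $g(p)=\mathbf 1$. The formula $p$ is then an axiom of type (g1). Because $\mathfrak M$ is a model of $\mathcal T_g$, we get $\overline{\chi_{\mathfrak M}}(p)=1$, so $p\in\mathfrak M$ and $\mathfrak M\neq\emptyset$.

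\emph{No orthogonal pair.} Suppose $p,q\in\mathfrak M$ with $g(p)\perp g(q)$, where $p=q$ is allowed. Lemma~\ref{lem:notmodel} covers exactly this situation, including the case $p=q$: the axiom $p\to\neg q$ is then falsified. So this case is impossible. In particular, the special case $g(p)=\mathbf 0$ is also excluded, since $\mathbf 0\perp\mathbf 0$; alternatively, axiom (g2) excludes it directly. Together with nonemptiness, this shows $\mathfrak M$ is a cluster in the sense of Definition~\ref{def:cluster}.

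\emph{The identity $\mathfrak M=\mathfrak M^*$.} For the inclusion $\mathfrak M\subseteq\mathfrak M^*$, take $q\in\mathfrak M$. By reflexivity of the order in (P6), $g(q)\le g(q)$, so $q$ itself is a witness $r$ in \eqref{eq:clusterstar}, and $q\in\mathfrak M^*$. For the reverse inclusion, take $q\in\mathfrak M^*$. Then there is $r\in\mathfrak M$ with $g(r)\le g(q)$, so $(r\to q)$ is an axiom of type (g3). Since $\overline{\chi_{\mathfrak M}}(r)=1$ and $\overline{\chi_{\mathfrak M}}(r\to q)=1$, the truth table for $\to$ forces $\overline{\chi_{\mathfrak M}}(q)=1$. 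As $q$ is an initial formula, this means $q\in\mathfrak M$.

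I do not expect a genuine obstacle here, since each step is a single application of one axiom type. The only point needing care is the convention that a cluster is nonempty and that the "pair" in Definition~\ref{def:cluster} may have $p=q$. Nonemptiness is exactly where surjectivity of $g$ and axiom (g1) are needed, as already noted after Definition~\ref{def:model}.
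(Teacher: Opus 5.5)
Your proposal is correct and follows the paper's proof essentially verbatim: nonemptiness from a (g1) axiom via surjectivity of $g$, the cluster property from Lemma~\ref{lem:notmodel} with $p=q$ allowed, $\mathfrak M^*\subseteq\mathfrak M$ from the (g3) axioms, and $\mathfrak M\subseteq\mathfrak M^*$ from reflexivity of $\le$. You also correctly establish that $\mathfrak M$ is a cluster before using $\mathfrak M^*$, which is only defined for clusters.
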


\begin{proof}
By Lemma~\ref{lem:notmodel}, $\mathfrak M$ is a cluster; it is nonempty because
every $p$ with $g(p)=\mathbf 1$ is an axiom.
If $p\in\mathfrak M$ and $g(p)\le g(q)$, then $p\to q$ is an axiom of
$\mathcal T_g$.  Because $\mathfrak M$ is a model,
$\overline{\chi_{\mathfrak M}}(p\to q)=1$, and
$\chi_{\mathfrak M}(p)=1$; hence
$\chi_{\mathfrak M}(q)=1$, that is, $q\in\mathfrak M$.
Therefore $\mathfrak M^*\subseteq\mathfrak M$.  The reverse inclusion is
immediate from reflexivity of $\le$: for every $q\in\mathfrak M$,
$g(q)\le g(q)$, so $q\in\mathfrak M^*$.  Thus
$\mathfrak M=\mathfrak M^*$.
\end{proof}

\begin{corollary}
\label{negifforth}
For all $p,q\in U$,
\[
p\to\neg q \in \mathcal T_g \quad \text{if and only if} \quad g(p)\perp g(q).
\]
\end{corollary}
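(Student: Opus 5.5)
The two directions are handled separately, and both run through the cluster characterization of models.

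For the direction $g(p)\perp g(q)\Rightarrow (p\to\neg q)\in\mathcal T_g$, nothing beyond the definitions is needed. The formula $p\to\neg q$ is then an axiom of type (g4), and $\mathcal T_g$ contains its axioms. (Equivalently, every model satisfies it, so it is a logical consequence.)

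For the converse, I argue by contraposition. Suppose $g(p)\not\perp g(q)$. I want a model $\mathfrak M$ of $\mathcal T_g$ containing both $p$ and $q$. Then $\overline{\chi_{\mathfrak M}}(p\to\neg q)=0$, so $p\to\neg q$ is not a logical consequence of the axioms and hence not in $\mathcal T_g$. The natural candidate is $C=\{p,q\}$, with $\mathfrak M=C^*$. By Theorem~\ref{thmC}, $C^*$ satisfies every axiom and hence every logical consequence. By reflexivity of $\le$, $C\subseteq C^*$.

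It therefore remains to check that $C$ is a cluster, and this is the main obstacle, because Definition~\ref{def:cluster} also forbids the degenerate pairs $(p,p)$ and $(q,q)$. I must rule out $g(p)=\mathbf 0$ and $g(q)=\mathbf 0$, and the hypothesis $g(p)\not\perp g(q)$ gives exactly this. If $g(p)=\mathbf 0$, then $\mathbf 0$ lies in every component together with $g(q)$, and $\mathbf 0\le g(q)^\perp$, so $g(p)\perp g(q)$, a contradiction. The case $g(q)=\mathbf 0$ follows by symmetry of $\perp$. Hence $g(p)\perp g(p)$ fails, since that would force $g(p)\le g(p)^\perp$ and thus $g(p)=\mathbf 0$ inside a Boolean component. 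Likewise $g(q)\perp g(q)$ fails. The mixed pair is excluded by hypothesis, in either order by symmetry. So $C$ is a cluster and $C^*$ is the required model.
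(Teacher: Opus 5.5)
Your proof is correct and follows the paper's argument essentially verbatim. The forward direction is the (g4) axiom, and the converse uses the cluster $\{p,q\}$ and its model $\{p,q\}^*$ from Theorem~\ref{thmC} to falsify $p\to\neg q$. Your explicit check that $g(p)\perp g(p)$ forces $g(p)=\mathbf 0$ spells out a step the paper leaves implicit when it passes from $g(p),g(q)\ne\mathbf 0$ to ``$C$ contains no orthogonal pair.''
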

\begin{proof}
If $g(p)\perp g(q)$, then $p\to\neg q$ is an axiom and hence belongs to
$\mathcal T_g$.

Conversely, suppose $p\to\neg q\in\mathcal T_g$ but $g(p)\not\perp g(q)$.  In
particular $g(p)\ne\mathbf 0$ and $g(q)\ne\mathbf 0$, since $\mathbf 0$ is
orthogonal to everything.  Hence $C=\{p,q\}$ contains no orthogonal pair, so it
is a cluster, and by Theorem~\ref{thmC} the set $C^*$ is a model of
$\mathcal T_g$.  Both $p$ and $q$ lie in $C^*$, so
$\chi_{C^*}(p)=\chi_{C^*}(q)=1$ and therefore
$\overline{\chi_{C^*}}(p\to\neg q)=0$.  This contradicts
$p\to\neg q\in\mathcal T_g$.
\end{proof}
\begin{theorem}
\label{thm:equiv}
For all $p,q\in U$,
\[
p\to q \in \mathcal T_g \quad \text{if and only if} \quad g(p)\le g(q).
\]
Consequently, $[p]=[q]$ if and only if $g(p)=g(q)$.
\end{theorem}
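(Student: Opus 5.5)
The plan mirrors the proof of Corollary~\ref{negifforth}. The forward direction is immediate: if $g(p)\le g(q)$, then $p\to q$ is an axiom of type~(g3), so it lies in $\mathcal T_g$. For the converse, I assume $p\to q\in\mathcal T_g$ but $g(p)\not\le g(q)$, and I will produce a model of $\mathcal T_g$ containing $p$ but not $q$. That model falsifies $p\to q$, which is the desired contradiction.

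The natural candidate is the model based on the singleton cluster, $\{p\}^*=\{r\in U: g(p)\le g(r)\}$. First I must check that $\{p\}$ is a cluster, i.e.\ $g(p)\ne\mathbf 0$. This holds because if $g(p)=\mathbf 0$, then $\mathbf 0\le g(q)$ in any component containing $g(q)$ (every component contains $\mathbf 0$), so $g(p)\le g(q)$, against the assumption. By Definition~\ref{def:cluster} a one-element set $\{p\}$ is a cluster exactly when $g(p)\not\perp g(p)$, i.e.\ $g(p)\le g(p)^\perp$ fails, which in a Boolean component means $g(p)\ne\mathbf 0$. Theorem~\ref{thmC} then says $\{p\}^*$ satisfies every axiom. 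Since valuations preserve logical consequence, $\{p\}^*$ is a model of the generated theory $\mathcal T_g$.

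Next, $p\in\{p\}^*$ by reflexivity of $\le$ from~(P6), and $q\notin\{p\}^*$ precisely because $g(p)\not\le g(q)$. Hence $\overline{\chi_{\{p\}^*}}(p\to q)=0$, the contradiction. The only step requiring care is the small bookkeeping point that $\mathbf 0$ is the global least element, which justifies ruling out $g(p)=\mathbf 0$. I regard this as the main (mild) obstacle; the rest is routine.

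For the consequence, $[p]=[q]$ means $(p\leftrightarrow q)\in\mathcal T_g$. Since $\mathcal T_g$ is closed under logical consequence, this is equivalent to both $p\to q$ and $q\to p$ lying in $\mathcal T_g$. By the first part, that is equivalent to $g(p)\le g(q)$ and $g(q)\le g(p)$, which by antisymmetry of the partial order in~(P6) is equivalent to $g(p)=g(q)$.
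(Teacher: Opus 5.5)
Your proposal is correct and follows essentially the same route as the paper. The forward direction uses axiom (g3); the converse notes that $g(p)\not\le g(q)$ forces $g(p)\ne\mathbf 0$, so $\{p\}^*$ is a model by Theorem~\ref{thmC} that contains $p$ but not $q$; and the consequence follows from antisymmetry. Your explicit remark that satisfying every axiom implies satisfying all of $\mathcal T_g$ fills in a step the paper leaves implicit.
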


\begin{proof}
If $g(p)\le g(q)$, then $p\to q$ is an axiom of $\mathcal T_g$, so
$p\to q\in\mathcal T_g$.

Conversely, suppose $p\to q\in\mathcal T_g$ and $g(p)\not\le g(q)$.
The latter inequality implies $g(p)\ne\mathbf 0$, because
$\mathbf 0\le x$ for every $x\in\mathcal V$.  Hence $\{p\}$ is a cluster,
and Theorem~\ref{thmC} shows that
\[
\{p\}^*=\{r\in U:g(p)\le g(r)\}
\]
is a model of $\mathcal T_g$.  We have $p\in\{p\}^*$ but
$q\notin\{p\}^*$, so this model assigns value $0$ to $p\to q$, contradicting
$p\to q\in\mathcal T_g$.  Therefore $g(p)\le g(q)$.

Finally, $[p]=[q]$ if and only if both $p\to q$ and $q\to p$ belong to
$\mathcal T_g$.  By the first part, this is equivalent to
$g(p)\le g(q)$ and $g(q)\le g(p)$, hence to $g(p)=g(q)$.
\end{proof}

\begin{corollary}
\label{cor:mp}
If $C\subseteq U$ is a cluster, then
\begin{equation}
\label{qluster}
C^*=\{\,q\in U:(r\to q)\in\mathcal T_g
          \text{ for some }r\in C\,\}.
\end{equation}
Thus $C^*$ is precisely the set of initial formul\ae\ obtainable from members
of $C$ by implications belonging to $\mathcal T_g$.
\end{corollary}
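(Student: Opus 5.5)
The claim is a direct translation of the definition \eqref{eq:clusterstar} of $C^*$ through Theorem~\ref{thm:equiv}, so the plan is a two-inclusion set equality with no new construction. Write $D=\{q\in U:(r\to q)\in\mathcal T_g \text{ for some } r\in C\}$ for the right-hand side of \eqref{qluster}; I will show $C^*\subseteq D$ and $D\subseteq C^*$.

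For $C^*\subseteq D$, take $q\in C^*$. Then there is some $r\in C$ with $g(r)\le g(q)$. By clause~(g3), $r\to q$ is an axiom induced by $g$. Hence $r\to q$ lies in $\mathcal T_g$, and so $q\in D$. This uses only the forward (trivial) half of Theorem~\ref{thm:equiv}.

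For $D\subseteq C^*$, take $q\in D$, so $(r\to q)\in\mathcal T_g$ for some $r\in C$. The converse direction of Theorem~\ref{thm:equiv} gives $g(r)\le g(q)$, and therefore $q\in C^*$ by definition.

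The only substantive ingredient is this converse direction. It rests on the existence of the countermodel $\{r\}^*$ from Theorem~\ref{thmC}, which requires $g(r)\ne\mathbf 0$. That condition is guaranteed both inside the proof of Theorem~\ref{thm:equiv} and directly by Lemma~\ref{lem:zero}, since $r$ belongs to a cluster. So no real obstacle remains. I will note that the cluster hypothesis is used only to make $C^*$ meaningful as a model via Theorem~\ref{thmC}; the set identity \eqref{qluster} itself holds for any $C\subseteq U$.

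For the closing sentence of the statement, I would add a short remark. Since every model $\mathfrak M$ satisfies $\mathfrak M=\mathfrak M^*$ (Theorem~\ref{thm:models-are-clusters}), the model $C^*$ is exactly the closure of $C$ under one-step implications in $\mathcal T_g$. No iteration is needed, because transitivity of $\le$ from (P6) makes the implication relation among initial formul\ae\ already transitive.
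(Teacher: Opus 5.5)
Your proof is correct and follows the paper's argument: both unfold the definition \eqref{eq:clusterstar} of $C^*$ and use Theorem~\ref{thm:equiv} to replace $g(r)\le g(q)$ with $(r\to q)\in\mathcal T_g$. Your split into two inclusions, and your observation that the set identity holds for any $C\subseteq U$, are accurate elaborations of the same one-line proof.
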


\begin{proof}
By definition, $q\in C^*$ if and only if $g(r)\le g(q)$ for some $r\in C$.
Theorem~\ref{thm:equiv} identifies this condition with
$(r\to q)\in\mathcal T_g$.
\end{proof}

\begin{corollary}
\label{cor:star-properties}
Let $C,C_1,C_2\subseteq U$ be clusters.
\begin{enumerate}
\item $C^*$ is a model and $(C^*)^*=C^*$.
\item If $C_1\subseteq C_2$, then $C_1^*\subseteq C_2^*$.
\item The inclusion in item~(2) is strict if and only if there exists
      $q\in C_2$ such that $q\notin C_1^*$.
\end{enumerate}
\end{corollary}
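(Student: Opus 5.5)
The three items follow directly from Theorem~\ref{thmC}, Theorem~\ref{thm:models-are-clusters} and the definition~\eqref{eq:clusterstar}; we take them in order.

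For item~(1), Theorem~\ref{thmC} shows that $\overline{\chi_{C^*}}$ satisfies every axiom of $\mathcal T_g$. Every formula of $\mathcal T_g$ is a logical consequence of the axioms, so $C^*$ is a model of $\mathcal T_g$. Theorem~\ref{thm:models-are-clusters} applied to the model $\mathfrak M=C^*$ then says that $C^*$ is a cluster and that $(C^*)^*=C^*$. The same equality can also be checked directly from transitivity in~(P6): if $g(r)\le g(p)$ and $g(p)\le g(q)$, then $g(r)\le g(q)$.

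For item~(2), take $q\in C_1^*$. Then there is some $r\in C_1$ with $g(r)\le g(q)$. Since $r\in C_2$ as well, we get $q\in C_2^*$, so $C_1^*\subseteq C_2^*$.

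For item~(3), note that reflexivity of $\le$ gives $C_2\subseteq C_2^*$.
\begin{itemize}
\item Suppose some $q\in C_2$ has $q\notin C_1^*$. Then $q\in C_2^*\setminus C_1^*$, so the inclusion is strict.
\item Conversely, suppose every $q\in C_2$ lies in $C_1^*$, so $C_2\subseteq C_1^*$. Take any $s\in C_2^*$, with witness $q\in C_2$ such that $g(q)\le g(s)$. Since $q\in C_1^*$, there is $r\in C_1$ with $g(r)\le g(q)$. Transitivity from~(P6) gives $g(r)\le g(s)$, so $s\in C_1^*$. Hence $C_2^*\subseteq C_1^*$, and with item~(2) the two sets are equal.
\end{itemize}

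None of these steps is a real obstacle. The only point that needs care is that the global order on $\mathcal V$ is transitive; this is assumed in~(P6) and, as Remark~\ref{rem:global-order} stresses, does not follow from pairwise compatibility alone.
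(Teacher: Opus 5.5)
Your proof is correct and follows essentially the paper's route. Item (1) comes from Theorems~\ref{thmC} and~\ref{thm:models-are-clusters}, item (2) from the definition of $C^*$, and item (3) from reflexivity in one direction and, in the other, a transitivity argument that simply unfolds the paper's appeal to monotonicity and idempotence, $C_2^*\subseteq(C_1^*)^*=C_1^*$; your explicit remark that a model of the axioms is a model of every logical consequence spells out a step the paper leaves implicit.
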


\begin{proof}
The first assertion follows from Theorems~\ref{thmC} and
\ref{thm:models-are-clusters}.  The second follows directly from the definition
of $C^*$.  For the third, if some $q\in C_2\setminus C_1^*$, then
$q\in C_2^*$ by reflexivity, so the inclusion is strict.  Conversely, if
$C_2\subseteq C_1^*$, then monotonicity and idempotence give
$C_2^*\subseteq(C_1^*)^*=C_1^*$; together with item~(2), this yields equality.
\end{proof}
\begin{theorem}
\label{thm:corepba}
The core $\mathbb C(\mathcal T_g)$ of\/ $\mathcal T_g$ carries a unique partial
Boolean algebra structure with respect to which the map
$f([p])=g(p)$ is an isomorphism onto $(\mathcal V,\Pi)$.
\end{theorem}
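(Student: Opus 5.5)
The plan is to first show that $f$ is a well-defined bijection, and then transport the structure along it.

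\emph{Well-definedness and bijectivity.} By Theorem~\ref{thm:equiv}, $[p]=[q]$ holds exactly when $g(p)=g(q)$. The forward direction of this equivalence makes $f([p])=g(p)$ independent of the representative. The backward direction makes $f$ injective. Surjectivity of $g$ makes $f$ surjective. So $f:\mathbb C(\mathcal T_g)\to\mathcal V$ is a bijection.

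\emph{Order.} Theorem~\ref{thm:equiv} also gives $[p]\le[q]$ in the sense of Definition~\ref{def:core-order} if and only if $g(p)\le g(q)$. Hence $f$ is an order isomorphism between the core, with its order inherited from $\mathbb L(\mathcal T_g)$, and $(\mathcal V,\le)$.

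\emph{Transport of structure.} We declare the components of the core to be
\[
\Pi_{\mathbb C}=\{f^{-1}(\mathcal B):\mathcal B\in\Pi\}.
\]
Each $f^{-1}(\mathcal B)$ receives the Boolean operations pulled back from $\mathcal B$; explicitly, $[p]\sqcap[q]=f^{-1}(g(p)\sqcap g(q))$, and similarly for joins and complements. Conditions (P1)--(P5) are statements about membership, intersections and inclusions of components and about agreement of Boolean structures. They are therefore invariant under a bijection that carries each component isomorphically onto its image, so they transfer verbatim. For (P6), the relation on the core defined via components pulls back to the relation $\le$ on $\mathcal V$. By the order step, this coincides with the order the core already carries inside $\mathbb L(\mathcal T_g)$, so it is a partial order. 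By construction, $f$ then satisfies Definition~\ref{def:pba-iso}.

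\emph{Compatibility with the Lindenbaum operations.} It is worth confirming that the transported operations agree with the logical ones where they apply. By Corollary~\ref{negifforth}, $[p]\le[\neg q]$ holds exactly when $g(p)\perp g(q)$, so orthogonality in the core matches orthogonality in $\mathcal V$. I would not claim $[p]\sqcap[q]=[p\wedge q]$ in $\mathbb L(\mathcal T_g)$ unless the paper needs it later; the theorem only asserts a pba structure making $f$ an isomorphism.

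\emph{Uniqueness.} Suppose $(\mathbb C(\mathcal T_g),\Pi')$ is any pba structure for which $f$ is an isomorphism. Definition~\ref{def:pba-iso} forces $\mathcal B'\in\Pi'$ if and only if $f(\mathcal B')\in\Pi$, so $\Pi'=\Pi_{\mathbb C}$. It also forces each restriction $f|_{\mathcal B'}$ to be a Boolean isomorphism, which determines the Boolean operations on $\mathcal B'$ as the pulled-back ones. So the structure is unique.

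\emph{Main obstacle.} The one point of real content is interpretive rather than computational. The core already has an order as a subset of $\mathbb L(\mathcal T_g)$, and the transported pba order must coincide with it. This is precisely what Theorem~\ref{thm:equiv} supplies, so once that theorem is invoked the remaining verifications are routine transport of structure.
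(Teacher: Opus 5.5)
Your proposal is correct and follows the paper's proof essentially step for step: Theorem~\ref{thm:equiv} yields the well-defined bijection and the order isomorphism, and the partial Boolean algebra structure is transported through the components $f^{-1}(\mathcal B)$, $\mathcal B\in\Pi$. Uniqueness is then forced by Definition~\ref{def:pba-iso}, exactly as in the paper; your explicit check that the transported (P6) order coincides with the order inherited from $\mathbb L(\mathcal T_g)$ is a slightly more detailed rendering of the same argument.
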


\begin{proof}
Define $f:\mathbb C(\mathcal T_g)\to\mathcal V$ by $f([p])=g(p)$.

\emph{$f$ is a well-defined bijection.}
By Theorem~\ref{thm:equiv}, $[p]=[q]$ if and only if $g(p)=g(q)$; hence $f$ is
well defined and injective.
Since $g$ is surjective, so is $f$.

\emph{$f$ is an order isomorphism.}
The order on $\mathbb C(\mathcal T_g)$ is the one inherited from the Lindenbaum
algebra, namely $[p]\le[q]$ iff $(p\to q)\in\mathcal T_g$.
By Theorem~\ref{thm:equiv}, $(p\to q)\in\mathcal T_g$ iff $g(p)\le g(q)$, so
$[p]\le[q]$ in $\mathbb C(\mathcal T_g)$ if and only if $f([p])\le f([q])$ in
$\mathcal V$.

\emph{Transport of structure.}
Declare two elements $[p],[q]$ of the core to be \emph{compatible} iff
$g(p)\simeq g(q)$, and take the components of the core to be the sets
$f^{-1}(\mathcal B)=\{[p]:g(p)\in\mathcal B\}$ for $\mathcal B\in\Pi$;
for each $[p]$ set $[p]^\perp=f^{-1}\!\bigl(g(p)^\perp\bigr)$, which is
well defined by injectivity of $f$.
With these definitions $f$ maps components bijectively onto components and, being
an order isomorphism, restricts on each component $f^{-1}(\mathcal B)$ to a Boolean
isomorphism onto $\mathcal B$ that preserves complementation.
Thus $(\mathbb C(\mathcal T_g),f^{-1}(\Pi))$ is a partial Boolean algebra and $f$
is an isomorphism onto $(\mathcal V,\Pi)$ in the sense of
Definition~\ref{def:pba-iso}.
Uniqueness is immediate: any partial Boolean algebra structure on the core making
$f$ an isomorphism must have exactly these components, complement, and order.
\end{proof}

\section{The Lindenbaum algebra and the core of $\mathcal T_g$}
\label{sec:compare}

The Lindenbaum algebra of every consistent theory formulated in classical
propositional logic is a Boolean algebra.
This fact is so fundamental that it is often regarded as a defining
property of classical logic itself.
Thus, for example, at the outset of their seminal paper,
Kochen and Specker \cite{KS} write:
\emph{``The classical propositional calculus is essentially Boolean algebra.''}

In a similar spirit, and as a generalization of this observation,
\emph{quantum logic} is often identified with the partial Boolean algebra
of \emph{quantum propositions}, namely the partial Boolean algebra of closed
subspaces of the Hilbert space associated with a quantum system.
In what follows, we show that quantum logic, understood in this sense,
\emph{arises as the core of a consistent theory formulated entirely within
the language of classical propositional calculus, equipped with its usual
logical apparatus}.

\medskip

For an arbitrary theory $\mathcal T$, very little can be said in general
about its core $\mathbb C(\mathcal T)$ beyond the fact that it is a subset of
the Lindenbaum algebra $\mathbb L(\mathcal T)$:
\begin{equation}
\label{CT1}
\mathbb C(\mathcal T)\subseteq \mathbb L(\mathcal T).
\end{equation}
In particular, $\mathbb C(\mathcal T)$ need not inherit the full Boolean
structure of $\mathbb L(\mathcal T)$.

By contrast, for the theory $\mathcal T_g$ constructed in the previous
sections, the situation is markedly different.
As shown in Theorem~\ref{thm:corepba}, the core $\mathbb C(\mathcal T_g)$ is
not merely a subset of $\mathbb L(\mathcal T_g)$, but is in fact a
\emph{nontrivial partial Boolean algebra}.

\medskip

To clarify the distinction, let $p,q$ be compatible initial formul\ae.
Then the pair $\{[p],[q]\}$ always has a least upper bound in the Lindenbaum
algebra $\mathbb L(\mathcal T_g)$, namely
\[
[p]\vee[q]=[p\vee q].
\]
However, since $p$ and $q$ are compatible, the pair $\{[p],[q]\}$ also has
a least upper bound \emph{within the core} $\mathbb C(\mathcal T_g)$.
This least upper bound is the equivalence class of an initial formula $x$,
and we denote it by
\[
[p]\sqcup[q]=[x].
\]
By~\eqref{CT1} the element $[p]\sqcup[q]$ also lies in
$\mathbb L(\mathcal T_g)$, where it is still an upper bound of
$\{[p],[q]\}$, though in general no longer the least one.  Since $[p]\vee[q]$
\emph{is} the least upper bound in $\mathbb L(\mathcal T_g)$, we obtain
\[
[p]\vee[q]\le [p]\sqcup[q],
\]
and the inequality may be strict.
Similarly, if $\{[p],[q]\}$ has a greatest lower bound in the core, we denote
it by
\[
[p]\sqcap[q]=[y],
\]
where $y$ is an initial formula.
The same argument, applied to lower bounds, gives
\[
[p]\sqcap[q]\le [p]\wedge[q].
\]

\medskip

These observations motivate the following question:
\emph{Does there exist a theory, formulated in the language of classical
propositional calculus, whose core is a given partial Boolean algebra?}
In particular, can the partial Boolean algebra
$(\mathcal V_{12},\Pi^\circ)$ illustrated in Figure~\ref{fig:v12-pba}, or the partial
Boolean algebra of closed subspaces of a Hilbert space, arise as the core of
such a theory?

In the latter case, the resulting core would coincide with what is
commonly referred to as \emph{quantum logic}.
Theorem~\ref{thm:corepba} answers this question affirmatively:
every partial Boolean algebra arises, up to isomorphism, as the core of a
consistent classical propositional theory of the form $\mathcal T_g$.

\subsection{An example: a theory whose core is a nontrivial $\pba$}
\label{orand}

With Figure~\ref{fig:v12-pba} and the canonical function $g$ defined in
\eqref{spkl1} in mind, let
\[
C=\{p_5,p_8\}.
\]
Then $C$ is a cluster, and its extension
\[
C^*=\{p_5,p_2,p_6,p_8,p_1\}
\]
is a model of the axioms of $\mathcal T_g$.
The initial formul\ae\ $p_9$ and $p_{11}$ do not belong to $C^*$ and are
therefore false in this model.

However, $p_6$ is true in this model, even though
\[
g(p_9)\sqcup g(p_{11})=g(p_6),
\qquad\text{equivalently}\qquad
[p_9]\sqcup[p_{11}]=[p_6].
\]
Thus the operation $\sqcup$ in the core does not behave like the classical
logical connective ``or''.
More generally, it follows from~\eqref{CT1} that whenever the least upper
bound of two elements $[p],[q]\in\mathbb C(\mathcal T_g)$ exists, it is
merely an upper bound of $\{[p],[q]\}$ in the Lindenbaum algebra
$\mathbb L(\mathcal T_g)$.
Similarly, whenever the greatest lower bound exists in the core, it is
merely a lower bound in $\mathbb L(\mathcal T_g)$.

This phenomenon is also reflected in the behavior of meets.
For example, the initial formul\ae\ $p_2$ and $p_8$ are both true in the
model $C^*$, yet
\[
g(p_2)\sqcap g(p_8)=g(p_7),
\]
and $p_7\notin C^*$, so $p_7$ is false in this model.

These observations are special cases of Theorem~\ref{thm:corepba}, which identifies
the core of $\mathcal T_g$ with the partial Boolean algebra $(\mathcal V,\Pi)$.
In particular, when $U_{12}=\{p_0,\dots,p_{11}\}$ and
$g(p_n)=n$ as in~\eqref{spkl1}, the core of $\mathcal T_g$ is
isomorphic to the partial Boolean algebra $\mathcal V_{12}$ shown in
Figure~\ref{fig:v12-pba}.

\subsection{Negation and orthocomplementation}
\label{xnot}

The unary operation of orthocomplementation in the core of a theory does not
coincide with negation in the Lindenbaum algebra.
To see this, let $(\mathcal V,\Pi)$ be a partial Boolean algebra and let
$\mathcal T_g$ be the theory induced by a surjection
$g:U\to\mathcal V$.
Suppose that $g(q)=g(p)^\perp$.
Then $g(q)\le g(p)^\perp$, and hence $(q\to\neg p)\in\mathcal T_g$.
Thus the initial formul\ae\ $p$ and $q$ cannot both be true in any model of
$\mathcal T_g$.

One might expect that $p$ and $q$ also cannot both be false.
This, however, is not the case.

\begin{theorem}
\label{notneg}
Orthocomplementation in the core is not negation.
\end{theorem}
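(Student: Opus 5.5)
To show that orthocomplementation in the core is not negation, I will exhibit a model of $\mathcal T_g$ in which both $p$ and $q$ are false, where $g(q)=g(p)^\perp$. In such a model $\overline{\chi}(\neg p)=1$ while $\overline{\chi}(q)=0$. Hence $(q\leftrightarrow\neg p)\notin\mathcal T_g$, that is, $[q]\ne[\neg p]$ in $\mathbb L(\mathcal T_g)$, although $[q]=[p]^\perp$ in the core by Theorem~\ref{thm:corepba}. The tool is the cluster machinery: by Theorem~\ref{thmC}, any cluster $C$ yields a model $C^*$, and I only need to choose $C$ so that neither $p$ nor $q$ lies in $C^*$.

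I will first give the concrete instance in $\mathcal V_{12}$ with $g(p_n)=n$ as in \eqref{spkl1}. Take $p=p_2$ and $q=p_3$, so $g(q)=g(p)^\perp$ by \eqref{eq:oc}. Let $C=\{p_8\}$; this is a cluster because $8\ne\mathbf 0$. Then $C^*=\{r: 8\le g(r)\}$. Because $2,3\in\mathcal B_1\setminus\mathcal B_2$ and $8\in\mathcal B_2\setminus\mathcal B_1$, the vertex $8$ is incompatible with both $2$ and $3$. By (P6), $8\le 2$ or $8\le 3$ would force compatibility, so $p_2,p_3\notin C^*$. The section~\ref{orand} example, $\{p_5,p_8\}^*$, can serve as a cross-check: there $p_2$ is true, but the same reasoning shows $p_4$ and $p_5$-type complementary pairs can simultaneously fail.

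For a general nontrivial $(\mathcal V,\Pi)$, I will use the same argument. Suppose some vertex $x\ne\mathbf 0$ is incompatible with a vertex $y$. Then $x$ is also incompatible with $y^\perp$, since compatibility with $y^\perp$ gives compatibility with $y$ inside the same component. Take $r$ with $g(r)=x$; then $\{r\}^*$ is a model containing neither $p$ nor $q$ whenever $g(p)=y$ and $g(q)=y^\perp$. Finally, I will note that $\chi_{\{r\}^*}(p\vee q)=0$, so $[p]\sqcup[q]=[\mathbf 1\text{-formula}]$ differs from $[p\vee q]$. This ties the result to the inequality $[p]\vee[q]\le[p]\sqcup[q]$ of section~\ref{sec:compare}.

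The only real obstacle is interpretive. The theorem must be read for a partial Boolean algebra that is not a single Boolean algebra, since for a Boolean $\mathcal V$ one might expect $x\le y$ or $x\le y^\perp$ to hold for atoms. The example of $\mathcal V_{12}$ settles the stated claim, so I will present the proof as that explicit counterexample and add the general incompatibility remark afterward.
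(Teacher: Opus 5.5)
Your proof is correct and is essentially the paper's: both exhibit a singleton-generated model in $\mathcal V_{12}$ whose generator is incompatible with a complementary pair (the paper uses $\{p_5\}^*$ with $p_8,p_9$, you use $\{p_8\}^*=\{p_8,p_1\}$ with $p_2,p_3$), and your general incompatibility remark and strict-inequality observation are valid extras. Two asides are wrong, though. First, in $\{p_5,p_8\}^*$ the formula $p_5$ is true, and the only complementary pair false there is $p_{10},p_{11}$. Second, non-Booleanness is not needed: the model $\{s\}^*=g^{-1}(\mathbf 1)$ with $g(s)=\mathbf 1$ falsifies both $p$ and $q$ whenever $g(p)\notin\{\mathbf 0,\mathbf 1\}$, even when $\mathcal V$ is a single Boolean algebra.
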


\begin{proof}
With Figure~\ref{fig:v12-pba} in mind, let
$U_{12}=\{p_0,p_1,\dots,p_{11}\}$ and let $g:U_{12}\to\mathcal V_{12}$ be
given by~\eqref{spkl1}.
Let $C=\{p_5\}$.
Then $C$ is a cluster, and
\begin{equation}
\label{eq:negexample}
C^*=\{q\in U_{12}: g(p_5)\le g(q)\}
     =\{p_5,p_2,p_6,p_1\}
\end{equation}
is the model of $\mathcal T_g$ based on $C$.
In this model, both $p_8$ and $p_9$ are false, even though
$g(p_8)=g(p_9)^\perp$.
Thus $p_8$ is not logically equivalent to $\neg p_9$.
\end{proof}

Nevertheless, orthocomplementation and negation are closely related.
Since $\mathbb C(\mathcal T_g)\subseteq\mathbb L(\mathcal T_g)$, the order
relation $\le$ of the Lindenbaum algebra applies to elements of the core.
In particular, expressions of the form
$[p]^\perp\le[\neg p]$ are meaningful.

The next theorem shows that $[p]^\perp$ is the largest element of the core
lying below $[\neg p]$ in the Lindenbaum algebra.

\begin{theorem}
\label{negorth1}
$(i)$ For all $p\in U$, $[p]^\perp\le[\neg p]$. 
$(ii)$ If $[r]\le[\neg p]$, then $[r]\le[p]^\perp$.
\end{theorem}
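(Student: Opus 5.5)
Both parts reduce to Corollary~\ref{negifforth} and Theorem~\ref{thm:equiv}, once we fix how $[p]^\perp$ sits in the Lindenbaum algebra. By the transport of structure in Theorem~\ref{thm:corepba}, $[p]^\perp$ is the class $[q]$ of any initial formula $q$ with $g(q)=g(p)^\perp$; such a $q$ exists because $g$ is surjective. By the definition of the order in $\mathbb L(\mathcal T_g)$, an inequality $[r]\le[\neg p]$ means exactly $(r\to\neg p)\in\mathcal T_g$.

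For $(i)$, pick $q$ with $g(q)=g(p)^\perp$. Then $g(q)$ and $g(p)$ are compatible, since $g(p)^\perp$ lies in every component containing $g(p)$. Also $g(q)\le g(p)^\perp$ holds trivially, so $g(q)\perp g(p)$. Hence $(q\to\neg p)$ is an axiom of type (g4), and so $[p]^\perp=[q]\le[\neg p]$.

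For $(ii)$, suppose $[r]\le[\neg p]$ with $r\in U$, that is, $(r\to\neg p)\in\mathcal T_g$. Corollary~\ref{negifforth} gives $g(r)\perp g(p)$, so $g(r)\le g(p)^\perp=g(q)$. Theorem~\ref{thm:equiv} then gives $(r\to q)\in\mathcal T_g$, that is, $[r]\le[q]=[p]^\perp$ in the core order of Definition~\ref{def:core-order}. That order coincides with the Lindenbaum order restricted to the core, so the conclusion holds in either reading.

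The only potential obstacle is bookkeeping. One must make explicit that the complement transported to the core is independent of the choice of $q$; this holds by Theorem~\ref{thm:equiv}, since $g(q)=g(q')$ implies $[q]=[q']$. One must also note that $(ii)$ quantifies over elements of the core, so $r$ ranges over $U$. With these observations, the substantive content is entirely the converse direction of Corollary~\ref{negifforth}, which was established earlier using the cluster models $\{r,p\}^*$.
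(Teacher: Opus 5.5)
Your proof is correct and follows the paper's own argument. Part $(i)$ uses the (g4) axiom $q\to\neg p$ for $g(q)=g(p)^\perp$, and part $(ii)$ applies Corollary~\ref{negifforth} and then converts $g(r)\le g(p)^\perp$ back into the core order. Your extra bookkeeping (compatibility of $g(q)$ with $g(p)$, independence of the choice of $q$, and the explicit use of Theorem~\ref{thm:equiv}) only spells out steps the paper leaves implicit.
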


\begin{proof}
(i) Let $[q]=[p]^\perp$.
Then $g(q)=g(p)^\perp$, so $q\to\neg p$ is an axiom of $\mathcal T_g$.
Hence $[q]\le[\neg p]$.

\smallskip

(ii) Suppose $[r]\le[\neg p]$.
Then $(r\to\neg p)\in\mathcal T_g$.
By Corollary~\ref{negifforth}, $g(r)\perp g(p)$, hence
$g(r)\le g(p)^\perp$.
Equivalently, $[r]\le[p]^\perp$.
\end{proof}
The distinction between orthocomplementation in the core and negation in the Lindenbaum algebra reveals that quantum-like logical constraints can emerge entirely within classical propositional logic, not from a failure of classical logic, but from the structural limitations imposed on which propositions may be jointly meaningful.
\section{Finite-dimensional partial Boolean algebras}
\label{sec:finite-pba}

Let $(\mathcal V,\Pi)$ be a partial Boolean algebra.  An element
$\alpha\in\mathcal V$ is an \emph{atom} if $\alpha\ne0$ and
\[
0\le x\le\alpha
\quad\Longrightarrow\quad
x=0\ \text{or}\ x=\alpha.
\]
The partial Boolean algebra is \emph{atomic} if every nonzero vertex lies
above an atom.  Notice that ``lies above an atom'' is weaker than ``covers an
atom''; no covering hypothesis is needed here.

A \emph{frame} is a maximal set of pairwise orthogonal atoms.  By~\rm(P5),
every finite set of pairwise orthogonal atoms lies in a common Boolean
component.

\begin{definition}\label{def:finite-dimensional}
An atomic partial Boolean algebra $(\mathcal V,\Pi)$ is
\emph{$n$-dimensional} if:
\begin{enumerate}
\item every set of pairwise orthogonal atoms is contained in a frame;
\item every frame has exactly $n$ atoms;
\item for every frame $\mathcal E$, there is a unique maximal component
$\mathcal B_{\mathcal E}$ whose set of Boolean atoms is precisely
$\mathcal E$;
\item every vertex belongs to $\mathcal B_{\mathcal E}$ for some frame
$\mathcal E$; and
\item every component is a Boolean subalgebra of
$\mathcal B_{\mathcal E}$ for some frame $\mathcal E$.
\end{enumerate}
\end{definition}

In a finite Boolean algebra, its atoms are pairwise orthogonal and their join
is $1$.  Thus a frame in the preceding definition is exactly the set of atoms
of its associated maximal component.

\subsection{Examples}

\begin{enumerate}
\item
The partial Boolean algebra $(\mathcal V_{12},\Pi^\circ)$ is
three-dimensional.  Its global atoms are
\[
3,\ 5,\ 7,\ 9,\ 11,
\]
and it has exactly two frames,
\[
\mathcal E_1=\{5,7,3\},
\qquad
\mathcal E_2=\{11,7,9\}.
\]
Their associated maximal components are $\mathcal B_1$ and
$\mathcal B_2$, respectively.  Equation~\eqref{eq:p12} shows that every
component is a Boolean subalgebra of one of these two maximal components.

\item
Let $H=\mathbb R^3$, and let $\mathcal V(H)$ be the set of all linear
subspaces of $H$.  Orthocomplementation is the usual orthogonal complement.
Two subspaces are compatible when their orthogonal projections commute;
equivalently, they belong to a common Boolean algebra arising from an
orthogonal decomposition of $H$.

For the standard orthonormal basis, let
$\overline x,\overline y,\overline z$ denote the coordinate axes and let
$\overline{xy},\overline{xz},\overline{yz}$ denote the coordinate planes.
Then
\[
\mathcal B_{\mathcal E}
=
\{0,\overline x,\overline y,\overline z,
  \overline{xy},\overline{xz},\overline{yz},H\}
\]
is a maximal Boolean component with frame
$\mathcal E=\{\overline x,\overline y,\overline z\}$.  For example,
\[
\overline x\sqcup\overline y=\overline{xy},\qquad
\overline{xy}\sqcap\overline{xz}=\overline x,
\qquad
\overline x^\perp=\overline{yz}.
\]
Taking as components all such maximal Boolean algebras and all their Boolean
subalgebras gives a three-dimensional partial Boolean algebra.

\item
More generally, the subspaces of $\mathbb C^n$, with components determined
by orthogonal decompositions (equivalently, by commuting families of
orthogonal projections), form an $n$-dimensional partial Boolean algebra.
This is the finite-dimensional propositional structure used in quantum
theory.
\end{enumerate}

\section{Certainty models associated with pure and mixed quantum states}
\label{pm}

The models introduced in Section~\ref{sec:mal2} are two-valued semantic objects:
they record which initial formul\ae\ are assigned truth value~$1$ and which are
assigned truth value~$0$.
When the core is the partial Boolean algebra of subspaces of a finite-dimensional
Hilbert space, these truth values can be compared with quantum probabilities.
The comparison is exact at probability~$1$: a model records the propositions that
are certain in a quantum state, while grouping all probabilities strictly below~$1$
under truth value~$0$.
Accordingly, the terminology in this section concerns \emph{certainty models
associated with states}, not a reconstruction of the full probability distribution
of a quantum state.
See Appendix~\ref{App:QT} for the precise correspondence.

Let $(\mathcal V,\Pi)$ be an atomic $n$-dimensional partial Boolean algebra, and
let $g:U\twoheadrightarrow\mathcal V$ be a canonical function.
An initial formula $e\in U$ is called \emph{primitive} when $g(e)$ is an atom of
$(\mathcal V,\Pi)$.
For such an $e$, the singleton $\{e\}$ is a cluster and
\begin{equation}
\mathfrak M_e=\{e\}^*
=\{\,q\in U:g(e)\le g(q)\,\}
\label{pure}
\end{equation}
is a model of $\mathcal T_g$.
We call $\mathfrak M_e$ the \emph{pure-state certainty model} based on $e$.
When $\mathcal V$ is a subspace partial Boolean algebra and $g(e)$ is the ray of a
pure state, $\mathfrak M_e$ consists exactly of the propositions having
probability~$1$ in that state.

Now let $a\in U$ satisfy
\[
\mathbf 0<g(a),
\qquad
 g(a)\text{ is not an atom}.
\]
Then
\begin{equation}
\mathfrak M_a=\{a\}^*
=\{\,q\in U:g(a)\le g(q)\,\}
\label{mix}
\end{equation}
is also a model of $\mathcal T_g$.
We call it the \emph{support-state certainty model} based on $a$.
In the Hilbert-space case, if $g(a)$ is the support of a density operator $\rho$,
then $\mathfrak M_a$ is precisely the set of propositions to which $\rho$ assigns
probability~$1$.
Different density operators with the same support determine the same certainty
model, so the model captures the support class of the state rather than its full
probabilistic content.

\subsection{Other models}

Not every model is generated by a singleton cluster.
We describe two further classes.

\begin{enumerate}
\item
\textbf{Hybrid models.}
Suppose $\mathfrak M_1$ and $\mathfrak M_2$ are models such that
$C=\mathfrak M_1\cup\mathfrak M_2$ is a cluster.
Then $C^*$ is a model by Theorem~\ref{thmC}; we call it a \emph{hybrid model}.

For example, with reference to Figure~\ref{fig:v12-pba}, let
\[
\mathfrak M_1=\{p_5\}^*=\{p_5,p_2,p_6,p_1\},
\qquad
\mathfrak M_2=\{p_{10}\}^*=\{p_{10},p_1\}.
\]
The first is a pure-state certainty model, whereas the second is a
support-state certainty model.
Since $g(p_5)\not\perp g(p_{10})$, the set $\{p_5,p_{10}\}$ is a cluster and
\[
\{p_5,p_{10}\}^*
=\{p_1,p_2,p_5,p_6,p_{10}\}
\]
is a hybrid model.
It is not generated by any singleton cluster.

\item
\textbf{Maximal models.}
A model $\mathfrak M$ is \emph{maximal} if it is not a proper subset of another
model.
For example,
\[
\mathfrak M_{5,11}
=\{p_5,p_{11}\}^*
=\{p_1,p_2,p_5,p_6,p_8,p_{11}\}
\]
is maximal.
It is also the union of the two singleton-generated models $\{p_5\}^*$ and
$\{p_{11}\}^*$.

\begin{corollary}
\label{maxmod}
A model $\mathfrak M$ is maximal if and only if, for every
$p\notin\mathfrak M$, there exists $q\in\mathfrak M$ such that $p\perp q$.
\end{corollary}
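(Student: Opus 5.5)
Both directions reduce to the cluster characterization of models: Theorem~\ref{thm:models-are-clusters} says every model is a cluster, and Theorem~\ref{thmC} says $C^*$ is a model for every cluster $C$.

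\emph{Sufficiency.} Suppose that for every $p\notin\mathfrak M$ there is $q\in\mathfrak M$ with $p\perp q$, and let $\mathfrak M'\supseteq\mathfrak M$ be a model. If the inclusion were proper, choose $p\in\mathfrak M'\setminus\mathfrak M$. The hypothesis gives $q\in\mathfrak M\subseteq\mathfrak M'$ with $g(p)\perp g(q)$. Then $\mathfrak M'$ contains an orthogonal pair, which is impossible by Lemma~\ref{lem:notmodel}. Hence $\mathfrak M'=\mathfrak M$, and $\mathfrak M$ is maximal.

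\emph{Necessity.} Suppose $\mathfrak M$ is maximal and $p\notin\mathfrak M$, and assume for contradiction that no $q\in\mathfrak M$ satisfies $p\perp q$. Two cases must be separated.

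\begin{enumerate}
\item If $g(p)=\mathbf 0$, then $p$ is orthogonal to every element. Since $\mathfrak M$ is nonempty (it contains any $r$ with $g(r)=\mathbf 1$, as noted after Definition~\ref{def:model}), some $q\in\mathfrak M$ satisfies $p\perp q$. This contradicts the assumption, so this case cannot occur.
\item Otherwise $g(p)\ne\mathbf 0$, so $g(p)\not\perp g(p)$. By Theorem~\ref{thm:models-are-clusters}, $\mathfrak M$ is a cluster. Together with the assumption, this makes $C=\mathfrak M\cup\{p\}$ a cluster: pairs inside $\mathfrak M$ are non-orthogonal, pairs $\{p,q\}$ with $q\in\mathfrak M$ are non-orthogonal by assumption, and $\{p,p\}$ is non-orthogonal because $g(p)\ne\mathbf 0$.
\end{enumerate}

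In the second case, Theorem~\ref{thmC} shows that $C^*$ is a model. By reflexivity of $\le$, $C^*\supseteq C\supsetneq\mathfrak M$, since $p\in C\setminus\mathfrak M$. This contradicts maximality.

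The only step needing care is checking the diagonal pair $\{p,p\}$, as required by the convention in Definition~\ref{def:cluster}; this is exactly why the case $g(p)=\mathbf 0$ is handled separately. Everything else is a direct application of the earlier results.
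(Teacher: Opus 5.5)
Your proof is correct and follows the paper's argument: for necessity you enlarge $\mathfrak M\cup\{p\}$ to the model $(\mathfrak M\cup\{p\})^*$, and for sufficiency you invoke Lemma~\ref{lem:notmodel}. Your separate treatment of $g(p)=\mathbf 0$ makes explicit the diagonal check that the paper leaves implicit; that case is excluded automatically, since $\mathbf 0$ is orthogonal to every element of the nonempty model $\mathfrak M$.
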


\begin{proof}
Suppose first that $\mathfrak M$ is maximal and let $p\notin\mathfrak M$.
If $p$ were orthogonal to no element of $\mathfrak M$, then
$\mathfrak M\cup\{p\}$ would be a cluster.
Its extension would be a model containing both $\mathfrak M$ and $p$, contrary
to maximality.

Conversely, suppose every $p\notin\mathfrak M$ is orthogonal to some
$q\in\mathfrak M$.
If $\mathfrak M\subsetneq\mathfrak N$ for a model $\mathfrak N$, choose
$p\in\mathfrak N\setminus\mathfrak M$.
Then $p\perp q$ for some $q\in\mathfrak M\subseteq\mathfrak N$, contradicting
Lemma~\ref{lem:notmodel}.
Hence $\mathfrak M$ is maximal.
\end{proof}
\end{enumerate}

\paragraph{Example.}
The complement of $\mathfrak M_{5,11}$ in $U_{12}$ is
\[
U_{12}\setminus\mathfrak M_{5,11}
=\{p_0,p_3,p_4,p_7,p_9,p_{10}\}.
\]
Each of these formul\ae\ is orthogonal to at least one member of
$\mathfrak M_{5,11}$, as may be checked in Figure~\ref{fig:v12-pba}.
Corollary~\ref{maxmod} therefore gives another proof that
$\mathfrak M_{5,11}$ is maximal.

\section{The Kochen--Specker property}
\label{ksp}

An $n$-dimensional partial Boolean algebra is said to be \emph{nontrivial} if it
has at least two frames.
Let $(\mathcal V,\Pi)$ be a nontrivial $n$-dimensional $\pba$ and let
$\mathfrak A$ denote its set of atoms.

\begin{definition}
\label{def:ks}
The partial Boolean algebra $(\mathcal V,\Pi)$ has the \emph{Kochen--Specker
property} (the \emph{KS-property}) if there exists a function
$f:\mathfrak A\to\{0,1\}$ that assigns the value $1$ to one and only one atom of
\emph{every} frame.
Such a function $f$ is called a \emph{KS-function}.
\end{definition}

\begin{remark}
The terminology in Definition~\ref{def:ks} follows the convention adopted in
this paper: having the KS-property means admitting such a colouring.  In much of
the modern contextuality literature, the phrase ``Kochen--Specker property'' is
instead associated with the obstruction to such a colouring.  Thus our
KS-property is what is often called \emph{KS-colourability}; failure of the
property is the corresponding Kochen--Specker obstruction.
\end{remark}

As always, let $g:U\twoheadrightarrow\mathcal V$ be the canonical function, where
$U$ is the set of initial formul\ae.
Recall (Section~\ref{pm}) that an initial formula $p$ is \emph{primitive} if
$g(p)$ is an atom of $\mathcal V$.

\begin{definition}
\label{def:preframe}
A set $\{p_1,\dots,p_n\}$ of primitive formul\ae\ is a \emph{pre-frame} if
$\{g(p_1),\dots,g(p_n)\}$ is a frame of $(\mathcal V,\Pi)$.
\end{definition}

Since the $n$ atoms of a frame are distinct, $g$ is injective on a pre-frame.
The next lemma is the model-theoretic heart of the matter.

\begin{lemma}
\label{lem:preKS}
If $\mathfrak M$ is a model of $\mathcal T_g$, then $\mathfrak M$ is a cluster and
contains at most one member of any pre-frame.
Consequently, $\mathfrak M\cap\mathcal F$ is either empty or a singleton, for
every pre-frame $\mathcal F$.
\end{lemma}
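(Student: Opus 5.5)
The plan is to reduce the statement directly to Theorem~\ref{thm:models-are-clusters} and Lemma~\ref{lem:notmodel}. The first assertion, that $\mathfrak M$ is a cluster, is exactly the content of Theorem~\ref{thm:models-are-clusters}, so nothing new is needed there.

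For the second assertion, I will fix a pre-frame $\mathcal F=\{p_1,\dots,p_n\}$ and suppose that $\mathfrak M$ contains two distinct members $p_i\ne p_j$ of $\mathcal F$. The goal is to show that $g(p_i)\perp g(p_j)$, which immediately contradicts Lemma~\ref{lem:notmodel} (equivalently, the cluster property of $\mathfrak M$).

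The verification of orthogonality has two steps.
\begin{enumerate}
\item By Definition~\ref{def:preframe}, the set $\{g(p_1),\dots,g(p_n)\}$ is a frame, that is, a maximal set of pairwise orthogonal atoms. Since the $n$ atoms of a frame are distinct, $g$ is injective on $\mathcal F$, so $p_i\ne p_j$ gives $g(p_i)\ne g(p_j)$.
\item Two distinct members of a set of pairwise orthogonal atoms are orthogonal. Hence $g(p_i)\perp g(p_j)$ in the sense of Definition~\ref{def:orth}.
\end{enumerate}
The only point needing a moment's care is the injectivity in the first step, since distinct formul\ae\ could in general share an image under $g$. The paper has already recorded that this cannot happen on a pre-frame, so I expect no real obstacle.

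The final sentence of the lemma then follows at once: $\mathfrak M\cap\mathcal F$ has at most one element, so it is either empty or a singleton.
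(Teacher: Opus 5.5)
Your proposal is correct and follows essentially the same route as the paper's proof. It cites Theorem~\ref{thm:models-are-clusters} for the cluster property, uses injectivity of $g$ on a pre-frame to get distinct atoms of a common frame, and notes that these atoms are orthogonal, which contradicts the cluster property.
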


\begin{proof}
By Theorem~\ref{thm:models-are-clusters}, $\mathfrak M$ is a cluster.
Suppose $p,q\in\mathfrak M\cap\mathcal F$ with $p\ne q$, where $\mathcal F$ is a
pre-frame.
Then $g(p)$ and $g(q)$ are distinct atoms of a common frame, hence mutually
orthogonal, so $p\perp q$.
This contradicts the fact that $\mathfrak M$ is a cluster.
\end{proof}

A schematic picture is helpful: think of $U$ as a disc, the pre-frames as chords,
and the models as solid sub-discs; a model meets each chord in at most one point
(Figure~\ref{fig:models}).

\begin{figure}[ht]
\centering
\includegraphics[width=0.55\textwidth]{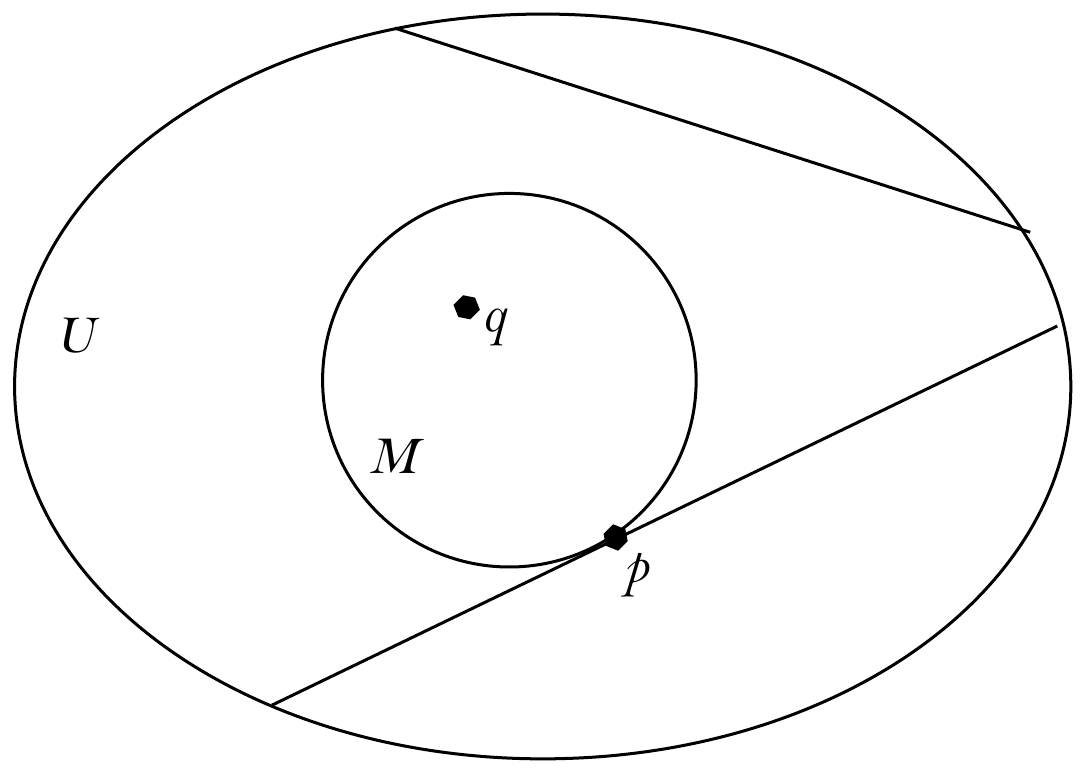}
\caption{\small A schematic representation of models and pre-frames.
The set $U$ of initial formul\ae\ is depicted as a disc, the chords represent
pre-frames, and the solid discs represent models.
A model meets each pre-frame in at most one point, i.e.\ in at most one initial
formula (Lemma~\ref{lem:preKS}).}
\label{fig:models}
\end{figure}

\begin{lemma}
\label{lem:aF}
Let $f$ be a KS-function.
For every pre-frame $\mathcal F$ let $a_{\mathcal F}$ be the unique primitive
formula in $\mathcal F$ with $f\bigl(g(a_{\mathcal F})\bigr)=1$, and set
\[
C=\{\,a_{\mathcal F} : \mathcal F\ \text{is a pre-frame}\,\}.
\]
Then $C$ is a cluster.
\end{lemma}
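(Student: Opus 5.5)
The plan is to verify the two defining requirements of a cluster in Definition~\ref{def:cluster}: $C$ must be nonempty, and it must contain no pair $p,q$ (not necessarily distinct) with $g(p)\perp g(q)$.

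\emph{Nonemptiness.} Since the partial Boolean algebra is atomic and nondegenerate, $\mathbf 1$ lies above some atom. By item~(1) of Definition~\ref{def:finite-dimensional}, the singleton consisting of that atom lies in a frame. Because $g$ is surjective, we can choose a preimage of each atom of this frame, which gives a pre-frame $\mathcal F$. Hence $a_{\mathcal F}$ exists and $C\neq\emptyset$.

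\emph{No self-orthogonal element.} Each $a_{\mathcal F}$ is primitive, so $g(a_{\mathcal F})$ is an atom. An atom is nonzero, and $x\perp x$ forces $x\le x^\perp$, hence $x=x\sqcap x^\perp=\mathbf 0$. So no element of $C$ is orthogonal to itself.

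\emph{Distinct members.} This is the main step. Suppose $a=a_{\mathcal F_1}$ and $b=a_{\mathcal F_2}$ lie in $C$ with $g(a)\perp g(b)$. Then $g(a)\ne g(b)$, since an atom is not self-orthogonal, and $\{g(a),g(b)\}$ is a set of pairwise orthogonal atoms. By item~(1) of Definition~\ref{def:finite-dimensional}, this set is contained in some frame $\mathcal E$. Using surjectivity of $g$, we choose a pre-frame $\mathcal F_3$ over $\mathcal E$ that contains $a$ and $b$ themselves; we simply pick $a$ and $b$ as the preimages of $g(a)$ and $g(b)$, which is legitimate since $g$ is injective on pre-frames only through distinct atoms. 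Now $f(g(a))=1$ because $a=a_{\mathcal F_1}$, and $f(g(b))=1$ because $b=a_{\mathcal F_2}$. Thus the frame $\mathcal E=g(\mathcal F_3)$ contains two distinct atoms with $f$-value $1$, contradicting the defining property of a KS-function in Definition~\ref{def:ks}.

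The key observation is that the KS condition is a statement about atoms and frames, not about formul\ae. The value $f(g(a))=1$ is a property of the atom $g(a)$, so it transfers to every frame containing that atom, independently of which pre-frame originally selected $a$.

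The only delicate point is the choice of the frame $\mathcal E$. It must contain both atoms, and this is exactly what item~(1) of $n$-dimensionality guarantees. We do not need to know whether $\mathcal E$ coincides with $g(\mathcal F_1)$ or $g(\mathcal F_2)$.
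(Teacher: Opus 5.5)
Your proof is correct and follows the paper's argument. Two orthogonal members of $C$ map to orthogonal atoms that both have $f$-value $1$; item~(1) of Definition~\ref{def:finite-dimensional} places them in a common frame $\mathcal E$, and this contradicts the KS property. You additionally verify nonemptiness and exclude the case $p=q$, which Definition~\ref{def:cluster} requires but the paper's proof leaves implicit. The auxiliary pre-frame $\mathcal F_3$ is harmless but unnecessary, since the contradiction lives entirely at the level of the frame $\mathcal E$.
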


\begin{proof}
Suppose, to the contrary, that $C$ is not a cluster.
Then there exist distinct $p,q\in C$ with $p\perp q$, so that $g(p)$ and $g(q)$
are mutually orthogonal atoms, while $f(g(p))=f(g(q))=1$.
By Definition~\ref{def:finite-dimensional}(1), every set of mutually orthogonal atoms is contained in
a frame; let $\mathcal E$ be a frame containing both $g(p)$ and $g(q)$.
Then $\mathcal E$ contains two atoms assigned the value $1$ by $f$, contradicting
the assumption that $f$ is a KS-function.
\end{proof}

The KS-property now admits a purely model-theoretic reformulation.

\begin{theorem}
\label{thm:ksmod}
A nontrivial $n$-dimensional partial Boolean algebra $(\mathcal V,\Pi)$ has the
KS-property if and only if there exists a model $\mathfrak M$ of $\mathcal T_g$
such that $\mathfrak M\cap\mathcal F$ is a singleton for every pre-frame
$\mathcal F$.
\end{theorem}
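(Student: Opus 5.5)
The plan is to prove the two implications separately. Each direction only assembles Lemma~\ref{lem:preKS}, Lemma~\ref{lem:aF}, Theorem~\ref{thmC} and Theorem~\ref{thm:models-are-clusters}. The one point needing care is that $g$ need not be injective, so a single frame may have many pre-frames above it.

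\emph{($\Rightarrow$)} Let $f$ be a KS-function.
\begin{enumerate}
\item Form $C=\{a_{\mathcal F}\}$ as in Lemma~\ref{lem:aF}, which says $C$ is a cluster.
\item Check that $C$ is nonempty. Nontriviality gives at least one frame, and surjectivity of $g$ lets us choose a preimage of each of its atoms, so at least one pre-frame exists.
\item By Theorem~\ref{thmC}, $\mathfrak M=C^*$ is a model of $\mathcal T_g$.
\item Fix a pre-frame $\mathcal F$. Then $a_{\mathcal F}\in C\subseteq C^*$ by reflexivity, so $\mathfrak M\cap\mathcal F\neq\emptyset$.
\item Lemma~\ref{lem:preKS} gives $|\mathfrak M\cap\mathcal F|\le 1$, so $\mathfrak M\cap\mathcal F$ is a singleton.
\end{enumerate}

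\emph{($\Leftarrow$)} Let $\mathfrak M$ be a model meeting every pre-frame in exactly one formula.
\begin{enumerate}
\item Define $f:\mathfrak A\to\{0,1\}$ by $f(\alpha)=1$ iff some $p\in\mathfrak M$ has $g(p)=\alpha$.
\item Record that membership in $\mathfrak M$ depends only on $g(p)$. By Theorem~\ref{thm:models-are-clusters}, $\mathfrak M=\mathfrak M^*$. Hence if $g(p)=g(q)$ and $p\in\mathfrak M$, then $q\in\mathfrak M$. So $f(\alpha)=1$ iff \emph{every} preimage of $\alpha$ lies in $\mathfrak M$.
\item \emph{At least one atom of a frame gets value $1$.} Let $\mathcal E$ be a frame. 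Using surjectivity of $g$, choose one preimage of each atom of $\mathcal E$; this gives a pre-frame $\mathcal F$. By hypothesis $\mathfrak M\cap\mathcal F=\{p\}$ for some $p$, so $f(g(p))=1$ with $g(p)\in\mathcal E$.
\item \emph{At most one atom of a frame gets value $1$.} Suppose two distinct atoms $\alpha,\beta\in\mathcal E$ had $f(\alpha)=f(\beta)=1$. Pick $p,q\in\mathfrak M$ with $g(p)=\alpha$ and $g(q)=\beta$. Distinct atoms of a frame are orthogonal, so $p\perp q$. This contradicts that $\mathfrak M$ is a cluster (Theorem~\ref{thm:models-are-clusters}, or equivalently Lemma~\ref{lem:notmodel}).
\end{enumerate}
Hence $f$ is a KS-function.

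The expected obstacle is the mismatch between frames and pre-frames. The hypothesis concerns every pre-frame, while the conclusion concerns frames, and one frame can carry several pre-frames. The upward-closure argument in step~2 of the converse removes this dependence, since it makes $f$ independent of which preimage is chosen. In the forward direction, Lemma~\ref{lem:aF} already handles it by selecting $a_{\mathcal F}$ separately for every pre-frame.
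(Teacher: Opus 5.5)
Your proposal is correct and follows the paper's proof essentially step for step. The forward direction forms $C^*$ from the formulas $a_{\mathcal F}$ of Lemma~\ref{lem:aF} and gets uniqueness from Lemma~\ref{lem:preKS}; the converse defines $f$ from the primitive formulas in $\mathfrak M$, gets existence from a chosen pre-frame over each frame, and gets uniqueness from the cluster property. Your extra checks (that $C$ is nonempty, and that membership in $\mathfrak M$ depends only on $g(p)$) are harmless refinements, and the second is not needed for the argument.
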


\begin{proof}
($\Leftarrow$) Suppose such a model $\mathfrak M$ exists.
Define $f:\mathfrak A\to\{0,1\}$ by setting $f(a)=1$ if and only if there is a
primitive formula $p\in\mathfrak M$ with $g(p)=a$.
Let $\mathcal E=\{a_1,\dots,a_n\}$ be any frame.
Choose primitive formul\ae\ $p_i$ with $g(p_i)=a_i$ (possible since $g$ is onto
and each $a_i$ is an atom); then $\mathcal F=\{p_1,\dots,p_n\}$ is a pre-frame.
By hypothesis $\mathfrak M\cap\mathcal F=\{p_j\}$ for a unique $j$, so
$f(a_j)=1$.
If $f(a_i)=1$ for some $i\ne j$, there would be a primitive $p_i'\in\mathfrak M$
with $g(p_i')=a_i$; then $p_i'$ and $p_j$ lie in $\mathfrak M$ with
$g(p_i')=a_i\perp a_j=g(p_j)$, contradicting that $\mathfrak M$ is a cluster.
Hence $f$ assigns $1$ to exactly one atom of $\mathcal E$, and $f$ is a
KS-function.

($\Rightarrow$) Suppose $(\mathcal V,\Pi)$ has the KS-property, witnessed by a
KS-function $f$.
Let $C$ and $a_{\mathcal F}$ be as in Lemma~\ref{lem:aF}, so that $C$ is a
cluster, and put $\mathfrak M=C^*$, a model of $\mathcal T_g$ by
Theorem~\ref{thmC}.
For any pre-frame $\mathcal F$ we have $a_{\mathcal F}\in C\subseteq\mathfrak M$,
so $\mathfrak M\cap\mathcal F\ne\emptyset$; by Lemma~\ref{lem:preKS} it is a
singleton.
\end{proof}

\begin{corollary}
\label{cor:noKS}
A nontrivial $n$-dimensional partial Boolean algebra fails to have the
KS-property if and only if, for every model $\mathfrak M$ of $\mathcal T_g$,
there exists a pre-frame $\mathcal F$ with $\mathfrak M\cap\mathcal F=\emptyset$.
\end{corollary}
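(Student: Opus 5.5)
This corollary is the contrapositive of Theorem~\ref{thm:ksmod}, combined with Lemma~\ref{lem:preKS}. I would prove the two directions separately, using only these two earlier results.

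($\Rightarrow$) Suppose $(\mathcal V,\Pi)$ fails to have the KS-property, and let $\mathfrak M$ be any model of $\mathcal T_g$. By Theorem~\ref{thm:ksmod}, read contrapositively, $\mathfrak M$ is not a model that meets every pre-frame in a singleton. So some pre-frame $\mathcal F$ satisfies $\mathfrak M\cap\mathcal F$ not a singleton. Lemma~\ref{lem:preKS} says $\mathfrak M\cap\mathcal F$ is always either empty or a singleton. The only remaining option is $\mathfrak M\cap\mathcal F=\emptyset$, as required.

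($\Leftarrow$) Suppose every model $\mathfrak M$ misses some pre-frame entirely. Then no model meets every pre-frame in a singleton. By the ($\Rightarrow$) direction of Theorem~\ref{thm:ksmod}, the KS-property therefore fails. This direction stays non-vacuous, because $\mathcal T_g$ is consistent by Theorem~\ref{thmC}, so models do exist.

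There is no real obstacle here. The one step that needs care is excluding the case $|\mathfrak M\cap\mathcal F|\ge2$, and this is exactly what Lemma~\ref{lem:preKS} provides. The hypotheses of being nontrivial and $n$-dimensional are carried over unchanged, since Theorem~\ref{thm:ksmod} is stated under them.
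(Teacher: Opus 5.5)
Your proof is correct and is exactly the argument the paper leaves implicit: the corollary is stated without proof as the contrapositive of Theorem~\ref{thm:ksmod}, with Lemma~\ref{lem:preKS} excluding intersections of size at least two. Your non-vacuity remark is harmless but not needed for the logic, because the ($\Rightarrow$) direction of Theorem~\ref{thm:ksmod} would give the failure of the KS-property even if $\mathcal T_g$ had no models.
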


\subsection{The Kochen--Specker theorem}
\label{subsec:ksthm}

Kochen and Specker \cite[p.~66, Theorem~0]{KS} characterize when a
partial Boolean algebra can be embedded into a Boolean algebra.  In the present
setting, such an embedding may be represented by a family of two-valued
homomorphisms that separates points of $\mathcal V$.  Every two-valued
homomorphism restricts on the atoms of each frame to a KS-function.  The converse
need not follow from the existence of a single KS-function: embeddability
requires a sufficiently rich, point-separating family of such homomorphisms.
When the target Boolean algebra is realized as a field of subsets of a set $S$,
the elements of $S$ may be interpreted as hidden states of a classical phase
space.

For the full partial Boolean algebra of projections of a Hilbert space of
dimension at least three, the Kochen--Specker theorem rules out even one global
two-valued homomorphism of the required kind.  In the terminology of
Definition~\ref{def:ks}, this structure therefore fails to have the KS-property
and, a fortiori, admits no Boolean embedding \cite{KS,Cabello2014}.

The following result shows that the simple example $\mathcal V_{12}$ is, by
contrast, well behaved.

\begin{theorem}
\label{thm:KS12}
The partial Boolean algebra $(\mathcal V_{12},\Pi^\circ)$ has the KS-property.
\end{theorem}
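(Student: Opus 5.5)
The plan is to exhibit an explicit KS-function, using the fact established above that $(\mathcal V_{12},\Pi^\circ)$ is three-dimensional with exactly two frames,
\[
\mathcal E_1=\{5,7,3\},\qquad \mathcal E_2=\{11,7,9\},
\]
and that its atoms are $\mathfrak A=\{3,5,7,9,11\}$. Since Definition~\ref{def:ks} only asks that every frame contain exactly one atom with value $1$, it suffices to check these two frames.

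The simplest choice is to put the value $1$ on the shared atom: define $f(7)=1$ and $f(3)=f(5)=f(9)=f(11)=0$. Then each of $\mathcal E_1$ and $\mathcal E_2$ contains exactly one atom with value $1$, namely $7$, so $f$ is a KS-function. Alternatively, one may set $f(5)=f(11)=1$ and give all other atoms the value $0$. This second choice is also valid, and it connects to the model $\mathfrak M_{5,11}$ of Section~\ref{pm}: by Theorem~\ref{thm:ksmod} that model meets each pre-frame in exactly one formula. I would mention this as a model-theoretic cross-check.

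The only step that needs care is confirming that $\mathcal E_1$ and $\mathcal E_2$ really are all the frames. A frame is a maximal set of pairwise orthogonal atoms, and any such set is pairwise compatible, so by (P5) it lies in a component. By Proposition~\ref{prop:v12-pba} every component lies in $\mathcal B_1$ or $\mathcal B_2$, whose atom sets are exactly $\mathcal E_1$ and $\mathcal E_2$. Hence every frame is one of these two, and no other constraint has to be checked.
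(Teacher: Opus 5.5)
Your proof is correct, and it is more direct than the paper's. The paper never writes down a KS-function. Instead it invokes Theorem~\ref{thm:ksmod}: it checks that $\{p_5,p_{11}\}$ is a cluster (since $11\not\le 4=5^\perp$) and computes $\mathfrak M_{5,11}=\{p_1,p_2,p_5,p_6,p_8,p_{11}\}$. It then observes that this model meets the pre-frames $\{p_5,p_7,p_3\}$ and $\{p_{11},p_7,p_9\}$ in $\{p_5\}$ and $\{p_{11}\}$. The ($\Leftarrow$) direction of Theorem~\ref{thm:ksmod} turns that model back into the function with $f(5)=f(11)=1$, which is exactly your second choice.

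You verify Definition~\ref{def:ks} directly, so you need neither Theorem~\ref{thm:ksmod} nor the cluster machinery behind Theorem~\ref{thmC}. Your argument is therefore shorter and self-contained. Your primary choice $f(7)=1$ corresponds to the dispersion-free model $\{p_7\}^*$ that appears at the end of Section~\ref{sec:hv}. The paper's detour buys illustration rather than logical strength: it presents the KS-property of $\mathcal V_{12}$ as the existence of a concrete two-valued model of $\mathcal T_g$, which is the point of that section.

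Your check that $\mathcal E_1,\mathcal E_2$ are the only frames is sound; the paper simply cites this from Section~\ref{sec:finite-pba}. The last step deserves one explicit clause. A frame inside $\mathcal B_1$ is a set of atoms contained in $\{3,5,7\}$, and since $\{3,5,7\}$ is itself pairwise orthogonal, maximality forces equality; the same holds for $\mathcal B_2$.
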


\begin{proof}
By Theorem~\ref{thm:ksmod} it suffices to exhibit a model meeting every pre-frame
in a singleton.
Recall from Section~\ref{sec:finite-pba} that $\mathcal V_{12}$ has exactly two frames,
$\mathcal E_1=\{5,7,3\}$ and $\mathcal E_2=\{11,7,9\}$, and that $g(p_n)=n$ is a
bijection on $U_{12}$, so the pre-frames are
$\mathcal F_1=\{p_5,p_7,p_3\}$ and $\mathcal F_2=\{p_{11},p_7,p_9\}$.
Since $g(p_5)=5$ is not orthogonal to $g(p_{11})=11$ (indeed $5\not\perp 11$,
because $11\not\le 4=5^\perp$), the set $\{p_5,p_{11}\}$ is a cluster, and its
associated model is the maximal model
\[
\mathfrak M_{5,11}=\{p_5\}^*\cup\{p_{11}\}^*
=\{p_1,p_2,p_5,p_6,p_8,p_{11}\}
\]
of Section~\ref{pm}.
Then
\[
\mathfrak M_{5,11}\cap\mathcal F_1=\{p_5\},
\qquad
\mathfrak M_{5,11}\cap\mathcal F_2=\{p_{11}\},
\]
both singletons.
As $\mathcal F_1$ and $\mathcal F_2$ are the only pre-frames, the hypothesis of
Theorem~\ref{thm:ksmod} is satisfied, and $(\mathcal V_{12},\Pi^\circ)$ has the
KS-property.
\end{proof}

\section{Observables and intrinsic formul\ae}
\label{sec:obs}

To connect the abstract framework with the language of measurement, we use
finite-spectrum observables.  This is sufficient for all examples in the paper
and avoids imposing any countable-completeness assumption on an arbitrary
partial Boolean algebra.  Throughout this section, $(\mathcal V,\Pi)$ is an
$n$-dimensional partial Boolean algebra and $\mathbb B$ is the Boolean
$\sigma$-algebra of Borel subsets of $\mathbb R$.

\begin{definition}
\label{def:obs}
An \emph{observable} on $(\mathcal V,\Pi)$ is a map
$P:\mathbb B\to\mathcal V$ for which there exist a frame
$\mathcal F=\{a_1,\dots,a_n\}$ and pairwise distinct real numbers
$\lambda_1,\dots,\lambda_n$ such that
\begin{equation}
\label{eq:finite-observable}
 P(E)=\bigsqcup_{\lambda_i\in E} a_i
 \qquad (E\in\mathbb B),
\end{equation}
where the empty join is $\mathbf 0$.
The finite set
$\spec(P)=\{\lambda_1,\dots,\lambda_n\}$ is the
\emph{spectrum} of $P$, and the Boolean algebra generated by
$\mathcal F$ is the \emph{range component} of $P$.
\end{definition}

Because all $a_i$ lie in one Boolean component, the join in
\eqref{eq:finite-observable} is well defined.  The definition immediately gives
\[
P(\emptyset)=\mathbf 0,
\qquad P(\mathbb R)=\mathbf 1,
\qquad P(E')=P(E)^\perp,
\]
and, for pairwise disjoint Borel sets $E_1,E_2,\dots$,
\[
P\!\left(\bigcup_jE_j\right)=\bigsqcup_jP(E_j).
\]
Only finitely many terms on the right can be nonzero.  Thus a finite-spectrum
observable is a projection-valued measure in the usual finite-dimensional
sense \cite{Varadarajan}.

For a Borel function $\varphi:\mathbb R\to\mathbb R$, define
$\varphi(P)$ by
\[
\varphi(P)(E)=P(\varphi^{-1}(E)).
\]
Its range lies in the range component of $P$.
Two observables $X$ and $Y$ are said to be \emph{compatible}, written
$X\rightleftharpoons Y$, if there are Borel functions $\varphi,\psi$ and an
observable $Z$ such that $X=\varphi(Z)$ and $Y=\psi(Z)$.

\begin{corollary}
\label{cor:comobs}
If $X\rightleftharpoons Y$, then $X(E)\simeq Y(F)$ for all
$E,F\in\mathbb B$.
\end{corollary}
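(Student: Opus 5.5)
The argument reduces the claim to a single Boolean component: the range component of $Z$, which by Definition~\ref{def:obs} is the Boolean algebra generated by a frame of $Z$. Once both $X(E)$ and $Y(F)$ are shown to lie in that component, compatibility follows from condition~\rm(P1), which defines $x\simeq y$ as membership in a common component.

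Assume $X\rightleftharpoons Y$. Fix an observable $Z$ and Borel functions $\varphi,\psi$ with $X=\varphi(Z)$ and $Y=\psi(Z)$. Let $\mathcal F=\{a_1,\dots,a_n\}$ be the frame of $Z$ and $\mathcal B_0$ its range component.

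\emph{Step 1: $\mathcal B_0$ is a component.} By Definition~\ref{def:finite-dimensional}(3) there is a maximal component $\mathcal B_{\mathcal F}$ whose atoms are exactly $\mathcal F$. It is a finite Boolean algebra: its atoms are finitely many and join to $\mathbf 1$. Hence every element of $\mathcal B_{\mathcal F}$ is a join of a subset of $\mathcal F$, so the Boolean algebra generated by $\mathcal F$ is $\mathcal B_{\mathcal F}$ itself. Thus $\mathcal B_0=\mathcal B_{\mathcal F}\in\Pi$.

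\emph{Step 2: the values of $X$ and $Y$ lie in $\mathcal B_0$.} By definition of $\varphi(Z)$,
\[
X(E)=Z(\varphi^{-1}(E))=\bigsqcup_{\lambda_i\in\varphi^{-1}(E)}a_i .
\]
Here $\varphi^{-1}(E)$ is Borel because $\varphi$ is Borel measurable. The right side is a finite join of atoms of $\mathcal B_0$, or $\mathbf 0$ if the join is empty, so it lies in $\mathcal B_0$. The same computation gives $Y(F)\in\mathcal B_0$. This is the remark ``its range lies in the range component of $P$'' made after Definition~\ref{def:obs}.

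\emph{Step 3: conclusion.} Since $X(E)$ and $Y(F)$ both belong to the component $\mathcal B_0$, they are compatible by~\rm(P1), that is, $X(E)\simeq Y(F)$.

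The only point needing care is Step~1: the algebra generated by $\mathcal F$ must actually be a member of $\Pi$, not merely a Boolean algebra built from vertices. Definition~\ref{def:finite-dimensional}(3) settles this, because the generated algebra coincides with the maximal component $\mathcal B_{\mathcal F}$. If one prefers not to rely on that identification, it is enough that $\mathcal B_{\mathcal F}$ is a component containing all joins of subsets of $\mathcal F$, which is all the argument uses.
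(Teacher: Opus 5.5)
Your proof is correct and is essentially the paper's argument: both place every $X(E)$ and $Y(F)$ in the range component of $Z$ and conclude compatibility from membership in a common component. Your one addition is a check, via Definition~\ref{def:finite-dimensional}(3), that this range component really belongs to $\Pi$, which the paper takes for granted; your fallback remark (that the maximal component $\mathcal B_{\mathcal F}$ contains all finite joins of members of $\mathcal F$) is the cleaner version, since it does not depend on the unargued claim that the atoms of $\mathcal F$ join to $\mathbf 1$.
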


\begin{proof}
The ranges of $X$ and $Y$ lie in the range component of $Z$.  Hence every
$X(E)$ and $Y(F)$ belong to a common Boolean component and are compatible.
\end{proof}

\begin{proposition}
\label{prop:vertex-as-event}
Every vertex $x\in\mathcal V$ is the value $P(E)$ of some observable $P$ at
some Borel set $E$.
\end{proposition}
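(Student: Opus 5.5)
Let $x\in\mathcal V$ be arbitrary. By Definition~\ref{def:finite-dimensional}(4), $x$ belongs to the maximal component $\mathcal B_{\mathcal E}$ of some frame $\mathcal E=\{a_1,\dots,a_n\}$. We will use $\mathcal E$ as the frame of the observable. Choose pairwise distinct reals, say $\lambda_i=i$ for $i=1,\dots,n$, and define $P$ by~\eqref{eq:finite-observable}. Then $P$ is an observable in the sense of Definition~\ref{def:obs}, and its range component is the Boolean algebra generated by $\mathcal E$ inside $\mathcal B_{\mathcal E}$.

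The key step is to show that $x$ is a join of atoms of $\mathcal E$. That is, we need a subset $I\subseteq\{1,\dots,n\}$ with $x=\bigsqcup_{i\in I}a_i$, the join computed in $\mathcal B_{\mathcal E}$. By Definition~\ref{def:finite-dimensional}(3), $\mathcal E$ is precisely the set of Boolean atoms of $\mathcal B_{\mathcal E}$.

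This is the main obstacle, because a priori $\mathcal B_{\mathcal E}$ need not be finite or atomic. I would argue as follows. The $a_i$ are pairwise orthogonal in $\mathcal B_{\mathcal E}$ and $\bigsqcup_i a_i=\mathbf 1$, as noted after Definition~\ref{def:finite-dimensional}. If the join were some $y<\mathbf 1$, then $y^\perp\neq\mathbf 0$ would lie above a global atom, since $(\mathcal V,\Pi)$ is atomic. That atom $b$ satisfies $b\le y^\perp$, so $b$ is orthogonal to every $a_i$. Then $\mathcal E\cup\{b\}$ would be a set of pairwise orthogonal atoms properly containing the frame $\mathcal E$, contradicting maximality. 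Hence $\bigsqcup_i a_i=\mathbf 1$.

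Boolean distributivity in $\mathcal B_{\mathcal E}$ then gives
\[
x=x\sqcap\mathbf 1=\bigsqcup_{i}(x\sqcap a_i).
\]
Each $x\sqcap a_i$ lies below the atom $a_i$; by (P6) the componentwise order is the global order, so $x\sqcap a_i$ is either $\mathbf 0$ or $a_i$. Put $I=\{i: a_i\le x\}$. Then $x=\bigsqcup_{i\in I}a_i$, with the empty join equal to $\mathbf 0$ covering the case $x=\mathbf 0$.

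Finally, take the Borel set $E=\{i:i\in I\}$, which is a finite set and hence Borel. Then $\lambda_i\in E$ if and only if $i\in I$, so $P(E)=\bigsqcup_{i\in I}a_i=x$. Proposition~\ref{prop:local-operations} guarantees that this join does not depend on which component it is computed in.
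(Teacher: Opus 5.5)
Your proof is correct and follows the paper's route: pick a frame whose maximal component contains $x$, write $x$ as a join of that frame's atoms, and take $E$ to be the set of eigenvalues indexed by those atoms. The one difference is that the paper simply asserts that $\mathcal B_{\mathcal F}$ is the Boolean algebra generated by its atoms, whereas you prove it: the atoms join to $\mathbf 1$ by maximality of the frame together with global atomicity, and distributivity then expresses $x$ as a join of atoms. This is a worthwhile addition, since Definition~\ref{def:finite-dimensional} never states that components are finite.
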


\begin{proof}
By Definition~\ref{def:finite-dimensional}, there is a frame
$\mathcal F=\{a_1,\dots,a_n\}$ whose maximal component
$\mathcal B_{\mathcal F}$ contains $x$.  Since
$\mathcal B_{\mathcal F}$ is the Boolean algebra generated by its atoms,
there is a subset $I\subseteq\{1,\dots,n\}$ such that
$x=\bigsqcup_{i\in I}a_i$.  Choose pairwise distinct real numbers
$\lambda_1,\dots,\lambda_n$, define $P$ by
\eqref{eq:finite-observable}, and take
$E=\{\lambda_i:i\in I\}$.  Then $P(E)=x$.
\end{proof}

\subsection{Two observables on $\mathcal V_{12}$}
\label{subsec:obsv12}

Consider $(\mathcal V_{12},\Pi^\circ)$ from
Figure~\ref{fig:v12-pba}.  Let
$\mathsf R=\{r_1,r_2,r_3\}$ be a set of three distinct real numbers.
Using the frame $\{5,7,3\}$ of $\mathcal B_1$, define $P_0$ by
\begin{equation}
\label{eq:obs1}
P_0(E)=
\bigsqcup_{r_i\in E} a_i,
\qquad (a_1,a_2,a_3)=(5,7,3).
\end{equation}
Equivalently, its values on subsets of its spectrum are
\[
\begin{array}{c|cccccccc}
S
&\emptyset&\{r_1\}&\{r_2\}&\{r_3\}
&\{r_1,r_2\}&\{r_1,r_3\}&\{r_2,r_3\}&\mathsf R\\ \hline
P_0(S)&0&5&7&3&2&6&4&1.
\end{array}
\]
For an arbitrary Borel set $E$, the value $P_0(E)$ depends only on
$E\cap\mathsf R$.

Likewise, using the frame $\{11,7,9\}$ of $\mathcal B_2$ and distinct real
numbers $\mathsf T=\{t_1,t_2,t_3\}$, define
\begin{equation}
\label{eq:obs2}
Q_0(E)=
\bigsqcup_{t_i\in E} b_i,
\qquad (b_1,b_2,b_3)=(11,7,9).
\end{equation}
Thus
\[
\begin{array}{c|cccccccc}
S
&\emptyset&\{t_1\}&\{t_2\}&\{t_3\}
&\{t_1,t_2\}&\{t_1,t_3\}&\{t_2,t_3\}&\mathsf T\\ \hline
Q_0(S)&0&11&7&9&8&6&10&1.
\end{array}
\]
The observables $P_0$ and $Q_0$ are not compatible.  Indeed,
$P_0(\{r_1,r_2\})=2$ and $Q_0(\{t_3\})=9$, while
$2\not\simeq 9$.  The contrapositive of Corollary~\ref{cor:comobs} therefore
implies $P_0\not\rightleftharpoons Q_0$.

\subsection{Intrinsic and extrinsic formul\ae}
\label{subsec:intrinsic}

An ordered pair $(P,E)$, where $P$ is an observable and $E\in\mathbb B$, is
called an \emph{intrinsic formula}.  Its intended reading is
\begin{center}
\emph{the value of the observable $P$ lies in $E$}.
\end{center}
Let $\mathcal I$ denote the set of all intrinsic formul\ae.  It carries the
canonical map
\begin{equation}
\label{eq:trans}
\widehat g:\mathcal I\longrightarrow\mathcal V,
\qquad
\widehat g(P,E)=P(E).
\end{equation}
Proposition~\ref{prop:vertex-as-event} shows that $\widehat g$ is surjective.
Consequently it induces a consistent theory $\mathcal T_{\widehat g}$ by the
construction of Section~\ref{sec:mal2}.  Among its axioms are
\begin{enumerate}
\item[($\alpha_1$)] $(P,\mathbb R)$;
\item[($\alpha_2$)] $\neg(P,\emptyset)$;
\item[($\alpha_3$)] $(P,E)\to(Q,F)$ whenever $P(E)\le Q(F)$;
\item[($\alpha_4$)] $(P,E)\to\neg(Q,F)$ whenever $P(E)\perp Q(F)$.
\end{enumerate}
In particular, $E\subseteq F$ implies
$(P,E)\to(P,F)$, and $E\cap F=\emptyset$ implies
$(P,E)\to\neg(P,F)$.

An initial formula not presented as a pair $(P,E)$ is called an
\emph{extrinsic formula}.  The next definition relates the two presentations.

\begin{definition}
\label{def:syn}
Let $U$ be a set of extrinsic initial formul\ae\ with canonical surjection
$g:U\twoheadrightarrow\mathcal V$.  An intrinsic formula $(P,E)$ is
\emph{synonymous} with $p\in U$, written
$p\sim_{\mathrm{syn}}(P,E)$, if
\[
g(p)=P(E).
\]
\end{definition}

By Proposition~\ref{prop:vertex-as-event}, every extrinsic initial formula has
at least one intrinsic synonym.  To express logical equivalence, work in the
combined set of initial formul\ae
$\mathcal H=U\sqcup\mathcal I$ and define
$g^*:\mathcal H\to\mathcal V$ by
$g^*|_U=g$ and $g^*|_{\mathcal I}=\widehat g$.
Theorem~\ref{thm:equiv} then yields
\[
g^*(x)=g^*(y)
\quad\Longleftrightarrow\quad
(x\leftrightarrow y)\in\mathcal T_{g^*}.
\]
Hence synonymous initial formul\ae\ are logically equivalent in the combined
theory.  Synonymy does not identify the syntactic expressions themselves; it
identifies their common image in the core.

\subsection{A lexicon for $\mathcal V_{12}$}
\label{sec:lexicon}

For $U_{12}$, $g$, $P_0$, and $Q_0$, the following are convenient
spectrum-reduced representatives of the synonymy classes:
\begin{align*}
&p_0\sim_{\mathrm{syn}}(P_0,\emptyset)
      \sim_{\mathrm{syn}}(Q_0,\emptyset),
&&p_6\sim_{\mathrm{syn}}(P_0,\{r_1,r_3\})
      \sim_{\mathrm{syn}}(Q_0,\{t_1,t_3\}),\\
&p_1\sim_{\mathrm{syn}}(P_0,\mathsf R)
      \sim_{\mathrm{syn}}(Q_0,\mathsf T),
&&p_7\sim_{\mathrm{syn}}(P_0,\{r_2\})
      \sim_{\mathrm{syn}}(Q_0,\{t_2\}),\\
&p_2\sim_{\mathrm{syn}}(P_0,\{r_1,r_2\}),
&&p_8\sim_{\mathrm{syn}}(Q_0,\{t_1,t_2\}),\\
&p_3\sim_{\mathrm{syn}}(P_0,\{r_3\}),
&&p_9\sim_{\mathrm{syn}}(Q_0,\{t_3\}),\\
&p_4\sim_{\mathrm{syn}}(P_0,\{r_2,r_3\}),
&&p_{10}\sim_{\mathrm{syn}}(Q_0,\{t_2,t_3\}),\\
&p_5\sim_{\mathrm{syn}}(P_0,\{r_1\}),
&&p_{11}\sim_{\mathrm{syn}}(Q_0,\{t_1\}).
\end{align*}
These representatives are not unique: replacing an event by any Borel set
having the same intersection with the relevant spectrum gives the same vertex.
The vertices $0,1,6,$ and $7$ also lie in both maximal components and therefore
have representatives involving either $P_0$ or $Q_0$.

\section{A macroscopic world}
\label{macw}

In the spirit of introductory texts on classical logic \cite{Copi1979}, let the
initial formul\ae\ be statements about characters in a fictitious macroscopic
world.  For example, set
\begin{quote}
$p_0=$ ``Rover, the dog, is aggressive''; \quad
$p_1=$ ``Peter loves football''; \quad
$p_2=$ ``Alice, the cat, is boisterous'';\\
$p_3=$ ``Alice, the cat, is hungry''; \quad
$p_4=$ ``Mother prepares a meal''; \quad
$p_5=$ ``Rover looks for somewhere to hide''.
\end{quote}
The remaining $p_n$ may be assigned any further statements suitable to the
story.  With $g(p_n)=n$ as in~\eqref{spkl1}, the complete axiom set is the one
induced by clauses (g1)--(g4) of Section~\ref{sec:mal2}.  A few representative
axioms are:
\begin{enumerate}
\item[(A1)] $\neg p_0$: ``Rover is not aggressive,'' since
$g(p_0)=\mathbf 0$;
\item[(A2)] $p_1$: ``Peter loves football,'' since
$g(p_1)=\mathbf 1$;
\item[(A3)] $p_5\to p_2$: ``If Rover looks for somewhere to hide, then Alice is
boisterous,'' since $g(p_5)\le g(p_2)$;
\item[(A4)] $p_5\to\neg p_3$: ``If Rover looks for somewhere to hide, then Alice
is not hungry,'' since $g(p_5)\perp g(p_3)$;
\item[(A5)] $p_3\to p_4$: ``If Alice is hungry, then Mother prepares a meal,''
since $g(p_3)\le g(p_4)$;
\item[(A6)] $p_8\to\neg p_9$, since $g(p_8)\perp g(p_9)$;
\item[(A7)] $p_9\to p_{10}$, since $g(p_9)\le g(p_{10})$;
\item[(A8)] $p_{11}\to\neg p_{10}$, since
$g(p_{11})\perp g(p_{10})$.
\end{enumerate}
Let $\mathcal T_{12}$ denote the theory generated by the \emph{full} axiom set
induced by $g$, not merely by the displayed sample.  By
Theorem~\ref{thmC}, it is consistent, and by
Theorem~\ref{thm:corepba}, its core is isomorphic to
$(\mathcal V_{12},\Pi^\circ)$.  Thus an entirely classical propositional theory
can have a non-Boolean core even when its initial formul\ae\ describe a
macroscopic fiction.  The example is an algebraic illustration; by itself it
does not assert that the fictional household obeys a physical quantum theory.

If $\sigma$ is a permutation of $\{0,\dots,11\}$ and $g$ is replaced by
$\check g(p_n)=\sigma(n)$, one obtains another consistent theory whose core is
isomorphic to the same partial Boolean algebra, but whose English rules may be
more or less plausible.

\section{Heuristics of truth and quantum-like uncertainty}
\label{fuzzy}

We have seen in Theorem~\ref{notneg} that orthocomplementation is not
negation: if $g(p)=g(q)^\perp$, then $p$ and $q$ cannot be true together,
but they can both be false.  In the language of intrinsic formul\ae\ this
becomes
\begin{center}
$(P,E)$ and $(P,E')$ cannot both be true, but they can both be false.
\end{center}
At first sight this conflicts with the familiar picture of an ideal
measurement pointer selecting a single value of $P$, which must lie either in
$E$ or in its complement $E'$.

The resolution suggested in \cite{Malhas1993} is to allow an
\emph{extended}, or \emph{fuzzy}, pointer.  Such a pointer need not select one
real number; it may determine only a set of values compatible with the current
model.  This leads to a model-theoretic notion of uncertainty that does not
presuppose a probability measure and is not attributed to imperfect
instrumentation.

\begin{definition}
\label{def:pointer}
Let $\mathfrak M$ be a model of $\mathcal T_{\widehat g}$ and let $P$ be an
observable with finite spectrum $\spec(P)$.  The \emph{pointer set} of
$P$ in $\mathfrak M$ is
\[
\Delta_{\mathfrak M}(P)
 =\bigcap\{\,E\cap\spec(P):(P,E)\in\mathfrak M\,\}.
\]
We say that $P$ is \emph{actualizable} in $\mathfrak M$ when
$\Delta_{\mathfrak M}(P)\ne\emptyset$, and in that case the
\emph{uncertainty} of $P$ in $\mathfrak M$ is the nonnegative integer
\[
\|\Delta_{\mathfrak M}(P)\|
 =|\Delta_{\mathfrak M}(P)|-1 .
\]
We say that $P$ is \emph{sharp} in $\mathfrak M$ when
$|\Delta_{\mathfrak M}(P)|=1$, that is, when $P$ is actualizable with
uncertainty $0$.
\end{definition}

The defining family is nonempty because $(P,\spec(P))$ is true in every
model.  Since the spectrum is finite, the intersection is well defined.  It
is important, however, that $\Delta_{\mathfrak M}(P)$ is a \emph{derived
summary} of the true $P$-statements.  In general, the model axioms give upward
closure under implication but not closure under intersections; therefore one
must not assume that
$(P,\Delta_{\mathfrak M}(P))$ itself belongs to $\mathfrak M$.

\begin{remark}[Why actualizability must be assumed]
\label{rem:empty-pointer}
The restriction to actualizable observables is not vacuous: a model may make
the pointer set of an observable \emph{empty}, in which case $|\Delta|-1=-1$
and the expression would not measure anything.  Take
$(\mathcal V_{12},\Pi^\circ)$ with $g(p_n)=n$ and let
$C=\{p_2,p_4,p_6\}$, which is one of the clusters exhibited in
Section~\ref{sec:clusters}.  The associated model is
\[
C^*=\{p_1,p_2,p_4,p_6\},
\]
and the true $(P_0,\cdot)$ formul\ae\ correspond to the events
$\{r_1,r_2\}$, $\{r_2,r_3\}$, $\{r_1,r_3\}$ and $\mathsf R$, whose
intersection is empty.  Thus $\Delta_{C^*}(P_0)=\emptyset$: the model asserts
three two-valued restrictions on $P_0$ that are pairwise, but not jointly,
satisfiable by a single eigenvalue.  Of the $53$ models of $\mathcal T_{12}$,
exactly $11$ leave one of $P_0,Q_0$ non-actualizable.  Such a model carries
consistent information about $P_0$---it is a genuine model of a consistent
theory---but no single outcome of $P_0$ is compatible with all of it, and the
measurement-update rule of Section~\ref{sec:dynamics} is correspondingly
undefined for $P_0$ there.  This is the model-theoretic trace of the fact that
$\mathfrak M_P$ need not be closed under meets.
\end{remark}

\subsection{A simple example}
\label{subsec:simpleunc}

Take $(\mathcal V_{12},\Pi^\circ)$, $U_{12}$, $g$, $P_0$, and $Q_0$ as above,
and consider the pure-state model based on the primitive formula
$p_5=(P_0,\{r_1\})$:
\[
\mathfrak M_5=\{p_5\}^*=\{p_5,p_2,p_6,p_1\},
\]
which corresponds to the vertices $\{5,2,6,1\}$.
Using the lexicon of Section~\ref{sec:lexicon}, the true intrinsic formul\ae\
of the form $(P_0,\cdot)$ correspond to
\[
\{r_1\},\qquad \{r_1,r_2\},\qquad \{r_1,r_3\},\qquad
\{r_1,r_2,r_3\}.
\]
Their intersection is $\{r_1\}$, and therefore
\[
\Delta_{\mathfrak M_5}(P_0)=\{r_1\},
\qquad
\|\Delta_{\mathfrak M_5}(P_0)\|=0.
\]
Thus $P_0$ is sharp in $\mathfrak M_5$.

The non-commuting observable $Q_0$ behaves differently.  The true
$(Q_0,\cdot)$ formul\ae\ correspond to
\[
\{t_1,t_3\}\quad(\text{vertex }6),
\qquad
\{t_1,t_2,t_3\}\quad(\text{vertex }1).
\]
Hence
\[
\Delta_{\mathfrak M_5}(Q_0)=\{t_1,t_3\},
\qquad
\|\Delta_{\mathfrak M_5}(Q_0)\|=1.
\]
The value of $Q_0$ is therefore not sharply determined in this model.  This is
\emph{model-relative} uncertainty: it cannot be removed while the model
$\mathfrak M_5$ is held fixed, although later sections show that the algebra
$\mathcal V_{12}$ as a whole admits dispersion-free refinements.

\subsection{Uncertainty as a count of locally admissible outcomes}
\label{subsec:outcomes}

Let $P$ have finite spectrum and let $\mathcal B_P$ be its range component
(Definition~\ref{def:obs}), a finite Boolean algebra.  Write $\mathcal F_P$ for
the frame of atoms of $\mathcal B_P$.  If $a_r=P(\{r\})$ is the atom
corresponding to an eigenvalue $r$, define the $P$-part of the model by
\[
\mathfrak M_P
 =\{\,P(E):(P,E)\in\mathfrak M\,\}\subseteq\mathcal B_P.
\]

\begin{definition}
\label{def:outcomes}
The set of \emph{locally admissible outcomes} of $P$ in $\mathfrak M$ is
\[
\Outloc{P}{\mathfrak M}
 =\{\,a\in\mathcal F_P:
      a\text{ is orthogonal to no element of }\mathfrak M_P\,\}.
\]
Equivalently, $a\in\Outloc{P}{\mathfrak M}$ when
$a\sqcap x\neq\mathbf 0$ for every $x\in\mathfrak M_P$; equivalently again,
since $a$ is an atom of the finite Boolean algebra $\mathcal B_P$, when
$a\le x$ for every $x\in\mathfrak M_P$.
\end{definition}

Because $\mathcal B_P$ is finite, the last formulation says exactly that
$\Outloc{P}{\mathfrak M}$ is the set of atoms below
$\bigsqcap\mathfrak M_P$.  In particular
\[
\bigl|\Outloc{P}{\mathfrak M}\bigr|=0
\quad\Longleftrightarrow\quad
\bigsqcap\mathfrak M_P=\mathbf 0,
\]
which is the situation of Remark~\ref{rem:empty-pointer}.

The adjective ``local'' is essential: this definition tests compatibility
with the information about $P$ contained in the Boolean component
$\mathcal B_P$.  It does not assert that orthogonality to the whole model is
determined by $\mathfrak M\cap\mathcal B_P$.

\begin{proposition}
\label{prop:outcount}
For every model $\mathfrak M$ and observable $P$ with finite spectrum, the
eigenvalues in $\Delta_{\mathfrak M}(P)$ are exactly those whose atoms lie in
$\Outloc{P}{\mathfrak M}$.  Consequently $P$ is actualizable in
$\mathfrak M$ if and only if $\Outloc{P}{\mathfrak M}\ne\emptyset$, and in
that case
\[
\|\Delta_{\mathfrak M}(P)\|
 =\bigl|\Outloc{P}{\mathfrak M}\bigr|-1.
\]
\end{proposition}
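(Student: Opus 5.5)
The plan is to prove the first assertion by fixing an eigenvalue $\lambda_i\in\spec(P)$ with atom $a_i=P(\{\lambda_i\})$, and showing that $\lambda_i\in\Delta_{\mathfrak M}(P)$ if and only if $a_i\le x$ for every $x\in\mathfrak M_P$. The second and third assertions then follow by counting. Since $\lambda\mapsto P(\{\lambda\})$ is a bijection from $\spec(P)$ onto $\mathcal F_P$ (the $\lambda_i$ are distinct and the $a_i$ are distinct atoms of a frame), the cardinalities are equal. Then $\Delta_{\mathfrak M}(P)\ne\emptyset$ iff $\Outloc{P}{\mathfrak M}\ne\emptyset$, and subtracting $1$ gives the displayed identity in the actualizable case.

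The key tool is the elementary Boolean fact inside the range component $\mathcal B_P$. By \eqref{eq:finite-observable}, $P(E)=\bigsqcup_{\lambda_j\in E}a_j$. Because the $a_j$ are the atoms of the finite Boolean algebra $\mathcal B_P$, we have $a_i\le P(E)$ iff $\lambda_i\in E$, that is, iff $\lambda_i\in E\cap\spec(P)$. One direction is immediate. For the other, if $\lambda_i\notin E$ then $P(E)\le P(\{\lambda_i\}')=a_i^\perp$, so $a_i\sqcap P(E)=\mathbf 0$ and $a_i\not\le P(E)$, because $a_i\ne\mathbf 0$. I would state this as a one-line lemma computed in $\mathcal B_P$, using Proposition~\ref{prop:local-operations} to justify that the computation does not depend on the component.

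With that fact, the chain is
\[
\lambda_i\in\Delta_{\mathfrak M}(P)
\iff \forall (P,E)\in\mathfrak M:\ \lambda_i\in E\cap\spec(P)
\iff \forall (P,E)\in\mathfrak M:\ a_i\le P(E)
\iff \forall x\in\mathfrak M_P:\ a_i\le x .
\]
The last step just unfolds the definition $\mathfrak M_P=\{P(E):(P,E)\in\mathfrak M\}$. By the equivalent formulations recorded in Definition~\ref{def:outcomes}, the final condition says exactly that $a_i\in\Outloc{P}{\mathfrak M}$.

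The main point needing care is the equivalence, inside Definition~\ref{def:outcomes}, between ``orthogonal to no element of $\mathfrak M_P$'' and ``$a\le x$ for all $x\in\mathfrak M_P$''. The paper asserts this, but I would verify it explicitly. For an atom $a$ and any $x\in\mathcal B_P$, either $a\le x$ or $a\le x^\perp$, and these are mutually exclusive since $a\ne\mathbf 0$. Hence $a\perp x$ iff $a\not\le x$. Orthogonality here is the pba notion of Definition~\ref{def:orth}, and it applies because $a$ and $x$ share the component $\mathcal B_P$. This is the only place where the atomicity of the $a_i$ within $\mathcal B_P$ is genuinely used.
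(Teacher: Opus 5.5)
Your proof is correct and is essentially the paper's argument: both rest on the Boolean fact in $\mathcal B_P$ that, for the atom $a_r=P(\{r\})$ and $x=P(E)$, one has $a_r\not\perp x\iff a_r\le x\iff r\in E$, and then unfold the definitions of $\Delta_{\mathfrak M}(P)$ and $\Outloc{P}{\mathfrak M}$. Your bijection $\lambda\mapsto P(\{\lambda\})$ between $\spec(P)$ and $\mathcal F_P$ for the counting step, and your check of the atom dichotomy, spell out details that the paper leaves implicit.
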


\begin{proof}
Let $a_r=P(\{r\})$ be an atom of $\mathcal B_P$.  Since $\mathcal B_P$ is a
finite Boolean algebra, for every $x=P(E)\in\mathcal B_P$ one has
\[
a_r\not\perp x
\quad\Longleftrightarrow\quad
 a_r\le x
\quad\Longleftrightarrow\quad
 r\in E.
\]
Therefore $a_r$ is orthogonal to no element of $\mathfrak M_P$ if and only if
$r$ belongs to every $E$ for which $(P,E)\in\mathfrak M$.  This is precisely
the condition $r\in\Delta_{\mathfrak M}(P)$.
\end{proof}

For the model $\mathfrak M_5$ one obtains
\[
\Outloc{{P_0}}{\mathfrak M_5}=\{5\},
\qquad
\Outloc{{Q_0}}{\mathfrak M_5}=\{9,11\},
\]
which reproduces the uncertainties $0$ and $1$ computed above.

\section{Measurement dynamics and back-action}
\label{sec:dynamics}

The preceding notion of uncertainty is static.  To discuss successive
measurements, an additional update rule is required.  We now adopt a natural
rule and then derive its consequences from the cluster structure.  Thus the
update itself is a modelling postulate, whereas repeatability and the
back-action calculations below follow from the theory once that rule is fixed.

The pointer analysis of Section~\ref{fuzzy} is local to the Boolean component
of the measured observable.  An actual update must also remain consistent with
the whole model.  This motivates a second, global notion of admissibility.

\begin{definition}
\label{def:global-outcomes}
Let $P$ have frame $\mathcal F_P$.  The set of \emph{globally admissible
outcomes} of $P$ in a model $\mathfrak M$ is
\[
\Outglob{P}{\mathfrak M}
 =\{\,a\in\mathcal F_P:
      a\text{ is orthogonal to no }g(q),\ q\in\mathfrak M\,\}.
\]
Equivalently, if $e_a$ is any primitive formula with $g(e_a)=a$, then
$a\in\Outglob{P}{\mathfrak M}$ exactly when
$\mathfrak M\cup\{e_a\}$ is a cluster.
\end{definition}

For the intrinsic language one may take $e_a=(P,\{r\})$ when
$a=P(\{r\})$.  Global admissibility implies that the proposed outcome does not
contradict any formula already true in the model.  Clearly
\[
\Outglob{P}{\mathfrak M}\subseteq\Outloc{P}{\mathfrak M},
\]
since $\mathfrak M_P$ is part of the $g$-image of $\mathfrak M$; the two sets
need not coincide in an arbitrary partial Boolean algebra, though in the finite
$\mathcal V_{12}$ calculations below they agree for the outcomes under
consideration.  Both may be empty, by Remark~\ref{rem:empty-pointer}; the
update rule of Definition~\ref{def:update} is then simply not applicable to
$P$ in $\mathfrak M$.

\begin{definition}
\label{def:update}
Suppose a measurement of $P$ in $\mathfrak M$ returns a globally admissible
outcome $a\in\Outglob{P}{\mathfrak M}$.  We postulate that
the updated model is the pure-state model based on $a$:
\[
\mathfrak M\ \xmapsto{\ P,\,a\ }\ \{e_a\}^*
 =\{\,q\in U:a\le g(q)\,\}.
\]
This map is called the \emph{measurement-update map}.
\end{definition}

The update is well defined because $a$ is an atom, hence $\{e_a\}$ is a
cluster, and Theorem~\ref{thmC} implies that $\{e_a\}^*$ is a model.

\begin{proposition}[Repeatability]
\label{prop:repeat}
Let $a$ be an outcome atom of $P$.  In the updated model $\{e_a\}^*$, the
observable $P$ is sharp with pointer set equal to the eigenvalue corresponding
to $a$.  Moreover,
\[
\Outglob{P}{\{e_a\}^*}=\{a\}.
\]
Consequently, an immediate repetition of the same measurement returns the same
outcome and leaves the model unchanged.
\end{proposition}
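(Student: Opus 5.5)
The plan is to compute the $P$-part of the updated model $\{e_a\}^*$ directly and then read off the pointer set and the global outcome set. Throughout, write $\mathcal F_P=\{a_1,\dots,a_n\}$ with eigenvalues $\lambda_1,\dots,\lambda_n$, and suppose $a=a_k=P(\{\lambda_k\})$.

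\emph{Step 1: the true $P$-statements.} By Definition~\ref{def:update} and Corollary~\ref{cor:mp}, an intrinsic formula $(P,E)$ lies in $\{e_a\}^*$ exactly when $a\le P(E)$. Inside the finite Boolean algebra $\mathcal B_P$ we have $P(E)=\bigsqcup_{\lambda_i\in E}a_i$. An atom lies below a join of atoms of the same frame exactly when it is one of them, so $a_k\le P(E)$ if and only if $\lambda_k\in E$.

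\emph{Step 2: the pointer set.} By Step~1 the sets $E\cap\spec(P)$ over true formul\ae\ $(P,E)$ are exactly the subsets of $\spec(P)$ containing $\lambda_k$. Among them is $\{\lambda_k\}$, coming from $(P,\{\lambda_k\})$. Hence their intersection is $\Delta_{\{e_a\}^*}(P)=\{\lambda_k\}$, so $P$ is sharp with the stated pointer set. Equivalently, by Proposition~\ref{prop:outcount}, $\Outloc{P}{\{e_a\}^*}=\{a\}$.

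\emph{Step 3: the global outcomes.} We have the inclusion $\Outglob{P}{\{e_a\}^*}\subseteq\Outloc{P}{\{e_a\}^*}=\{a\}$, so it remains to show $a$ itself is globally admissible. We must check that $a$ is orthogonal to no $g(q)$ with $q\in\{e_a\}^*$. Such $q$ satisfy $a\le g(q)$. If also $a\perp g(q)$, then $a\le g(q)^\perp$, and together these give $a\le g(q)\sqcap g(q)^\perp=\mathbf 0$, computed in a common component. This contradicts that $a$ is an atom. Equivalently, $\{e_a\}^*\cup\{e_a\}=\{e_a\}^*$ is a model and hence a cluster by Theorem~\ref{thm:models-are-clusters}, which is the cleaner route.

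\emph{Step 4: repeatability.} A repeated measurement of $P$ can only return an element of $\Outglob{P}{\{e_a\}^*}=\{a\}$. The update then produces $\{e_a\}^*$ again, so the model is unchanged.

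The only mildly delicate point is Step~1's claim that $a_k\le P(E)$ iff $\lambda_k\in E$. This needs the order relation of (P6) restricted to $\mathcal B_P$ to agree with Boolean order there, which is immediate from (P6) and Proposition~\ref{prop:local-operations}. I expect no real obstacle beyond that.
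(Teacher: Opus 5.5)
Your proposal is correct and follows essentially the paper's argument: $e_a\in\{e_a\}^*$ pins the pointer to the eigenvalue of $a$, and $a$ stays globally admissible because it is orthogonal to nothing above it. The other atoms of the frame are excluded, which you obtain via $\Outglob{P}{\{e_a\}^*}\subseteq\Outloc{P}{\{e_a\}^*}=\{a\}$ and the paper obtains directly from their orthogonality to $a$; your Step~1 also makes explicit that every true $P$-event contains $\lambda_k$, which the paper's one-line sharpness claim leaves implicit.
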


\begin{proof}
The formula $e_a$ belongs to $\{e_a\}^*$, so the singleton event corresponding
to $a$ is true and the pointer of $P$ is sharp.  Every other atom $b$ of the
same frame is orthogonal to $a$, and hence is not globally admissible in a
model containing $e_a$.  The atom $a$ is not orthogonal to any vertex above it,
so it remains admissible.  Applying the update rule again therefore returns
$\{e_a\}^*$.
\end{proof}

The effect on an incompatible observable can be read explicitly from
$\mathcal V_{12}$.

\begin{theorem}[Back-action in $\mathcal V_{12}$]
\label{thm:backaction}
Let $P_0$ and $Q_0$ be the observables of
Section~\ref{subsec:obsv12}, with frames
\[
\mathcal F_{P_0}=\{5,7,3\},
\qquad
\mathcal F_{Q_0}=\{11,7,9\}.
\]
After a $Q_0$-measurement with outcome $a$, the local pointer of $P_0$ in the
updated model is given by
\[
\Outloc{{P_0}}{\{a\}^*}=
\begin{cases}
\{7\}, & a=7,\\
\{3,5\}, & a=9,\\
\{3,5\}, & a=11.
\end{cases}
\]
Thus $P_0$ remains sharp exactly for the shared atom $7$; either non-shared
outcome makes $P_0$ uncertain with uncertainty equal to $1$.
\end{theorem}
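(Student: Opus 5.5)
The plan is a direct finite computation in $\mathcal V_{12}$ using Definition~\ref{def:outcomes} and Proposition~\ref{prop:outcount}. For each of the three outcomes $a\in\{7,9,11\}$, I will do three things. First, write down the updated model $\{e_a\}^*=\{q: a\le g(q)\}$ by reading the up-set of $a$ from the Hasse diagram of $\mathcal B_2$; this uses Definition~\ref{def:update}, and by Theorem~\ref{thmC} it is a model. Next, extract the $P_0$-part $\mathfrak M_{P_0}$, i.e.\ those vertices of this up-set that lie in $\mathcal B_1$, the range component of $P_0$. Finally, compute $\bigsqcap\mathfrak M_{P_0}$ in $\mathcal B_1$ and list the atoms of $\{5,7,3\}$ below it.

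Concretely, in $\mathcal B_2$ we have $11\sqcup7=8$, $11\sqcup9=6$ and $7\sqcup9=10$. This gives the following up-sets.
\begin{itemize}
\item For $a=7$ the up-set is $\{7,8,10,1\}$. Its $P_0$-part is $\{7,1\}$, since $8,10\notin\mathcal B_1$. The meet is $7$, so the only admissible atom of $\mathcal F_{P_0}$ is $7$.
\item For $a=9$ the up-set is $\{9,6,10,1\}$. Its $P_0$-part is $\{6,1\}$, and $6=5\sqcup3$ in $\mathcal B_1$ (the table of $P_0$ gives $P_0(\{r_1,r_3\})=6$). The atoms below $6$ are $5$ and $3$, so $\Outloc{P_0}{\{9\}^*}=\{3,5\}$.
\item For $a=11$ the up-set is $\{11,6,8,1\}$. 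Its $P_0$-part is again $\{6,1\}$, which yields $\{3,5\}$ in the same way.
\end{itemize}

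By Proposition~\ref{prop:outcount}, in each case $P_0$ is actualizable with uncertainty $|\Outloc{P_0}{\cdot}|-1$. This is $0$ for $a=7$ and $1$ otherwise, which gives the final sentence of the statement.

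Two things need checking. The first is that each outcome is actually obtainable by the update rule. It would suffice to start from a model where $Q_0$ has these as globally admissible outcomes, for instance $\{p_1\}^*$. Strictly, however, the displayed formula concerns only the updated model $\{a\}^*$, so this is a side remark rather than a proof obligation. The second, and the main obstacle, is determining the $P_0$-part correctly. Here I must be careful that $\mathfrak M_{P_0}$ is defined through intrinsic formul\ae\ $(P_0,E)$. A vertex $x$ of the up-set contributes exactly when $x=P_0(E)$ for some $E$, i.e.\ when $x\in\mathcal B_1$, and the shared vertices $6$ and $7$ have $P_0$-synonyms by the lexicon of Section~\ref{sec:lexicon}. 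Working in the combined language $\mathcal H$ together with Theorem~\ref{thm:equiv}, synonymous formul\ae\ have the same truth value in any model. Hence the true $(P_0,E)$ are exactly those with $P_0(E)$ in the up-set of $a$, which justifies the reading of the Hasse diagram above.
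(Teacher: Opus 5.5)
Your approach is the same as the paper's: form the up-set $\{a\}^*$, keep its members that lie in $\mathcal B_1$, and list the atoms of $\{5,7,3\}$ lying below all of them. You phrase the last step through $\bigsqcap\mathfrak M_{P_0}$ and the paper phrases it through orthogonality to the coatom $6$; these agree by Definition~\ref{def:outcomes}. Your cases $a=9$ and $a=11$ are correct. Reading their up-sets inside $\mathcal B_2$ is legitimate, because $9$ and $11$ are compatible only with vertices of $\mathcal B_2$, but you should state that justification.

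The case $a=7$, however, contains a false intermediate claim. The order on $\mathcal V_{12}$ is the global one of condition~(P6), and $7$ is a shared atom. In $\mathcal B_1$ it also lies below the coatoms $2=5\sqcup 7$ and $4=7\sqcup 3$. So the updated model is $\{7\}^*=\{7,2,4,8,10,1\}$, as the paper states and uses again in Section~\ref{sec:hv}, not $\{7,8,10,1\}$. Its $P_0$-part is $\{7,2,4,1\}$ rather than $\{7,1\}$. Your sentence ``its $P_0$-part is $\{7,1\}$, since $8,10\notin\mathcal B_1$'' overlooks exactly the vertices of the up-set that lie in $\mathcal B_1$ but not in $\mathcal B_2$.

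The final answer is unaffected. The atom $7$ itself belongs to the $P_0$-part and everything else in it lies above $7$, so $\bigsqcap\mathfrak M_{P_0}=7$ either way. Equivalently, the true $P_0$-events $\{r_2\},\{r_1,r_2\},\{r_2,r_3\},\mathsf R$ intersect in $\{r_2\}$. Still, you should correct the computation by taking up-sets in all of $\mathcal V_{12}$ rather than only in $\mathcal B_2$.
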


\begin{proof}
For $a=7$,
\[
\{7\}^*=\{7,2,4,8,10,1\}.
\]
Its $\mathcal B_1$-part contains the atom $7$, so the only locally admissible
$P_0$-outcome is $7$.

For $a=9$,
\[
\{9\}^*=\{9,10,6,1\}.
\]
The intersection with $\mathcal B_1$ is $\{6,1\}$.  In $\mathcal B_1$ the
coatom $6$ contains precisely the atoms $3$ and $5$, and is orthogonal to the
atom $7$.  Hence
$\Outloc{{P_0}}{\{9\}^*}=\{3,5\}$.

For $a=11$,
\[
\{11\}^*=\{11,8,6,1\},
\]
and again the $\mathcal B_1$-part is $\{6,1\}$.  The same calculation gives
$\Outloc{{P_0}}{\{11\}^*}=\{3,5\}$.
\end{proof}

\paragraph{Worked transition.}
Start from
$\mathfrak M_5=\{5,2,6,1\}$, where $P_0$ is sharp with value $r_1$ and
\[
\Outglob{{Q_0}}{\mathfrak M_5}=\{9,11\}.
\]
Suppose the $Q_0$-measurement returns $9=(Q_0,\{t_3\})$.  The update rule gives
\[
\mathfrak M_5\ \xmapsto{\ Q_0,9\ }\ \{9\}^*=\{9,10,6,1\}.
\]
The observable $Q_0$ is now sharp, whereas
\[
\Outloc{{P_0}}{\{9\}^*}=\{3,5\},
\qquad
\|\Delta_{\{9\}^*}(P_0)\|=1.
\]
Thus the previously sharp $P_0$ becomes uncertain under the adopted update rule.
The update rule is postulated, but the resulting loss of sharpness is forced by
the order and compatibility relations of the partial Boolean algebra.

\begin{remark}
\label{rem:nondet}
The update rule does not assign probabilities to its possible outcomes.  When
more than one globally admissible outcome exists, the abstract theory specifies
the possible successor models but not their relative frequencies.  In a
Hilbert-space realisation, probabilities may be supplied by the usual quantum
state formalism; in the purely combinatorial setting, constructing a probability
rule is a separate problem.
\end{remark}

\section{Hidden variables and the elimination of uncertainty}
\label{sec:hv}

Whether the uncertainties of Section~\ref{fuzzy} can be ``explained away'' by
underlying hidden states depends precisely on whether the core has the
KS-property.
We illustrate the favourable case using $\mathcal V_{12}$.

Recall that $(\mathcal V_{12},\Pi^\circ)$ is $3$-dimensional with exactly two
frames, $\mathcal E_1=\{5,7,3\}$ and $\mathcal E_2=\{11,7,9\}$.
By Theorem~\ref{thm:KS12} it has the KS-property; in fact there are exactly five
KS-functions.  Writing each one by its set of atoms of value~$1$, they are
\[
\{7\},\quad\{5,11\},\quad\{5,9\},\quad\{3,11\},\quad\{3,9\}:
\]
either the shared atom $7$ is selected in both frames, or a non-shared atom is
chosen independently in each frame, giving $1+2\cdot2=5$ possibilities.
Consider, for instance, the assignment
\[
s(5)=1,\quad s(7)=0,\quad s(3)=0,\quad s(11)=1,\quad s(9)=0,
\]
which assigns $1$ to exactly one atom of each frame.
This KS-function extends to a two-valued function
$\overline s:\mathcal V_{12}\to\{0,1\}$ by
\[
\overline s(1)=\overline s(5)=\overline s(11)=\overline s(2)
=\overline s(8)=\overline s(6)=1,
\]
\[
\overline s(0)=\overline s(7)=\overline s(3)=\overline s(9)
=\overline s(4)=\overline s(10)=0 .
\]
A direct inspection of~\eqref{eq:p12}, using Figure~\ref{fig:v12-pba}, shows that the
restriction of $\overline s$ to every component of $(\mathcal V_{12},\Pi^\circ)$
is a Boolean homomorphism; hence $\overline s$ is a two-valued homomorphism on the
partial Boolean algebra, a simple instance of the homomorphisms of \cite{KS}.

Writing down all five such homomorphisms, one verifies that for any two distinct
vertices $x,y$ of $\mathcal V_{12}$ there is one among them that \emph{separates}
$x$ and $y$, sending one to $0$ and the other to $1$.
By the Kochen--Specker embedding theorem (Section~\ref{subsec:ksthm}), it follows
that $\mathcal V_{12}$ embeds into the Boolean algebra of all subsets of a set
$S$, which plays the role of the phase space of a classical system whose hidden
states are the elements of $S$.
In this macroscopic world, then, the quantum-like uncertainties of
Section~\ref{fuzzy} are only \emph{apparent}: they reflect ignorance of the
hidden state, and can be eliminated in principle by determining that state.

\subsection{Dispersion-free models}
\label{subsec:dispfree}

The phrase ``eliminated in principle'' should refer to a global assignment, not
merely to the sharpness of one chosen observable.  The pre-frame formulation of
Section~\ref{ksp} provides the appropriate definition.

\begin{definition}
\label{def:dispersion-free}
A model $\mathfrak M$ of $\mathcal T_g$ is \emph{dispersion-free} if
\[
|\mathfrak M\cap\mathcal F|=1
\]
for every pre-frame $\mathcal F$ of primitive formul\ae.
\end{definition}

Thus a dispersion-free model selects exactly one primitive formula over every
frame.  This is stronger and more precise than requiring a singleton pointer for
an arbitrarily chosen family of observables.  It also connects directly with the
model-theoretic characterization of the KS-property proved in
Theorem~\ref{thm:ksmod}.

\begin{proposition}
\label{prop:disp-sharp}
Every dispersion-free model makes every finite-spectrum observable sharp in the
sense of Definition~\ref{def:pointer}.
\end{proposition}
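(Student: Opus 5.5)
The plan is to work in the intrinsic language, where the pointer set of Definition~\ref{def:pointer} is defined. We regard $\mathfrak M$ as a dispersion-free model of $\mathcal T_{\widehat g}$, or of the combined theory $\mathcal T_{g^*}$. In the latter case only the intrinsic formul\ae\ $(P,E)$ in $\mathfrak M$ enter $\Delta_{\mathfrak M}(P)$, so the argument is the same. Fix a finite-spectrum observable $P$ with frame $\mathcal F_P=\{a_1,\dots,a_n\}$ and eigenvalues $\lambda_1,\dots,\lambda_n$, so that $a_i=P(\{\lambda_i\})$.

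\textbf{Step 1: exhibit a pre-frame attached to $P$.} Each intrinsic formula $(P,\{\lambda_i\})$ is primitive, since $\widehat g(P,\{\lambda_i\})=a_i$ is an atom. The images $a_1,\dots,a_n$ are distinct and form a frame, so $\mathcal F=\{(P,\{\lambda_1\}),\dots,(P,\{\lambda_n\})\}$ is a pre-frame in the sense of Definition~\ref{def:preframe}. Dispersion-freeness (Definition~\ref{def:dispersion-free}) then gives a unique $j$ with $(P,\{\lambda_j\})\in\mathfrak M$.

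\textbf{Step 2: upper bound.} Because $(P,\{\lambda_j\})\in\mathfrak M$, the set $\{\lambda_j\}\cap\spec(P)=\{\lambda_j\}$ is one of the sets intersected in the definition of $\Delta_{\mathfrak M}(P)$. Hence $\Delta_{\mathfrak M}(P)\subseteq\{\lambda_j\}$.

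\textbf{Step 3: nonemptiness.} We show $\lambda_j\in E$ for every $(P,E)\in\mathfrak M$. Suppose instead that $\lambda_j\notin E$ for some such $E$. Then $P(E)=\bigsqcup_{\lambda_i\in E}a_i$ is a join of atoms of $\mathcal B_P$ other than $a_j$, so $P(E)\le a_j^\perp$ inside the component $\mathcal B_P$. Thus $P(E)\perp a_j$. This covers the case $E\cap\spec(P)=\emptyset$, where $P(E)=\mathbf 0$ is orthogonal to everything. Now $\mathfrak M$ would contain the orthogonal pair $(P,E)$, $(P,\{\lambda_j\})$, contradicting Lemma~\ref{lem:notmodel}. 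Hence $\Delta_{\mathfrak M}(P)=\{\lambda_j\}$, so $P$ is actualizable with uncertainty $0$, i.e.\ sharp.

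An alternative finish uses Proposition~\ref{prop:outcount}: Step 3 shows $\Outloc{P}{\mathfrak M}=\{a_j\}$. I expect the only delicate point to be Step 1, namely making sure the dispersion-free hypothesis is applied in the language that actually contains the formul\ae\ $(P,\{\lambda_i\})$, which is why I work with $\widehat g$ or $g^*$ rather than an extrinsic $U$. The remaining steps are routine Boolean reasoning inside the single component $\mathcal B_P$.
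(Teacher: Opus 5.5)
Your proof is correct and follows the same route as the paper: take the pre-frame $\{(P,\{r\}):r\in\spec(P)\}$, let dispersion-freeness select the unique true singleton event $(P,\{r_0\})$, and conclude $\Delta_{\mathfrak M}(P)=\{r_0\}$. Your Step~3 adds something useful. It spells out, via orthogonality of $P(E)$ to $a_j$ and Lemma~\ref{lem:notmodel}, the cluster argument behind the paper's one-line assertion that $r_0$ lies in every true $P$-event.
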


\begin{proof}
Let $P$ have frame $\mathcal F_P$, and let
\[
\mathcal F=\{\,(P,\{r\}):r\in\spec(P)\,\}
\]
be the corresponding pre-frame of primitive intrinsic formul\ae.  By dispersion
freeness, $\mathfrak M\cap\mathcal F$ consists of one formula
$(P,\{r_0\})$.  Hence $r_0$ belongs to every true $P$-event, while no other
eigenvalue can belong to the pointer intersection because the singleton event
$\{r_0\}$ itself is true.  Therefore
$\Delta_{\mathfrak M}(P)=\{r_0\}$.
\end{proof}

The converse of Proposition~\ref{prop:disp-sharp} fails: a model may make every
observable sharp without being dispersion-free, because sharpness of $P$ asserts
only that $\bigsqcap\mathfrak M_P$ is an atom, not that this atom is
\emph{itself} true in $\mathfrak M$.  In $\mathcal V_{12}$, for instance, the
model $\{p_1,p_2,p_6,p_8\}$ makes both $P_0$ and $Q_0$ sharp yet meets neither
pre-frame; of the $53$ models of $\mathcal T_{12}$, exactly $5$ are
dispersion-free while $18$ make both observables sharp.  What survives is the
following weaker but sufficient implication, which is what the applications
actually need.

\begin{theorem}[Universal sharpness forces KS-colourability]
\label{thm:sharp-implies-ks}
Let $(\mathcal V,\Pi)$ be a nontrivial finite-dimensional partial Boolean
algebra and let $\mathfrak M$ be a model of $\mathcal T_g$ in which
\emph{every} finite-spectrum observable is sharp.  Then $(\mathcal V,\Pi)$ has
the KS-property.
\end{theorem}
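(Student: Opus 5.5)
Given a model $\mathfrak M$ in which every finite-spectrum observable is sharp, I would build a KS-function directly from the pointer sets rather than through Theorem~\ref{thm:ksmod}. The model need not meet any pre-frame, as the $\mathcal V_{12}$ example above shows, so we cannot hope to read off a dispersion-free model; we read off a colouring instead.

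\emph{Step 1: each frame gets a colour.} For every frame $\mathcal E=\{a_1,\dots,a_n\}$, choose pairwise distinct reals $\lambda_1,\dots,\lambda_n$ and let $P_{\mathcal E}$ be the observable defined by \eqref{eq:finite-observable}. Sharpness of $P_{\mathcal E}$ in $\mathfrak M$, together with Proposition~\ref{prop:outcount}, gives exactly one locally admissible outcome. Equivalently, $\bigsqcap\mathfrak M_{P_{\mathcal E}}$ (computed in $\mathcal B_{\mathcal E}$) is a single atom, which I call $c(\mathcal E)\in\mathcal E$. Any other choice of eigenvalues gives an observable with the same range, so $c(\mathcal E)$ depends only on $\mathcal E$ and on $\mathfrak M$. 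Here I read the model as one of $\mathcal T_{g^*}$ on the combined language $\mathcal H=U\sqcup\mathcal I$, or of $\mathcal T_{\widehat g}$, as Definition~\ref{def:pointer} requires. Since models are upward closed (Theorem~\ref{thm:models-are-clusters}) and synonymous formul\ae\ are equivalent, membership of $(P,E)$ in $\mathfrak M$ depends only on the vertex $P(E)$. Hence $\mathfrak M_{P_{\mathcal E}}$ is just $\{x\in\mathcal B_{\mathcal E}: x\in g^*(\mathfrak M)\}$.

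\emph{Step 2: the colour is consistent across frames.} Define $f(a)=1$ iff $a=c(\mathcal E)$ for some frame $\mathcal E\ni a$. The key claim is that $c(\mathcal E)$ is characterized intrinsically: an atom $a\in\mathcal E$ equals $c(\mathcal E)$ iff $a^\perp\notin g^*(\mathfrak M)$. To see this, note that $\bigsqcap\mathfrak M_{P_{\mathcal E}}=\{a\}$ says every true vertex of $\mathcal B_{\mathcal E}$ lies above $a$, and that $a$ is the unique such atom. For $b\in\mathcal E$ with $b\neq a$, the vertex $b^\perp=\bigsqcup_{a_i\ne b}a_i$ does not lie above $b$; if $b^\perp$ were false for every such $b$, then more than one atom would be locally admissible. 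More precisely, an atom $b$ is locally inadmissible iff some true $x\in\mathcal B_{\mathcal E}$ has $b\not\le x$, and then $x\le b^\perp$, so $b^\perp$ is true by upward closure. Conversely, if $b^\perp$ is true, then $b$ is excluded. So $b$ is admissible iff $b^\perp$ is false, and this condition does not mention $\mathcal E$. With $f(a)=1\iff a^\perp\notin g^*(\mathfrak M)$, sharpness of $P_{\mathcal E}$ says precisely that exactly one atom of each frame has $f=1$. So $f$ is a KS-function, and the KS-property follows (nontriviality is only needed for the definition to apply).

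\emph{The main obstacle.} I expect the delicate point to be the bookkeeping in Step 2: making sure that local admissibility of $b$ relative to $\mathcal B_{\mathcal E}$ reduces to the truth value of the single vertex $b^\perp$, independently of the frame. This rests on $b^\perp$ lying in every maximal component containing $b$, by Proposition~\ref{prop:local-operations}, and on upward closure of models. I would also check that "sharp" really does supply actualizability, so that $|\Delta|=1$ rather than an empty pointer. Finally, I would confirm that the hypothesis ranges over observables built from every frame, which Definition~\ref{def:obs} permits.
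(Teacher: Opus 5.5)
Your proof is correct and follows the paper's argument. Both rest on the same fact: by upward closure, an atom $a$ of a frame is locally inadmissible exactly when $a^\perp$ is true in $\mathfrak M$, and this condition does not depend on the frame. The paper packages this as a claim that the selected atom $a_{\mathcal E}$ is also selected in every other frame containing it, whereas you define $f(a)=1$ iff $a^\perp\notin g^*(\mathfrak M)$ directly, which is the same idea in slightly cleaner form.
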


\begin{proof}
Write $h(v)=1$ when some (equivalently, by Theorem~\ref{thm:equiv}, every)
$q\in U$ with $g(q)=v$ lies in $\mathfrak M$, and identify $\mathfrak M$ with
the up-closed set $\{v\in\mathcal V:h(v)=1\}$, which is a cluster by
Theorem~\ref{thm:models-are-clusters}.

Let $\mathcal E$ be a frame with maximal component $\mathcal B_{\mathcal E}$,
and let $P_{\mathcal E}$ be an observable with that range component
(Proposition~\ref{prop:vertex-as-event}).  By hypothesis and
Proposition~\ref{prop:outcount},
$\Outloc{P_{\mathcal E}}{\mathfrak M}$ is a singleton, say
$\{a_{\mathcal E}\}$ with $a_{\mathcal E}\in\mathcal E$.

\emph{Claim: if $a_{\mathcal E}$ also belongs to a frame $\mathcal E'$, then
$a_{\mathcal E'}=a_{\mathcal E}$.}
Write $a=a_{\mathcal E}$ and suppose $a\notin
\Outloc{P_{\mathcal E'}}{\mathfrak M}$.  Then some
$v\in\mathfrak M\cap\mathcal B_{\mathcal E'}$ satisfies $a\perp v$, that is,
$v\le a^{\perp}$.  Since $\mathfrak M$ is upward closed and $g$ is onto, it
follows that $a^{\perp}\in\mathfrak M$.  But $a\in\mathcal E$ gives
$a^{\perp}\in\mathcal B_{\mathcal E}$, so $a^{\perp}$ lies in
$\mathfrak M_{P_{\mathcal E}}$, while $a\perp a^{\perp}$.  This contradicts
$a\in\Outloc{P_{\mathcal E}}{\mathfrak M}$.  Hence $a$ is locally admissible
for $\mathcal E'$ as well, and since that set is the singleton
$\{a_{\mathcal E'}\}$ we get $a_{\mathcal E'}=a$, proving the claim.

Now define $f:\mathfrak A\to\{0,1\}$ by $f(a)=1$ if and only if
$a=a_{\mathcal E}$ for some frame $\mathcal E$ containing $a$.  By the claim,
$f(a)=1$ if and only if $a=a_{\mathcal E}$ for \emph{every} frame $\mathcal E$
containing $a$.  Consequently each frame $\mathcal E$ contains exactly one atom
of value~$1$, namely $a_{\mathcal E}$, so $f$ is a KS-function.
\end{proof}

\begin{theorem}
\label{thm:dispfree}
Let $(\mathcal V,\Pi)$ be a nontrivial finite-dimensional partial Boolean
algebra and let $\mathcal T_g$ be the induced theory.  The following are
equivalent.
\begin{enumerate}
\item[(i)] $(\mathcal V,\Pi)$ has the KS-property in the terminology of this
paper; that is, its atoms admit a function assigning $1$ to exactly one atom of
every frame.
\item[(ii)] $\mathcal T_g$ has a dispersion-free model.
\item[(iii)] $(\mathcal V,\Pi)$ admits a two-valued homomorphism whose
restriction to every component is a Boolean homomorphism.
\end{enumerate}
\end{theorem}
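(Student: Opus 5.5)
The plan is to prove (i)$\Leftrightarrow$(ii) by quoting earlier results, and then close the cycle with (iii)$\Rightarrow$(i) and (i)$\Rightarrow$(iii) by direct arguments in the maximal components $\mathcal B_{\mathcal E}$.

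\emph{(i)$\Leftrightarrow$(ii).} By Definition~\ref{def:dispersion-free}, a dispersion-free model is exactly a model $\mathfrak M$ of $\mathcal T_g$ with $\mathfrak M\cap\mathcal F$ a singleton for every pre-frame $\mathcal F$. So this equivalence is a restatement of Theorem~\ref{thm:ksmod}, and nothing new is needed.

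\emph{(iii)$\Rightarrow$(i).} Let $h:\mathcal V\to\{0,1\}$ be a two-valued homomorphism and let $f=h|_{\mathfrak A}$. Fix a frame $\mathcal E=\{a_1,\dots,a_n\}$. Since $h|_{\mathcal B_{\mathcal E}}$ is a Boolean homomorphism and $a_1\sqcup\dots\sqcup a_n=\mathbf 1$ in $\mathcal B_{\mathcal E}$, at least one $a_i$ has $h(a_i)=1$. Since $a_i\sqcap a_j=\mathbf 0$ for $i\ne j$ and $h(\mathbf 0)=0$, at most one does. Hence $f$ is a KS-function.

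\emph{(i)$\Rightarrow$(iii).} Given a KS-function $f$, I define
\[
h(x)=1 \iff \text{there is an atom } b\le x \text{ with } f(b)=1 .
\]
By Definition~\ref{def:finite-dimensional}(5), every component is a Boolean subalgebra of some $\mathcal B_{\mathcal E}$, so it suffices to show that $h|_{\mathcal B_{\mathcal E}}$ is a Boolean homomorphism for every frame $\mathcal E$. Let $a_{\mathcal E}$ be the unique atom of $\mathcal E$ with $f(a_{\mathcal E})=1$. The key claim is that, for $x\in\mathcal B_{\mathcal E}$,
\[
h(x)=1 \iff a_{\mathcal E}\le x .
\]
If this holds, then $h|_{\mathcal B_{\mathcal E}}$ is the principal-ultrafilter map of a finite Boolean algebra at one of its atoms, and that map is a Boolean homomorphism. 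Since $h$ is defined globally, no compatibility-between-components issue arises.

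The main obstacle is the forward direction of the claim, because the witnessing atom $b\le x$ need not lie in $\mathcal E$. I handle it as follows. Suppose $b\le x$, $f(b)=1$, and $a_{\mathcal E}\not\le x$. Since $a_{\mathcal E}$ is an atom of $\mathcal B_{\mathcal E}$ and $x\in\mathcal B_{\mathcal E}$, we get $a_{\mathcal E}\le x^\perp$. Then $b\le x\le a_{\mathcal E}^\perp$; here I use transitivity from (P6), together with the fact that $x\le a_{\mathcal E}^\perp$ in $\mathcal B_{\mathcal E}$. Thus $b\perp a_{\mathcal E}$, and in particular $b\ne a_{\mathcal E}$. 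By Definition~\ref{def:finite-dimensional}(1), the orthogonal pair $\{b,a_{\mathcal E}\}$ extends to a frame. That frame would contain two atoms of $f$-value $1$, contradicting that $f$ is a KS-function.
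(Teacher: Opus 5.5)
Your proof is correct, but one leg of it follows a different route from the paper. The paper runs the cycle (i)$\Rightarrow$(ii)$\Rightarrow$(iii)$\Rightarrow$(i), where (i)$\Rightarrow$(ii) is Theorem~\ref{thm:ksmod} and (iii)$\Rightarrow$(i) is your argument.

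\textbf{The paper's middle step.} For (ii)$\Rightarrow$(iii) the paper starts from an arbitrary dispersion-free model $\mathfrak M$ and sets $h(x)=1$ iff a formula over $x$ lies in $\mathfrak M$. It shows that $h$ is principal at the selected atom $a_j$ using only the structure of models:
\begin{itemize}
\item upward closure gives $h(x)=1$ whenever $x\ge a_j$;
\item if $a_j\not\le x$, then $x\le a_j^\perp$, so a formula over $x$ is orthogonal to the selected primitive formula and cannot lie in the cluster $\mathfrak M$.
\end{itemize}

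\textbf{Your route.} You get (i)$\Leftrightarrow$(ii) entirely from Theorem~\ref{thm:ksmod}, and prove (i)$\Rightarrow$(iii) directly. You take $h$ to be the up-set of the $f$-selected atoms. In place of the cluster property you use the transitivity step $b\le x\le a_{\mathcal E}^\perp$ together with extending $\{b,a_{\mathcal E}\}$ to a frame via Definition~\ref{def:finite-dimensional}(1). In effect this redoes Lemmas~\ref{lem:aF} and~\ref{lem:clusterstar} at the level of vertices. Indeed, your $h$ is exactly the vertex image of the model $C^*$ built in the proof of Theorem~\ref{thm:ksmod}.

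\textbf{What your route buys.}
\begin{itemize}
\item (i)$\Leftrightarrow$(iii) becomes a purely order-theoretic fact, independent of $\mathcal T_g$ and $g$.
\item You reduce explicitly to the maximal components $\mathcal B_{\mathcal E}$ via Definition~\ref{def:finite-dimensional}(5). The paper glosses over this: its phrase ``a component with frame of atoms $\{a_1,\dots,a_n\}$'' is literally meaningful only for maximal components. For example, the Boolean atom $6$ of $\{0,6,7,1\}$ is not a global atom.
\end{itemize}

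\textbf{What the paper's route buys.} It shows that the vertex image of \emph{every} dispersion-free model is a two-valued homomorphism, not only the image of the model generated by a KS-function.

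\textbf{A shared step.} Both proofs use $a_1\sqcup\dots\sqcup a_n=\mathbf 1$ in $\mathcal B_{\mathcal E}$ in (iii)$\Rightarrow$(i). The paper asserts this after Definition~\ref{def:finite-dimensional}. If you want it self-contained: otherwise atomicity gives an atom below $(\bigsqcup\mathcal E)^\perp$, and that atom would enlarge the maximal orthogonal set $\mathcal E$.
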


\begin{proof}
(i)$\Rightarrow$(ii).
This is Theorem~\ref{thm:ksmod}: a KS-function determines a model meeting every
pre-frame in a singleton, which is dispersion-free by
Definition~\ref{def:dispersion-free}.

(ii)$\Rightarrow$(iii).
Let $\mathfrak M$ be dispersion-free.  Because formulas with the same image
under $g$ are equivalent modulo $\mathcal T_g$, membership of a vertex in the
image of $\mathfrak M$ is unambiguous.  Define
\[
h(x)=1
\quad\Longleftrightarrow\quad
\text{some (equivalently every) }p\in U\text{ with }g(p)=x
\text{ belongs to }\mathfrak M.
\]
Let $\mathcal B$ be a component with frame of atoms
$\{a_1,\dots,a_n\}$.  Dispersion freeness selects exactly one corresponding
primitive formula, say the one over $a_j$.  Since models are upward closed,
$h(x)=1$ for every $x\in\mathcal B$ with $a_j\le x$.  If $a_j\not\le x$, then
$x\le a_j^\perp$, so a formula representing $x$ is orthogonal to the selected
primitive formula and cannot lie in the cluster $\mathfrak M$.  Hence
$h\restriction\mathcal B$ is exactly the two-valued Boolean homomorphism
concentrated at $a_j$.  Since this holds for every component, $h$ is a
two-valued homomorphism on the partial Boolean algebra.

(iii)$\Rightarrow$(i).
Restrict the homomorphism to the atoms.  On each Boolean component it assigns
$1$ to exactly one atom of the component's frame and $0$ to all the others.
Thus its restriction is a KS-function.
\end{proof}

When these equivalent conditions hold, all finite-spectrum observables are
simultaneously sharp in the corresponding dispersion-free model by
Proposition~\ref{prop:disp-sharp}.  When they fail, no model meets every
pre-frame in a singleton, and moreover, by
Theorem~\ref{thm:sharp-implies-ks}, \emph{no} model makes every observable
sharp: in every model at least one observable is either non-actualizable or
carries uncertainty at least~$1$.  The obstruction is therefore global rather
than a feature of one selected observable.

For $\mathcal V_{12}$, the two-valued homomorphism $\overline s$ displayed above
selects the shared atom $7$.  The corresponding dispersion-free model is
\[
\{7\}^*=\{7,2,4,8,10,1\},
\]
in which both $P_0$ and $Q_0$ are sharp with the common value $r_2=t_2$.
The five KS-functions give five global value patterns.  This is the colourable
example promised in the abstract; the later $140$-vertex construction provides
the contrasting obstruction.

\section{A finite four-dimensional partial Boolean algebra}
\label{sec:4dim}

We now construct a finite $4$-dimensional partial Boolean algebra that does
\emph{not} have the KS-property.  The construction is combinatorial.  Since the
quotient used below identifies words lying on different Boolean components, the
main point is not merely to count the resulting vertices, but also to verify that
congruence is an equivalence relation and that the Boolean operations induced on
its equivalence classes are independent of all choices of representatives.

We should say at once what is and is not new here.  The underlying incidence
structure---twenty-four points arranged in twenty-four four-element blocks, each
point lying in exactly four blocks---is not new: it is isomorphic to the Peres
configuration of $24$ rays in $\mathbb R^{4}$ and their $24$ orthogonal bases
\cite{Peres1991}, and the nine blocks used in the parity argument of
Section~\ref{subsec:sigma} reproduce the $18$-vector, $9$-basis proof of Cabello,
Estebaranz and Garc\'ia-Alcaine \cite{Cabello1996}, which is one of several
parity proofs contained in the Peres system.  What is new is the passage from a
ray configuration to a \emph{partial Boolean algebra presented purely by words
and side-complementation}, the resulting $140$-vertex quotient, and---by
Theorem~\ref{thm:corepba}---its realization as the core of a consistent classical
propositional theory, so that the geometric obstruction becomes a statement about
which two-valued models that theory admits.  See
Remark~\ref{rem:peres} below for the explicit correspondence.

\subsection{The configuration $\Sigma$}
\label{subsec:sigma}

A \emph{module} consists of eight letters
$S_1,S_2,S_3,S_4,T_1,T_2,T_3,T_4$ and four four-element \emph{sides}
\[
 l=\{S_1,S_2,S_3,S_4\},\qquad
 r=\{T_1,T_2,T_3,T_4\},
\]
\[
 k_1=\{S_1,T_1,T_2,S_2\},\qquad
 k_2=\{S_3,T_3,T_4,S_4\}.
\]
Thus $l$ and $r$ are the left and right columns, while $k_1$ and $k_2$ are
the top and bottom loops in Figure~\ref{module}.

\begin{figure}[ht]
\centering
\includegraphics[scale=0.37]{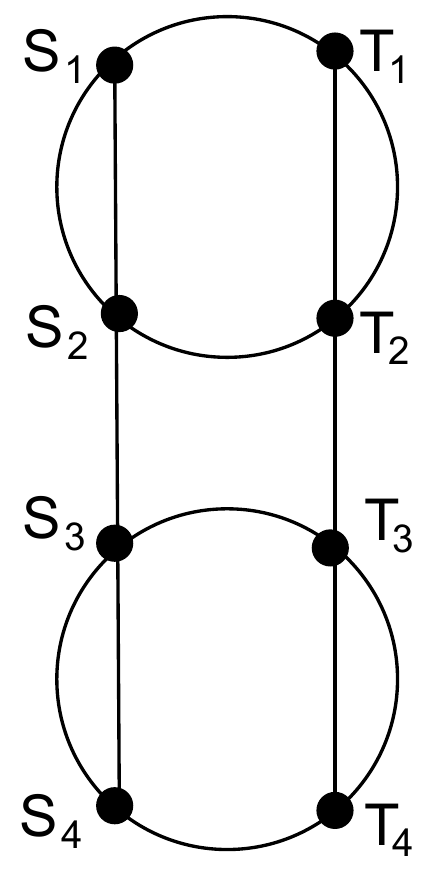}
\caption{A module and its four sides.}
\label{module}
\end{figure}

Let
\[
\Lambda=\{A,B,\ldots,R,U,V,W,X,Y,Z\}
\]
be the set of the twenty-four letters of the alphabet other than $S$ and $T$.
We use the following six column sides:
\[
\begin{aligned}
L_1&=\{A,B,C,U\}, & R_1&=\{J,K,L,X\},\\
L_2&=\{D,E,F,V\}, & R_2&=\{M,N,O,Y\},\\
L_3&=\{G,H,I,W\}, & R_3&=\{P,Q,R,Z\}.
\end{aligned}
\]
For $1\le i,j\le3$, the module $m_{ij}$ has left side $L_i$, right side
$R_j$, and the two loop sides listed below:
\[
\begin{array}{c|cc}
 & \text{top loop} & \text{bottom loop}\\ \hline
m_{11}&\{A,J,K,B\}&\{C,L,X,U\}\\
m_{12}&\{A,M,N,C\}&\{B,O,Y,U\}\\
m_{13}&\{B,P,Q,C\}&\{A,R,Z,U\}\\
m_{21}&\{D,J,L,E\}&\{F,K,X,V\}\\
m_{22}&\{D,M,O,F\}&\{E,N,Y,V\}\\
m_{23}&\{E,P,R,F\}&\{D,Q,Z,V\}\\
m_{31}&\{G,K,L,H\}&\{I,J,X,W\}\\
m_{32}&\{G,N,O,I\}&\{H,M,Y,W\}\\
m_{33}&\{H,Q,R,I\}&\{G,P,Z,W\}.
\end{array}
\tag{\ensuremath{\Sigma}}
\label{eq:sigma-sides}
\]
The set of all twenty-four sides---the six columns and the eighteen
loops---is denoted by $\mathcal S$.

\begin{figure}[p]
\centering
\includegraphics[scale=0.46]{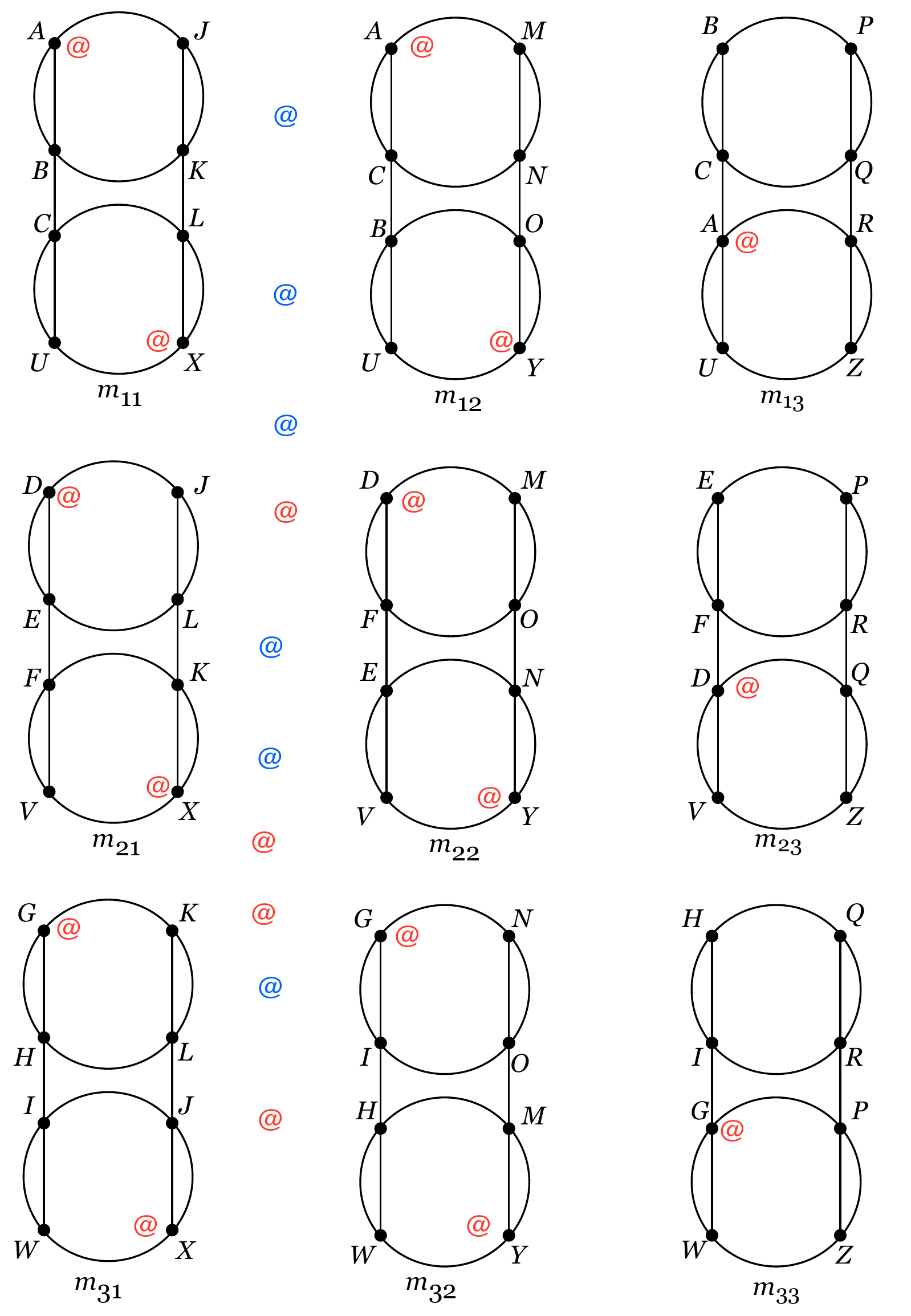}
\caption{The configuration $\Sigma$, a $3\times3$ array of modules $m_{ij}$.}
\label{sigma}
\end{figure}

A \emph{word} is a subset of a side $x\in\mathcal S$.  If $w\subseteq x$,
its complement on $x$ is
\[
w^x=x\setminus w.
\]
The same set of letters may be a word on more than one side, and its complement
may then depend on the side.

\begin{definition}
\label{def:cong}
Two words $u$ and $v$ are \emph{congruent}, written $u\approx v$, if there
exist sides $x,y\in\mathcal S$ such that $u\subseteq x$, $v\subseteq y$, and
\[
u^x=v^y.
\]
The congruence class of $w$ is denoted by $[[w]]$ and is called the
\emph{vertex} represented by $w$.
\end{definition}

The following finite incidence lemma records all of the combinatorial facts
needed below.  Its proof is included because it is precisely these facts that
ensure that the quotient operations are well defined.

\

\begin{lemma}[Incidence lemma]
\label{lem:sigma-incidence}
For the twenty-four sides in~\eqref{eq:sigma-sides}, the following assertions
hold.
\begin{enumerate}
\item Every letter of $\Lambda$ lies on exactly four sides.
\item Distinct sides meet in zero, one, or two letters.
\item Every one-letter word is congruent only to itself.
\item For each letter $a$, the four three-letter complements of $\{a\}$ on
      the four sides containing $a$ are mutually congruent, and no other
      three-letter word belongs to that class.
\item Among the distinct two-letter words, the non-singleton congruence
      classes are exactly the following eighteen pairs:
      \[
      \begin{array}{lll}
      \{A,B\}\approx\{L,X\}, &
      \{A,C\}\approx\{O,Y\}, &
      \{A,U\}\approx\{P,Q\},\\
      \{B,C\}\approx\{R,Z\}, &
      \{B,U\}\approx\{M,N\}, &
      \{C,U\}\approx\{J,K\},\\
      \{D,E\}\approx\{K,X\}, &
      \{D,F\}\approx\{N,Y\}, &
      \{D,V\}\approx\{P,R\},\\
      \{E,F\}\approx\{Q,Z\}, &
      \{E,V\}\approx\{M,O\}, &
      \{F,V\}\approx\{J,L\},\\
      \{G,H\}\approx\{J,X\}, &
      \{G,I\}\approx\{M,Y\}, &
      \{G,W\}\approx\{Q,R\},\\
      \{H,I\}\approx\{P,Z\}, &
      \{H,W\}\approx\{N,O\}, &
      \{I,W\}\approx\{K,L\}.
      \end{array}
      \]
      The remaining seventy-two two-letter words are congruent only to
      themselves.  Hence no congruence class of two-letter words has more than
      two members.
\item If two distinct sides $x$ and $y$ are mapped to their sets of vertices
      \[
      \mathcal B_x=\{[[w]]:w\subseteq x\},\qquad
      \mathcal B_y=\{[[w]]:w\subseteq y\},
      \]
      then $|\mathcal B_x\cap\mathcal B_y|$ is $2$, $4$, or $8$, and the
      intersection is closed under the Boolean operations induced from either
      side.
\item Every family of pairwise compatible vertices is contained in
      $\mathcal B_x$ for some side $x\in\mathcal S$.
\end{enumerate}
\end{lemma}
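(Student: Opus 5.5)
The plan is to prove the seven assertions in order, each by direct inspection of the table~\eqref{eq:sigma-sides}. The later items reduce to a single structural fact: a congruence $u^x=v^y$ with $u\ne v$ forces $x$ and $y$ to share the letters of $u^x$.

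For (1), I will count occurrences. Each letter of $\Lambda$ lies in exactly one column side. It also lies in exactly three loops: a left-column letter appears once in each of the three modules $m_{i1},m_{i2},m_{i3}$, and a right-column letter once in each of $m_{1j},m_{2j},m_{3j}$. For (2), I will check intersection sizes in three cases.
\begin{enumerate}
\item Two columns are disjoint.
\item A column and a loop meet in $0$ or $2$ letters, since each loop has two letters from $L_i$ and two from $R_j$.
\item Two loops in the same module are disjoint. Loops in different modules share at most one left column or one right column, and the left pairs $\{A,B\},\{C,U\},\dots$ used by the loops are pairwise distinct, so they meet in at most two letters.
\end{enumerate}
Where (2) needs more than this, I will verify it by listing the $\binom{24}{2}$ intersections. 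That list doubles as the finite certificate mentioned in the appendix.

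The key observation for (3)--(5) is this. If $u\subseteq x$, $v\subseteq y$ and $u^x=v^y=c$, then $c\subseteq x\cap y$, so by (2) we have $|c|\le 2$ whenever $x\ne y$. The cases then go as follows.
\begin{enumerate}
\item A one-letter word $u$ has $|c|=3$, which forces $x=y$ and hence $u=v$. This gives (3).
\item A three-letter word $u$ has complement a single letter $a$. The words congruent to $u$ are therefore exactly the complements of $\{a\}$ on the four sides through $a$, by (1). These four complements are distinct three-letter sets, because two sides sharing three letters would contradict (2). This gives (4).
\item A two-letter word $u$ has a two-letter complement $c$. A nontrivial congruence occurs exactly when $c=x\cap y$ for two distinct sides with $|x\cap y|=2$, and then $u=x\setminus c$, $v=y\setminus c$.
\end{enumerate}
For (5), I will enumerate all side pairs meeting in two letters and read off the resulting $(u,v)$. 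For instance, $L_1\cap\{C,L,X,U\}=\{C,U\}$ yields $\{A,B\}\approx\{L,X\}$. I expect to get exactly eighteen pairs, each column--loop incidence producing one pair. Loop--loop intersections of size two must also be checked: I expect they cannot occur, or that they reproduce the same pairs. I also need to confirm that no two-letter set $c$ is the intersection of three sides, which is what keeps each class to at most two members. A side has $\binom42=6$ two-letter words, giving $144$ incidences and $90$ distinct two-letter words. Eighteen of these words belong to doubleton classes, leaving seventy-two that are singleton classes.

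For (6), the vertices of $\mathcal B_x$ are $[[\emptyset]]$, $[[x]]$, four atoms, four coatoms and six two-letter classes. By (3)--(5), these sixteen classes are distinct, and $\mathcal B_x$ is a copy of $2^4$. Two sides meeting in $0$, $1$ or $2$ letters then share $2$, $4$ or $8$ vertices respectively.
\begin{enumerate}
\item Disjoint sides share only $\{\mathbf 0,\mathbf 1\}$.
\item A common letter $a$ adds the atom $[[\{a\}]]$ and its coatom.
\item A common pair $\{a,b\}$ adds the two atoms, their coatoms, and the two-letter vertex $[[\{a,b\}]]=[[x\setminus\{a,b\}]]$ together with its complement.
\end{enumerate}
In each case the shared vertices form a Boolean subalgebra, and complements agree because complementation is given by $w\mapsto w^x$ and congruence is defined through exactly these complements.

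Item (7) is the main obstacle. I would first reduce it to atoms: every nonzero vertex is a join of atoms within its side, and compatibility in a component amounts to lying in a common $\mathcal B_x$. The natural approach is then to show that a pairwise compatible family of atoms consists of letters on a common side, that is, that the orthogonality graph of the twenty-four letters has the sides as its only maximal cliques. A similar argument, using (6), would handle families containing two-letter vertices. Failing a clean argument, I would settle (7) by finite exhaustive check, as recorded in the verification appendix.
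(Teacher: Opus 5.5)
Your overall strategy (finite inspection of~\eqref{eq:sigma-sides}, with every nontrivial congruence arising from a two-letter intersection $c=x\cap y$) is the paper's, and your arguments for (1)--(4) are fine. The genuine gap is in (6).

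\textbf{Item (6).} Your key observation constrains the sides that \emph{witness} a congruence. But a vertex can have representatives on two other sides that share no letter at all. In $m_{11}$ the columns $L_1=\{A,B,C,U\}$ and $R_1=\{J,K,L,X\}$ are disjoint. Yet $\{A,B\}\approx\{L,X\}$ and $\{C,U\}\approx\{J,K\}$ (each witnessed by a column and a loop) put $[[\{A,B\}]]$ and $[[\{C,U\}]]$ in $\mathcal B_{L_1}\cap\mathcal B_{R_1}$, which therefore has four elements. The same happens for the two loops of any module. So ``disjoint sides share only $\{\mathbf 0,\mathbf 1\}$'' is false, and the correspondence $|x\cap y|=0,1,2\mapsto 2,4,8$ is wrong.

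The correct bookkeeping is by modules. The $36$ two-letter words lying on two sides are exactly the words $\{S_1,S_2\},\{S_3,S_4\},\{T_1,T_2\},\{T_3,T_4\}$ of the nine modules. Each module therefore yields two vertices, $[[\{S_1,S_2\}]]=[[\{T_3,T_4\}]]$ and $[[\{S_3,S_4\}]]=[[\{T_1,T_2\}]]$, carried by all four of its sides. Every other two-letter vertex lies on a single side. Hence $|\mathcal B_x\cap\mathcal B_y|$ is:
\begin{itemize}
\item $2$ for disjoint sides with no common module;
\item $4$ for sides meeting in one letter (these never share a module), and for $l,r$ or $k_1,k_2$ of one module;
\item $8$ for a column and a loop of the same module.
\end{itemize}

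Your complement argument does not cover the missed case either. The complement of $[[\{A,B\}]]$ is $[[\{C,U\}]]$ computed on $L_1$ but $[[\{J,K\}]]$ computed on $R_1$. Their equality is a separate entry of the table, not the congruence $\{A,B\}\approx\{L,X\}$. These are precisely the ``identified complementary blocks forced by the congruences in item~(5)'' in the paper's description of the intersection, and the paper also certifies their closure exhaustively (Appendix~\ref{app:v140-certificate}, step~5). Two smaller points in (6): in the two-letter case you need $[[x\setminus\{a,b\}]]=[[y\setminus\{a,b\}]]$, not $[[\{a,b\}]]=[[x\setminus\{a,b\}]]$. And $\mathcal B_x\cong 2^4$ requires checking that the two words of each pair in (5) never lie on a common side.

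\textbf{Item (5).} Your count is wrong. The $144$ occurrences contain $36$ words occurring twice, so there are $108$ distinct two-letter words, not $90$. The eighteen doubleton classes contain $36$ of them, leaving $72$ singletons and $90$ \emph{classes}. Three further corrections:
\begin{itemize}
\item Each pair arises from two of the $36$ column--loop intersections, not one.
\item Your argument for (2) actually shows that two loops meet in at most one letter, so there is no loop--loop case to check.
\item Besides ``no pair lies on three sides'', you need that a word on two sides receives the same partner from both. For instance, $\{A,B\}$ receives $\{L,X\}$ both via $L_1\cap k_2=\{C,U\}$ and via $R_1\cap k_1=\{J,K\}$, where $k_1,k_2$ are the loops of $m_{11}$. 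Your enumeration must confirm this.
\end{itemize}

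\textbf{Item (7).} The reduction to atoms fails as stated, because pairwise compatible vertices need not have pairwise compatible atoms beneath them. For example, $[[\{A,B\}]]=[[\{L,X\}]]$ is compatible with $[[\{L\}]]$, although $A$ and $L$ share no side. What is needed is a Helly-type property for the sets of sides carrying each vertex: all sides, the four sides through a letter, the four sides of a module, or a single side. The paper gives no conceptual argument here either; it proves (7) by the exhaustive maximal-clique enumeration, so your fallback is exactly its proof.
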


\begin{proof}
All assertions are finite consequences of the displayed list of sides.  We give
an exhaustive verification by word size.

Each of $A,\ldots,R$ occurs on its column and on three loops; each of
$U,\ldots,Z$ also occurs on its column and on three loops.  This proves (1).
Comparing the twenty-four displayed four-element sets proves (2).

A one-letter word $\{a\}$ can be congruent to $\{b\}$ only if complements of
these two words are equal as sets.  Equality of those three-element complements
forces $a=b$, proving (3).  By (1), each $a$ has four three-letter complements;
they are congruent by definition.  Conversely, equality of the complements of
two three-letter words identifies their common missing letter, proving (4).

For two-letter words, the displayed list in item~(5) gives all
non-singleton classes directly.  Each pair has a common two-letter complement
on the two indicated sides.  Conversely, equality of two-letter complements
forces one of the eighteen displayed coincidences when the side list
in~\eqref{eq:sigma-sides} is inspected.  Since the twenty-four sides contain
$24\binom{4}{2}=144$ two-letter occurrences and the side intersections identify
exactly thirty-six duplicate occurrences, there are $108$ distinct two-letter
words.  The eighteen two-element classes account for thirty-six of them, leaving
seventy-two singleton classes.  This proves (5).

For (6), let $x\ne y$.  By item~(2), $|x\cap y|$ is $0$, $1$, or $2$.  The
vertices represented on both sides are precisely the Boolean subalgebra
generated by the singleton vertices belonging to $x\cap y$, together with any
identified complementary blocks forced by the congruences in item~(5).
Direct substitution from the eighteen-pair table shows that the resulting
intersection has respectively $2$, $4$, or $8$ elements and is closed under
complement, meet, and join on either side.

For completeness, assertion~(7), as well as the preceding finite counts, is
also certified by the exhaustive verification described in
Appendix~\ref{app:v140-certificate}.  That verification constructs the
compatibility graph from the twenty-four displayed sides and enumerates all
maximal cliques; the output consists of exactly the twenty-four sets
$\mathcal B_x$, each of cardinality sixteen.  Hence every pairwise compatible
family lies in one of them, proving (7).
\end{proof}

\subsection{Congruence classes and quotient operations}

\begin{proposition}
\label{prop:cong-equivalence}
The relation $\approx$ is an equivalence relation on the set of all words.
Its equivalence classes have the following profile:
\begin{center}
\small
\begin{tabular}{c|c|p{5.0cm}|c}
\textbf{Word size} & \textbf{Distinct words} & \textbf{Class profile} &
\textbf{Classes}\\ \hline
$0$ & $1$ & one class of size $1$ & $1$\\
$1$ & $24$ & twenty-four classes of size $1$ & $24$\\
$2$ & $108$ & eighteen classes of size $2$ and seventy-two classes of size $1$ & $90$\\
$3$ & $96$ & twenty-four classes of size $4$ & $24$\\
$4$ & $24$ & one class of size $24$ & $1$
\end{tabular}
\end{center}
Consequently there are
\[
1+24+90+24+1=140
\]
vertices.
\end{proposition}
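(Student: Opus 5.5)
The plan is to first observe that congruence preserves word size. If $u\subseteq x$, $v\subseteq y$ and $u^x=v^y$, then $|u|=4-|u^x|=4-|v^y|=|v|$, because every side has exactly four letters. So $\approx$ only relates words of equal size, and both the equivalence-relation claim and the class count can be checked size by size. Reflexivity is immediate: take $x=y$ to be any side containing $w$; such a side exists by the definition of a word. Symmetry is built into the definition.

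Transitivity is the real content and the step I expect to be the main obstacle. The difficulty is that a word may lie on several sides, so $u\approx v$ via $(x,y)$ and $v\approx w$ via $(y',z)$ may use $y\ne y'$. The witnesses $v^y$ and $v^{y'}$ then need not coincide. I would handle this with the incidence lemma, by size:
\begin{itemize}
\item[--] \emph{Size $0$.} There is only the empty word, so the relation is trivial.
\item[--] \emph{Size $4$.} Every full side has empty complement, so all twenty-four full sides form one class.
\item[--] \emph{Size $1$.} Lemma~\ref{lem:sigma-incidence}(3) says each singleton is congruent only to itself.
\item[--] \emph{Size $3$.} By Lemma~\ref{lem:sigma-incidence}(4), a three-letter word $u\subseteq x$ is congruent to $v$ exactly when both are complements of the same letter $a=u^x$. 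The letter $a$ is determined by $u$, since a three-letter word lies on at most one side by item (2). So the classes are indexed by the $24$ letters, each of size $4$ by item (1), and transitivity holds because the relation is ``same missing letter''.
\item[--] \emph{Size $2$.} Item (5) says every class has at most two members. A relation on a set in which each element is related only to itself and at most one other element, symmetrically, is automatically transitive. The only case needing care is $u\approx v\approx u'$ with $u'\ne u$, which is impossible because the class of $v$ has at most two members.
\end{itemize}

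With transitivity established, the profile follows directly. For sizes $0$ and $1$ the table is immediate, and the $24$ distinct letters give $24$ classes. For size $2$, the $108$ distinct words (counted as in the proof of item (5): $144$ occurrences minus $36$ duplicates from two-letter side intersections) split into $18$ pairs and $72$ singletons, giving $90$ classes. For size $3$, each side contributes four three-letter subwords, and no two sides share three letters, so there are $96$ distinct words in $24$ classes of size $4$. For size $4$ there are $24$ distinct words forming one class.

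Summing the class counts gives $1+24+90+24+1=140$ vertices. Since word size is a class invariant, no class is counted twice across rows.
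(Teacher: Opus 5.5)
Your proposal is correct and takes essentially the same approach as the paper. Both read off the relation size by size from Lemma~\ref{lem:sigma-incidence}(3)--(5), add the trivial empty-word and four-letter classes, and deduce transitivity from the resulting block partition. You make explicit what the paper compresses into one sentence: that congruence preserves word size, that in size~$3$ the relation is ``same missing letter'' (well defined by item~(2)), and that in size~$2$ each word has at most one partner.
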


\begin{proof}
Reflexivity and symmetry follow immediately from Definition~\ref{def:cong}.
The class description in Lemma~\ref{lem:sigma-incidence}(3)--(5), together with
the evident single classes of the empty word and all four-letter words, partitions
the set of words into pairwise disjoint blocks.  Congruence is equality within
these blocks, and is therefore transitive.  The table and the count follow.
\end{proof}

Let $\mathcal V_{140}$ denote the set of congruence classes.  A vertex is said
to lie on side $x$ if it has a representative contained in $x$; two vertices
are \emph{compatible} if they lie on a common side.  For compatible vertices,
choose representatives $u,v\subseteq x$ on a common side and set
\begin{equation}
\label{eq:v140-operations}
[[u]]\sqcap[[v]]=[[u\cap v]],\qquad
[[u]]\sqcup[[v]]=[[u\cup v]],\qquad
[[u]]^{\perp}=[[u^x]].
\end{equation}
Also define
\[
[[u]]\le [[v]]
\quad\Longleftrightarrow\quad
\text{there are representatives }u',v'\subseteq x
\text{ on a common side with }u'\subseteq v'.
\]

\begin{proposition}[Well-definedness]
\label{prop:v140-welldefined}
The operations and order in~\eqref{eq:v140-operations} are independent of the
chosen representatives and of the chosen common side.
For every $x\in\mathcal S$, the map
\[
q_x:\mathcal P(x)\longrightarrow\mathcal B_x,
\qquad w\longmapsto[[w]],
\]
is a Boolean-algebra isomorphism.
\end{proposition}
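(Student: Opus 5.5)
The plan is to establish first that $q_x$ is a Boolean isomorphism for each side, and then to derive well-definedness from the incidence lemma.

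\emph{Step 1: $q_x$ is a bijection.} By Proposition~\ref{prop:cong-equivalence}, two distinct subsets of a single side $x$ are never congruent. Words of different sizes lie in different classes by the size profile, so only equal sizes need checking:
\begin{itemize}
\item one-letter words are congruent only to themselves, by Lemma~\ref{lem:sigma-incidence}(3);
\item three-letter words on $x$ omit different letters of $x$, so they lie in different classes by (4);
\item a congruent pair of two-letter words in (5) never lies on a common side, which I will check against the eighteen-pair table;
\item there is only one empty word and only one four-letter subset of $x$.
\end{itemize}
Hence $q_x$ is injective, and it is surjective by the definition of $\mathcal B_x$. I then transport the Boolean structure of $\mathcal P(x)$ to $\mathcal B_x$ via $q_x$. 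This makes $q_x$ an isomorphism by construction, provided the operations in \eqref{eq:v140-operations} computed on $x$ are exactly the transported ones. They are, because with $u,v\subseteq x$ the formulas are the set operations on $x$.

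\emph{Step 2: independence of representatives on a fixed side.} If $u,u'\subseteq x$ represent the same vertex, then $u=u'$ by Step 1. So on a fixed side the representatives are unique, and the operations depend only on the vertices.

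\emph{Step 3: independence of the side.} This is the main step. Suppose $[[u]],[[v]]$ lie on two distinct sides $x$ and $y$, with representatives $u,v\subseteq x$ and $u',v'\subseteq y$. I must show that
\[
q_x(u\cap v)=q_y(u'\cap v'),
\]
and similarly for join and for complement. By Lemma~\ref{lem:sigma-incidence}(6), the set $\mathcal B_x\cap\mathcal B_y$ is closed under the operations computed on $x$, and likewise under those computed on $y$. Thus $q_x^{-1}(\mathcal B_x\cap\mathcal B_y)$ and $q_y^{-1}(\mathcal B_x\cap\mathcal B_y)$ are subalgebras of $\mathcal P(x)$ and $\mathcal P(y)$.

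The composite $q_y^{-1}\circ q_x$ is a bijection between these two subalgebras. I need it to be a Boolean isomorphism, and the strategy is to check this on atoms:
\begin{itemize}
\item If $|x\cap y|=0$, the intersection is $\{\mathbf 0,\mathbf 1\}$. Here the empty word and the full sides correspond, since all four-letter words are congruent.
\item If $|x\cap y|=1$, say $x\cap y=\{a\}$, the intersection is $\{\mathbf 0,[[a]],[[a]]^\perp,\mathbf 1\}$. Complements match because the three-letter complements of $\{a\}$ on $x$ and on $y$ are congruent, by (4).
\item If $|x\cap y|=2$, say $x\cap y=\{a,b\}$, the intersection has eight elements. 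The composite sends $\{a\}\mapsto\{a\}$, $\{b\}\mapsto\{b\}$, and the complementary pair $x\setminus\{a,b\}$ to $y\setminus\{a,b\}$. These two pairs are congruent, since both are complements of $\{a,b\}$. This is an atom-to-atom bijection of the subalgebras, which are 3-atom Boolean algebras, so it is a Boolean isomorphism.
\end{itemize}
A Boolean isomorphism preserves meet, join and complement, so the values agree on both sides. Complement needs one extra remark: for $w\subseteq x$, the claim $[[w^x]]=[[w^y]]$ whenever $w$ has a representative on $y$ is exactly the atom matching just described.

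\emph{Step 4: the order.} The order is defined by inclusion on some common side. In the Boolean algebra $\mathcal B_x$ one has $s\le t$ iff $s\sqcap t=s$, and Step 3 shows this meet condition is side-independent. So the order is independent of the chosen side as well.

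I expect the main obstacle to be the $|x\cap y|=2$ case of Step 3. There the requirement that the eight-element intersection be generated exactly by $a$, $b$ and the identified complementary pair relies on the table in item (5) and on item (6). The bookkeeping has to confirm that no other vertex, such as an unexpected two-letter coincidence, lies in both $\mathcal B_x$ and $\mathcal B_y$. I would rely on (6) for the cardinality $8$ to rule this out.
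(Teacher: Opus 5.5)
Your Step~1 matches the paper's injectivity argument. Your Step~3 is a more explicit version of what the paper does by simply citing Lemma~\ref{lem:sigma-incidence}(6), and that explicitness is worthwhile, since closure of $\mathcal B_x\cap\mathcal B_y$ under both sides' operations does not by itself give agreement. However, the case analysis in Step~3 rests on a false claim: $|x\cap y|=0$ does \emph{not} force $\mathcal B_x\cap\mathcal B_y=\{\mathbf 0,\mathbf 1\}$.

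Take $x=L_1=\{A,B,C,U\}$ and $y=R_1=\{J,K,L,X\}$. These sides are disjoint, yet $\{A,B\}\subseteq L_1$ and $\{L,X\}\subseteq R_1$ form the first congruent pair in item~(5). So $[[\{A,B\}]]$ and its complement lie in both $\mathcal B_{L_1}$ and $\mathcal B_{R_1}$, and the intersection has four elements. The same happens for every pair of disjoint sides of a common module: $L_i$ with $R_j$, and the top with the bottom loop of $m_{ij}$. Compare Appendix~\ref{app:single-module}, where $\mathcal B_l\cap\mathcal B_r$ has four elements. For these pairs your argument never verifies that complementation is side-independent. That requires the $L_1$-complement $\{C,U\}$ of $\{A,B\}$ and the $R_1$-complement $\{J,K\}$ of $\{L,X\}$ to be congruent, which is another line of the table. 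The ``unexpected two-letter coincidence'' you feared in the $|x\cap y|=2$ case actually lives in the $|x\cap y|=0$ case. The $|x\cap y|=2$ case is fine: the eight elements generated by $\{a\},\{b\}$ are visibly common, and (6) caps the size at $8$. Your $|x\cap y|=1$ identification is likewise asserted rather than argued.

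The repair can be made uniform. By (6), $q_x^{-1}(\mathcal B_x\cap\mathcal B_y)$ is a subalgebra of $\mathcal P(x)$, hence corresponds to a partition of $x$. By item~(3), a singleton block $\{c\}$ can occur only if $c\in y$. The possible partitions are then:
\begin{itemize}
\item for $|x\cap y|=0$: $\{x\}$, or $\{u,x\setminus u\}$ with $|u|=2$;
\item for $|x\cap y|=1$: $\{\{a\},x\setminus\{a\}\}$;
\item for $|x\cap y|=2$: $\{\{a\},\{b\},x\setminus\{a,b\}\}$.
\end{itemize}
In the two- and four-element cases the transition map $q_y^{-1}\circ q_x$ is automatically an isomorphism, because a bijection between four-element Boolean algebras fixing $0$ and $1$ preserves everything. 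In the four-element case with $u\approx u'\subseteq y$, closure under the $y$-complement forces $[[y\setminus u']]$ to be the only remaining element, namely $[[x\setminus u]]$.

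In the eight-element case, note that ``atoms go to atoms'' is not by itself sufficient for a bijection to be a Boolean isomorphism. You should also record that $\{a,b\}$, $x\setminus\{a\}$ and $x\setminus\{b\}$ go to $\{a,b\}$, $y\setminus\{a\}$ and $y\setminus\{b\}$, by item~(4) and the definition of $\approx$. This is exactly what the atom assignment predicts. With these corrections, your Steps~2 and~4 go through as written.
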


\begin{proof}
First, if $u,v\subseteq x$ and $[[u]]=[[v]]$, then $u=v$.  Congruence
preserves word size, so $|u|=|v|$, and it suffices to inspect
Lemma~\ref{lem:sigma-incidence}(3)--(5) size by size.  Words of size $0$, $1$
and $3$ that are congruent and lie on the same side coincide: for size~$1$ by
item~(3), and for size~$3$ because a three-letter word lies on only one side,
sides meeting in at most two letters by item~(2).  Each side contains exactly
one word of size~$4$.  For size~$2$, item~(5) lists the eighteen non-singleton
classes explicitly, and in each of them the two members have no common side,
since their union is a four-element set that does not appear
in~\eqref{eq:sigma-sides}.  Hence $q_x$ is injective,
and it is surjective by the definition of $\mathcal B_x$.  It therefore
transports the Boolean operations of $\mathcal P(x)$ to $\mathcal B_x$.

Suppose a vertex, or a compatible pair of vertices, is represented on two sides
$x$ and $y$.  By Lemma~\ref{lem:sigma-incidence}(6),
$\mathcal B_x\cap\mathcal B_y$ is a Boolean subalgebra of both
$\mathcal B_x$ and $\mathcal B_y$.  Complement, meet, join, and order computed
in either side therefore agree on the intersection.  This proves independence
of all choices.
\end{proof}

As the component family we take
\begin{equation}
\label{eq:pi140}
\Pi_{140}
=\{\,\mathcal B:\mathcal B\text{ is a Boolean subalgebra of }
        \mathcal B_x\text{ for some }x\in\mathcal S\,\},
\end{equation}
in exact analogy with $\Pi^\circ$ in~\eqref{eq:p12}.  Including the
subalgebras, and not merely the twenty-four algebras $\mathcal B_x$, is what
makes~\rm(P3) available.

\begin{theorem}
\label{thm:v140-pba}
With compatibility, the component family~\eqref{eq:pi140}, the partial
operations in~\eqref{eq:v140-operations}, and the order above,
$(\mathcal V_{140},\Pi_{140})$ is a partial Boolean algebra.  Its maximal
components are precisely the twenty-four Boolean algebras
$\mathcal B_x$ with $x\in\mathcal S$.
\end{theorem}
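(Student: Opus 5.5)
The plan is to check conditions (P1)--(P6) of Definition~\ref{def:pba} one at a time. The workhorses are Proposition~\ref{prop:v140-welldefined}, which identifies each $\mathcal B_x$ with $\mathcal P(x)\cong 2^4$, and items (6) and (7) of Lemma~\ref{lem:sigma-incidence}. The argument follows the proof of Proposition~\ref{prop:v12-pba}, with the twenty-four algebras $\mathcal B_x$ in place of $\mathcal B_1,\mathcal B_2$.

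\emph{Conditions (P1), (P2) and (P4).} Every word lies on some side by definition, so every vertex lies in some $\mathcal B_x$; this gives (P1). For (P2), the empty word is the zero of every $\mathcal B_x$. The unit of $\mathcal B_x$ is $[[x]]$, and all four-letter words form a single class by Proposition~\ref{prop:cong-equivalence}. Hence every component has the same $\mathbf 0$ and $\mathbf 1$, and $\mathbf 0\neq\mathbf 1$. For (P4), suppose $\mathcal B_1\subseteq\mathcal B_2$ are both subalgebras of algebras $\mathcal B_x$ and $\mathcal B_y$. Both then carry the operations of~\eqref{eq:v140-operations}, which are choice-independent by Proposition~\ref{prop:v140-welldefined}. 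So $\mathcal B_1$ is closed under the operations of $\mathcal B_2$ and is a Boolean subalgebra of it.

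\emph{Condition (P3).} Let $\mathcal B_1\subseteq\mathcal B_x$ and $\mathcal B_2\subseteq\mathcal B_y$ be components. By Lemma~\ref{lem:sigma-incidence}(6), $\mathcal B_x\cap\mathcal B_y$ is closed under the operations of either side, and these operations agree there by well-definedness. The intersection $\mathcal B_1\cap\mathcal B_2$ is then an intersection of subalgebras of the Boolean algebra $\mathcal B_x\cap\mathcal B_y$ (respectively of $\mathcal B_x$ itself when $x=y$). It therefore contains $\mathbf 0,\mathbf 1$, is closed under the common operations, and so is a Boolean subalgebra of $\mathcal B_x$. This puts it in $\Pi_{140}$ and makes it a subalgebra of both $\mathcal B_1$ and $\mathcal B_2$.

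\emph{Condition (P5).} We need a common side for any finite set of pairwise compatible vertices. This is exactly Lemma~\ref{lem:sigma-incidence}(7). I expect it to be the main obstacle, since a pairwise condition does not formally give a global one. Here it rests on the clique enumeration certified in Appendix~\ref{app:v140-certificate}: the maximal cliques of the compatibility graph are exactly the sets $\mathcal B_x$. If a hand argument is wanted, I would try it by cases on the supports. Singletons and triples determine their letters by Lemma~\ref{lem:sigma-incidence}(3),(4). Two-letter classes lie on at most two sides by item (5). Pairwise compatibility of the letters involved, read through item (2), then forces a single side.

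\emph{Condition (P6) and maximality.} I would define $\le$ globally by $[[u]]\le[[v]]$ iff the corresponding words are nested on a common side. Reflexivity is immediate. Antisymmetry follows because a common side realizes both inclusions inside one $\mathcal P(x)$, where $q_x$ is injective. For transitivity, take $a\le b\le c$. The pairs $\{a,b\}$ and $\{b,c\}$ are compatible, but $\{a,c\}$ is not yet known to be. I would use the order characterization in the finite algebras: $a\le b$ iff $a\sqcap b^\perp=\mathbf 0$, so $a\perp b^\perp$. Then I would check on the list of sides, by size of $b$, that $a$ and $c$ share a side. When $b$ is $\mathbf 0$, $\mathbf 1$, a singleton or a triple, this is quick from items (3) and (4). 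When $b$ is a two-letter class on two sides, it needs a short check, or another appeal to the certificate. Finally, the maximal components are the $\mathcal B_x$: each has sixteen elements, and none is contained in another, since $|\mathcal B_x\cap\mathcal B_y|\le 8$ by item (6).
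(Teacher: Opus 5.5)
Your proposal is correct and follows the paper's own proof. Both check (P1)--(P6) using Proposition~\ref{prop:v140-welldefined} and Lemma~\ref{lem:sigma-incidence}(6),(7), with (P5) resting on the clique certificate. Your maximality argument via $|\mathcal B_x\cap\mathcal B_y|\le 8<16$ is an equally valid substitute for the paper's appeal to item~(7).

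Your treatment of transitivity is more careful than the paper's, which cites the common-component property without first establishing $a\simeq c$. The two-letter case you defer does close. Each doubled class such as $[[\{A,B\}]]=[[\{L,X\}]]$ lies, together with its complement $[[\{C,U\}]]=[[\{J,K\}]]$, on exactly the four sides $L_1$, $\{A,J,K,B\}$, $R_1$, $\{C,L,X,U\}$ of one module. Hence every atom below $b$ shares a side with every atom $d$ satisfying $b\le d^\perp$. Note also that such classes therefore lie on four sides, not ``at most two'' as your (P5) sketch asserts.
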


\begin{proof}
Every vertex has a representative on some side, so it belongs to a component.
If $\mathcal B\subseteq\mathcal B_x$ and $\mathcal B'\subseteq\mathcal B_y$ are
components, then $\mathcal B\cap\mathcal B'$ is contained in
$\mathcal B_x\cap\mathcal B_y$, which by Lemma~\ref{lem:sigma-incidence}(6) is a
Boolean subalgebra of both $\mathcal B_x$ and $\mathcal B_y$; hence
$\mathcal B\cap\mathcal B'$ is an intersection of two Boolean subalgebras of
$\mathcal B_x$ and is therefore itself a Boolean subalgebra of
$\mathcal B_x$, so it lies in $\Pi_{140}$.  This gives~\rm(P3),
and~\rm(P4) is immediate from the definition of $\Pi_{140}$.
By Proposition~\ref{prop:v140-welldefined}, each $\mathcal B_x$ is a Boolean
algebra, all such algebras share the common zero $[[\emptyset]]$ and common unit
(the class of all four-letter sides), and orthocomplementation is globally
well defined.  Lemma~\ref{lem:sigma-incidence}(6) shows that intersections of
components are Boolean subalgebras.  Lemma~\ref{lem:sigma-incidence}(7) shows
that every finite pairwise compatible family lies in a common component.  The
partial order is the Boolean order on each component; well-definedness and
transitivity follow from Proposition~\ref{prop:v140-welldefined} and the common-
component property.  These are exactly the partial-Boolean-algebra axioms stated
in Section~\ref{sec:pba}.  Maximality follows from
Lemma~\ref{lem:sigma-incidence}(7), since the maximal compatible sets are the
$\mathcal B_x$.
\end{proof}

\begin{corollary}
\label{cor:v140-dimension}
The partial Boolean algebra $\mathcal V_{140}$ is $4$-dimensional.  Its atoms are
the twenty-four singleton vertices $[[\{a\}]]$ ($a\in\Lambda$), and its frames
are exactly the twenty-four four-element sets
\[
F_x=\{[[\{a\}]]:a\in x\},\qquad x\in\mathcal S.
\]
\end{corollary}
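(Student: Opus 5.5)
The plan is to read off everything from the isomorphisms $q_x:\mathcal P(x)\to\mathcal B_x$ of Proposition~\ref{prop:v140-welldefined} together with Theorem~\ref{thm:v140-pba}, and then check the five clauses of Definition~\ref{def:finite-dimensional}.

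\emph{Atoms.} A nonzero vertex $v$ lies in some $\mathcal B_x$, and there it is $[[w]]$ with $w\ne\emptyset$. For any $a\in w$ we get $[[\{a\}]]\le v$. So every atom is a singleton vertex, and the algebra is atomic once we know singletons are atoms. For the latter, suppose $0\le y\le[[\{a\}]]$. By (P6), $y$ and $[[\{a\}]]$ lie in a common $\mathcal B_x$. Lemma~\ref{lem:sigma-incidence}(3) says $\{a\}$ is its own class, so under $q_x$ the vertex $[[\{a\}]]$ corresponds to the singleton $\{a\}\subseteq x$. Hence $y$ corresponds to $\emptyset$ or $\{a\}$. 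The same item shows that distinct letters give distinct atoms, so there are exactly $24$ atoms.

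\emph{Orthogonality and frames.} Two atoms $[[\{a\}]]$ and $[[\{b\}]]$ are orthogonal iff they are distinct and compatible. By Lemma~\ref{lem:sigma-incidence}(7), compatible means they lie in a common $\mathcal B_x$. The class of $\{a\}$ has only the representative $\{a\}$, so this happens iff $a,b$ lie on a common side $x$. Now take any set of pairwise orthogonal atoms. It is pairwise compatible, so by item~(7) it lies in a single $\mathcal B_x$, whose atoms are $F_x$ via $q_x$. Hence its letters all lie in one side $x$, and the set is contained in $F_x$.

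Each $F_x$ is pairwise orthogonal. It is also maximal: an atom $[[\{c\}]]$ orthogonal to all four atoms of $F_x$ would, by the same item~(7) argument applied to $F_x\cup\{[[\{c\}]]\}$, lie with them in some $\mathcal B_y$. That forces $x\subseteq y$, hence $y=x$ by Lemma~\ref{lem:sigma-incidence}(2), and so $c\in x$. This gives clauses (1) and (2): the frames are exactly the $F_x$, each with $4$ atoms. Distinct sides give distinct frames.

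\emph{Remaining clauses.} For clause (3), the maximal components are exactly the $\mathcal B_x$ (Theorem~\ref{thm:v140-pba}), and the Boolean atoms of $\mathcal B_x$ are $q_x$ of the singletons, namely $F_x$. Uniqueness follows because $x\mapsto F_x$ is injective. Clause (4) holds because every vertex has a representative on some side. Clause (5) is the definition~\eqref{eq:pi140} of $\Pi_{140}$.

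The main obstacle is the step identifying compatibility of atoms with sharing a side. It relies on item~(7), which is certified by the finite check of Appendix~\ref{app:v140-certificate}, and on the fact from item~(3) that singleton classes have unique representatives. Once those two facts are in place, everything else transfers through $q_x$.
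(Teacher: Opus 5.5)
Your proposal is correct and follows the same route as the paper: you transport atoms through the isomorphisms $q_x$, use Lemma~\ref{lem:sigma-incidence}(3) to get distinct singleton vertices with unique representatives, and use Theorem~\ref{thm:v140-pba} to identify the maximal components with the $\mathcal B_x$. You spell out more than the paper's terse proof does, checking each clause of Definition~\ref{def:finite-dimensional} and in particular that the maximal pairwise orthogonal sets of atoms are exactly the $F_x$; that step rests on Lemma~\ref{lem:sigma-incidence}(7), which the paper also uses implicitly through Theorem~\ref{thm:v140-pba}.
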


\begin{proof}
Under the isomorphism $q_x$, the atoms of $\mathcal B_x$ are the singleton
subsets of $x$.  Lemma~\ref{lem:sigma-incidence}(3) shows that singleton words
remain distinct in the quotient, so these are precisely the atoms of
$\mathcal V_{140}$.  Each maximal component has four atoms, and by
Theorem~\ref{thm:v140-pba} the maximal components are exactly the
$\mathcal B_x$.  The conclusion follows from Definition~\ref{def:finite-dimensional}.
\end{proof}

\subsection{The parity obstruction}

A KS-function on $\mathcal V_{140}$ is therefore exactly a function
$f:\Lambda\to\{0,1\}$ assigning the value $1$ to exactly one letter of every
side in $\mathcal S$.

\begin{theorem}
\label{thm:noKS}
The partial Boolean algebra $\mathcal V_{140}$ does not have the KS-property.
\end{theorem}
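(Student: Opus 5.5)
The plan is a parity argument: find an odd number of sides whose letters are covered an even number of times, and play this against the requirement that a KS-function select exactly one letter per side.

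By the remark preceding the theorem, which rests on Corollary~\ref{cor:v140-dimension}, it suffices to show there is no function $f:\Lambda\to\{0,1\}$ assigning the value $1$ to exactly one letter of every side in $\mathcal S$. Suppose such an $f$ exists. It is then enough to exhibit a subfamily $\mathcal S_0\subseteq\mathcal S$ with two properties: $|\mathcal S_0|$ is odd, and every letter of $\Lambda$ lies on an even number of members of $\mathcal S_0$. Double counting then gives a contradiction:
\[
|\mathcal S_0|=\sum_{x\in\mathcal S_0}\sum_{a\in x}f(a)=\sum_{a\in\Lambda}f(a)\,\bigl|\{x\in\mathcal S_0:a\in x\}\bigr|.
\]
The left side is odd, while every term on the right is even.

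For $\mathcal S_0$ I would take the nine top loops from table~\eqref{eq:sigma-sides}, one from each module $m_{ij}$:
\[
\{A,J,K,B\},\ \{A,M,N,C\},\ \{B,P,Q,C\},\ \{D,J,L,E\},\ \{D,M,O,F\},\ \{E,P,R,F\},\ \{G,K,L,H\},\ \{G,N,O,I\},\ \{H,Q,R,I\}.
\]
The verification is a direct count, letter by letter.
\begin{itemize}
\item Each of the left letters $A,\dots,I$ appears exactly twice. For example, $A$ lies only in the tops of $m_{11}$ and $m_{12}$, and $G$ only in the tops of $m_{31}$ and $m_{32}$.
\item Each of the right letters $J,\dots,R$ also appears exactly twice. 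For example, $J$ lies only in the tops of $m_{11}$ and $m_{21}$, and $R$ only in the tops of $m_{23}$ and $m_{33}$.
\item The letters $U,\dots,Z$ do not occur at all, which is an even count (zero).
\end{itemize}
Hence the nine sides carry $36=2\cdot18$ letter incidences, with each of the eighteen letters $A,\dots,R$ counted exactly twice.

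Combining the two steps, the sum $\sum_{x\in\mathcal S_0}\sum_{a\in x}f(a)$ equals $9$ on one side and $2\sum_{a\in\{A,\dots,R\}}f(a)$ on the other, which is impossible. So no KS-function exists. This is the Cabello--Estebaranz--Garc\'ia-Alcaine parity proof mentioned in the introduction.

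The only real obstacle is locating a valid odd subfamily; once the nine top loops are chosen, everything else is routine bookkeeping. I would also note the model-theoretic consequences, which follow at once from Corollary~\ref{cor:noKS} and Theorem~\ref{thm:sharp-implies-ks}: every model of the induced theory misses some pre-frame, and no model makes every observable sharp.
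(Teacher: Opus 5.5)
Your proposal is correct and is essentially the paper's own proof: the same reduction to functions $f:\Lambda\to\{0,1\}$ that select exactly one letter of every side, the same nine top loops, and the same double count, which gives $9$ on one side and $2\sum_{a\in\{A,\dots,R\}}f(a)$ on the other. Your letter-by-letter incidence counts (each of $A,\dots,R$ exactly twice, each of $U,\dots,Z$ zero times) check out against the table~\eqref{eq:sigma-sides}.
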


\begin{proof}
Suppose that such a function $f$ exists.  Consider the nine top loops
\[
\mathcal L=\bigl\{
\begin{aligned}[t]
&\{A,J,K,B\},\ \{A,M,N,C\},\ \{B,P,Q,C\},\\
&\{D,J,L,E\},\ \{D,M,O,F\},\ \{E,P,R,F\},\\
&\{G,K,L,H\},\ \{G,N,O,I\},\ \{H,Q,R,I\}
\end{aligned}
\bigr\}.
\]
Since each member of $\mathcal L$ is a side, exactly one of its four letters has
value $1$.  Hence
\begin{equation}
\label{eq:parity-odd}
\sum_{\ell\in\mathcal L}\ \sum_{a\in\ell}f(a)=9.
\end{equation}
On the other hand, each of $A,B,\ldots,R$ occurs in exactly two of these nine
loops, while none of $U,V,W,X,Y,Z$ occurs in them.  Therefore the left-hand side
of~\eqref{eq:parity-odd} equals
\[
2\sum_{a\in\{A,B,\ldots,R\}}f(a),
\]
which is even.  This contradicts~\eqref{eq:parity-odd}.  Thus no KS-function
exists.
\end{proof}

\begin{corollary}
\label{cor:macrouncertain}
There is a consistent theory, formulated in classical propositional logic and
having core $\mathcal V_{140}$, with no dispersion-free model.  In every model
of that theory at least one observable fails to be sharp---that is, it is either
non-actualizable or has uncertainty at least~$1$---and no global hidden-value
assignment removes all such failures.
\end{corollary}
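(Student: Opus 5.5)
The plan is to assemble the corollary from results already established, in three steps.

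First, I fix the theory. Take any set $U$ of initial formul\ae{} admitting a surjection $g:U\twoheadrightarrow\mathcal V_{140}$; the simplest choice is one formula per vertex, so $|U|=140$, or else the intrinsic language $\mathcal I$ with $\widehat g$. Let $\mathcal T_g$ be the induced theory.
\begin{itemize}
\item Theorem~\ref{thm:v140-pba} says $\mathcal V_{140}$ is a partial Boolean algebra, and its zero $[[\emptyset]]$ differs from its unit, so it is nondegenerate.
\item Theorem~\ref{thmC} therefore gives consistency of $\mathcal T_g$.
\item Theorem~\ref{thm:corepba} identifies the core $\mathbb C(\mathcal T_g)$ with $\mathcal V_{140}$.
\end{itemize}

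Second, I show there is no dispersion-free model. Corollary~\ref{cor:v140-dimension} makes $\mathcal V_{140}$ $4$-dimensional. It is nontrivial because it has twenty-four frames. Hence Theorem~\ref{thm:dispfree} applies. Theorem~\ref{thm:noKS} says condition (i) of that theorem fails, so (ii) fails: $\mathcal T_g$ has no dispersion-free model. Equivalently, by Corollary~\ref{cor:noKS}, every model misses some pre-frame entirely.

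Third, I handle the statement about observables. This is the contrapositive of Theorem~\ref{thm:sharp-implies-ks}. If some model $\mathfrak M$ made every finite-spectrum observable sharp, then $\mathcal V_{140}$ would have the KS-property, contradicting Theorem~\ref{thm:noKS}. So in every model some observable $P$ is not sharp. By Definition~\ref{def:pointer} and Proposition~\ref{prop:outcount}, this means one of two things:
\begin{itemize}
\item $\Delta_{\mathfrak M}(P)=\emptyset$, so $P$ is non-actualizable; or
\item $|\Delta_{\mathfrak M}(P)|\ge2$, so the uncertainty is at least $1$.
\end{itemize}

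For the final clause, I read a global hidden-value assignment as a two-valued homomorphism whose restriction to every component is Boolean, as in (iii) of Theorem~\ref{thm:dispfree}. Such an assignment would restrict to a KS-function and so cannot exist. Even a weaker reading, namely any model making all observables sharp, is excluded by the previous step.

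The only point needing care is where the observables live. Theorem~\ref{thm:sharp-implies-ks} is phrased for a model of $\mathcal T_g$ but uses intrinsic pointer sets. I would therefore either take $U=\mathcal I$ directly, or pass to the combined language $\mathcal H=U\sqcup\mathcal I$ with $g^*$. Since synonymous formul\ae{} are equivalent and models are upward closed in vertices, a model of one theory determines a model of the other with the same vertex image, and pointer sets depend only on that image. Beyond this bookkeeping, no new combinatorics is needed.
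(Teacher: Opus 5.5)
Your proposal is correct and follows the paper's proof essentially step for step. Consistency and the core come from Theorems~\ref{thmC} and~\ref{thm:corepba}; the absence of dispersion-free models and of two-valued homomorphisms comes from Theorem~\ref{thm:dispfree} combined with Theorem~\ref{thm:noKS}; and the sharpness claim is the contrapositive of Theorem~\ref{thm:sharp-implies-ks}. Your additional checks (nondegeneracy, nontriviality, and transferring models between the extrinsic and intrinsic languages so that pointer sets are meaningful for models of $\mathcal T_g$) spell out what the paper leaves implicit, without changing the argument.
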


\begin{proof}
Choose a set $U$ and a surjection $g:U\twoheadrightarrow\mathcal V_{140}$.
The induced theory $\mathcal T_g$ is consistent by Theorem~\ref{thmC}, and its
core is isomorphic to $\mathcal V_{140}$ by Theorem~\ref{thm:corepba}.
By Theorem~\ref{thm:noKS} the algebra $\mathcal V_{140}$ lacks the
KS-property, so Theorem~\ref{thm:dispfree} shows that $\mathcal T_g$ has no
dispersion-free model and that $\mathcal V_{140}$ admits no two-valued
homomorphism.  If some model made every finite-spectrum observable sharp, then
Theorem~\ref{thm:sharp-implies-ks} would produce a KS-function, contradicting
Theorem~\ref{thm:noKS}.  Hence in every model some observable is not sharp.
\end{proof}

\begin{remark}
\label{rem:macro-careful}
Two refinements of the preceding statement are worth recording, because the
weaker formulation ``every model misses some pre-frame, hence some observable
is uncertain'' is not a valid inference.  First, missing a pre-frame does
\emph{not} by itself imply non-sharpness: as noted after
Proposition~\ref{prop:disp-sharp}, $\mathcal V_{12}$ has models that miss both
pre-frames and yet make both observables sharp.  Theorem~\ref{thm:sharp-implies-ks}
is what closes this gap.  Second, ``not sharp'' is genuinely weaker than
``uncertainty at least~$1$'', because the pointer set may be empty
(Remark~\ref{rem:empty-pointer}).  Both alternatives occur in
$\mathcal V_{140}$: the set of all vertices represented by three-letter and
four-letter words is an upward-closed cluster, hence a model, and in it every
one of the $24$ observables has empty pointer set, since the four three-letter
subsets of a side have empty intersection.
\end{remark}
This is the second of the two examples promised in the abstract.
Whereas the uncertainties of the $\mathcal V_{12}$ world are mere ignorance of a
hidden classical state, those of the $\mathcal V_{140}$ world are intrinsic: no
consistent refinement removes them.

\begin{remark}[Hilbert-space realization]
\label{rem:peres}
Let $\mathcal R\subseteq\mathbb R^{4}$ consist of the twenty-four rays spanned
by the vectors
\[
(1,0,0,0)\ \text{and its permutations},\qquad
(1,\pm1,0,0)\ \text{and its permutations},\qquad
(1,\pm1,\pm1,\pm1),
\]
the Peres configuration \cite{Peres1991}.  These rays fall into exactly
twenty-four orthogonal bases, each ray belonging to exactly four of them, and
two distinct bases share $0$, $1$ or $2$ rays.  The bipartite incidence
structure of rays and bases is isomorphic to that of $\Lambda$ and
$\mathcal S$ in~\eqref{eq:sigma-sides}; an isomorphism is produced by the
verification script of Appendix~\ref{app:v140-certificate}.  Under any such
isomorphism the sub-partial-Boolean-algebra of $\mathcal V(\mathbb R^{4})$
generated by $\mathcal R$ and its bases is isomorphic to
$(\mathcal V_{140},\Pi_{140})$, and Theorem~\ref{thm:noKS} becomes the
Kochen--Specker theorem for the Peres set.  Consequently the macroscopic
reading of Corollary~\ref{cor:macrouncertain} is not an artefact of the
combinatorics: the same core is realized by genuine quantum propositions in
dimension four.
\end{remark}

\subsection{Interpretive scope: macroscopic realism and contextuality}
\label{subsec:interpretive-scope}

The preceding result is a theorem about global truth assignments in a classical
propositional theory whose core is a partial Boolean algebra.  It has a natural
formal resemblance to the assumptions used in discussions of macroscopic
realism: a dispersion-free model assigns one definite value in every
measurement frame, independently of which frame is considered.  In this
limited sense, Theorem~\ref{thm:dispfree} identifies KS-colourability with the
existence of a global definite-value assignment, while
Theorem~\ref{thm:noKS} exhibits an explicit obstruction to such an assignment.

This resemblance should not be overstated.  The present construction does not
by itself constitute a Leggett--Garg experiment or a derivation of a
Leggett--Garg inequality \cite{LeggettGarg1985}.  Such inequalities concern
probabilities and temporal correlations obtained from specified measurement
procedures.  The framework developed here presently supplies a space of
models, a structural notion of incompatibility, and a proposed update rule; it
does not yet supply probabilities for the possible outcomes of that update.
Consequently, no numerical Leggett--Garg prediction follows from the results
proved in this paper.

The obstruction in Theorem~\ref{thm:noKS} is also structurally analogous to
contextuality: there is no single two-valued assignment whose restriction to
every component is a Boolean homomorphism.  Contextuality has been studied
through graph-theoretic and resource-theoretic methods
\cite{Cabello2014,Howard2014}, and in particular operational forms of
contextuality are known to be relevant to quantum advantage in specific
computational models.  The contribution made here is more modest and more
logical: the same compatibility structure can be represented as the core of a
consistent classical theory, and the existence or nonexistence of a global
valuation becomes a statement about the models of that theory.  No claim of a
computational speed-up, a resource monotone, or an operational implementation
is made.

\section{Outlook}
\label{sec:outlook}

The results suggest several concrete continuations.

First, one may supplement the measurement-update rule of
Section~\ref{sec:dynamics} with a probability law on its admissible outcomes.
Only after such a law has been specified can temporal correlations be defined
and Leggett--Garg-type inequalities be meaningfully investigated.  The finite
examples $\mathcal V_{12}$ and $\mathcal V_{140}$ provide test cases for such a
study, but the present results do not determine in advance whether either
example satisfies or violates a particular inequality.

Second, the update rule defines a finite transition structure on the model
space whenever the underlying partial Boolean algebra and the chosen set of
initial formul\ae\ are finite.  Its reachable states, fixed points, recurrent
classes, and the order dependence of successive measurements can therefore be
studied directly.  This would clarify which features of the proposed dynamics
follow solely from the compatibility structure and which depend on an added
probability law.

Third, the quantitative notion of uncertainty introduced in
Section~\ref{fuzzy} may be compared across models and observables.  To turn it
into a contextuality measure would require specifying a class of free
operations and proving monotonicity under those operations.  Neither step is
carried out here, so this remains an open problem rather than a consequence of
the present theory.

Finally, the order-theoretic structure of the family of models of
$\mathcal T_g$ deserves separate investigation.  In particular, it is natural
to ask which closure or lattice properties this family possesses and how those
properties are related to the partial Boolean algebra in the core.  This
question may help make precise the distinction, emphasized above, between
states, models, and global value assignments.

\appendix
\section{Appendix: relation to quantum states}
\label{App:QT}

This appendix justifies the terminology of Section~\ref{pm} and, equally
importantly, states its limitation.
A model of $\mathcal T_g$ is a two-valued valuation on the propositional
language, whereas a quantum state assigns probabilities in $[0,1]$.
The model therefore cannot encode the complete probability law of a quantum
state.
What it can encode exactly is the state's \emph{certainty set}: the collection
of quantum propositions having probability~$1$.

Let $\mathcal H$ be a finite-dimensional complex Hilbert space of dimension at
least~$3$, and let $\mathcal V_{\mathcal H}$ be its partial Boolean algebra of
closed subspaces.
By Gleason's theorem, every countably additive probability measure on the
projections of $\mathcal H$ is represented by a density operator $\rho$ through
\[
\mu_\rho(S)=\operatorname{tr}(\rho J_S),
\]
where $J_S$ denotes the orthogonal projection onto the subspace $S$
\cite{Gleason1957}.
A state is pure when $\rho$ has rank one and mixed otherwise.

\subsection{Pure states}
\label{subsec:pure}

Let $\rho=J_\alpha$ be the rank-one projection onto a ray $\alpha$.
For every subspace $S$,
\[
\mu_\rho(S)=\operatorname{tr}(J_\alpha J_S)=1
\quad\Longleftrightarrow\quad
\alpha\le S.
\]
Now let $g:U\twoheadrightarrow\mathcal V_{\mathcal H}$ be a canonical function
and choose $e\in U$ with $g(e)=\alpha$.
Then
\[
\mathfrak M_e=\{e\}^*
=\{q\in U:g(e)\le g(q)\}
\]
is a model, and
\[
q\in\mathfrak M_e
\quad\Longleftrightarrow\quad
\mu_\rho(g(q))=1.
\]
Thus the pure-state certainty model records exactly the propositions certain in
the pure state $\rho$.
Truth value~$0$ means only that the probability is strictly less than~$1$; it
does not distinguish probability~$0$ from intermediate probabilities.

\subsection{Mixed states and support projections}
\label{subsec:mixed}

Let $\rho$ be a density operator with spectral decomposition
\begin{equation}
\label{eq:spectral}
\rho=\sum_{i=1}^k x_iJ_i,
\qquad x_i>0,
\qquad \sum_{i=1}^k x_i\dim(\operatorname{ran}J_i)=1,
\end{equation}
where the $J_i$ are the pairwise orthogonal spectral projections associated with
the nonzero eigenvalues.
Its support projection is
\[
J_R=\sum_{i=1}^k J_i,
\]
and $R=\operatorname{ran}J_R$ is the support subspace of $\rho$.

\begin{proposition}
\label{prop:mix}
For every subspace $S\le\mathcal H$,
\[
\operatorname{tr}(\rho J_S)=1
\quad\Longleftrightarrow\quad
R\le S.
\]
Consequently, if $R\not\le S$, then
$0\le\operatorname{tr}(\rho J_S)<1$.
\end{proposition}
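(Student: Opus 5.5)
The plan is a direct computation in the orthonormal eigenbasis of $\rho$. First I refine the spectral decomposition~\eqref{eq:spectral}: choose an orthonormal basis $\{u_1,\dots,u_m\}$ of the support $R$ consisting of eigenvectors of $\rho$, with eigenvalues $y_j>0$ (each $y_j$ one of the $x_i$, repeated according to multiplicity). This gives
\[
\rho=\sum_{j=1}^m y_j\,|u_j\rangle\langle u_j|,\qquad \sum_{j=1}^m y_j=1 .
\]
For a subspace $S$ with projection $J_S$, I will then have
\[
\operatorname{tr}(\rho J_S)=\sum_{j=1}^m y_j\,\langle u_j,J_Su_j\rangle=\sum_{j=1}^m y_j\,\|J_Su_j\|^2 .
\]
Since $0\le\|J_Su_j\|^2\le1$, this expresses $\operatorname{tr}(\rho J_S)$ as a convex combination, with strictly positive weights, of numbers in $[0,1]$.

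For the forward direction, I use the strict positivity of the weights. If $\operatorname{tr}(\rho J_S)=1$, then every term must satisfy $\|J_Su_j\|^2=1$, because any deficit would be multiplied by some $y_j>0$. Since $\|u_j\|=1$, the Pythagorean identity $\|u_j\|^2=\|J_Su_j\|^2+\|(I-J_S)u_j\|^2$ forces $(I-J_S)u_j=0$, so each $u_j$ lies in $S$. The $u_j$ span $R$, hence $R\le S$.

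For the converse, if $R\le S$ then $J_Su_j=u_j$ for every $j$, and the sum equals $\sum_j y_j=1$. The final clause follows directly from the same formula: the trace always lies in $[0,1]$, and by the equivalence just proved it is not $1$ when $R\not\le S$.

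The only step needing care is the forward direction, where the equality case of the convex combination is used. This relies on the eigenvalues being strictly positive on $R$; that is exactly why the decomposition is restricted to nonzero eigenvalues. No earlier result of the paper is needed beyond the definitions of $\rho$, $J_R$ and $R$.
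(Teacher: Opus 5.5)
Your proof is correct and is essentially the paper's argument written in an eigenbasis: the paper's identity $1-\operatorname{tr}(\rho J_S)=\operatorname{tr}\bigl(\rho^{1/2}(I-J_S)\rho^{1/2}\bigr)$, together with the fact that a positive operator of trace zero vanishes, is exactly your statement that $\sum_j y_j\|(I-J_S)u_j\|^2=0$ forces each $u_j\in S$. Your version avoids $\rho^{1/2}$ and the identification $\operatorname{ran}\rho^{1/2}=R$ at the cost of fixing a basis, and your converse via $J_Su_j=u_j$ corresponds to the paper's $\rho J_S=\rho$.
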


\begin{proof}
Since $\rho=J_R\rho J_R$, the condition $R\le S$ implies
$J_SJ_R=J_R$, hence $\rho J_S=\rho$ and
$\operatorname{tr}(\rho J_S)=1$.
Conversely, suppose $\operatorname{tr}(\rho J_S)=1$.
Then
\[
0=\operatorname{tr}(\rho)-\operatorname{tr}(\rho J_S)
  =\operatorname{tr}\bigl(\rho^{1/2}(I-J_S)\rho^{1/2}\bigr).
\]
The operator inside the trace is positive, so it must vanish.
Therefore $(I-J_S)\rho^{1/2}=0$, and the range of $\rho^{1/2}$, which is $R$,
is contained in $S$.
The final assertion follows from positivity and normalization.
\end{proof}

\begin{proposition}
\label{prop:everysubspace}
Every nonzero subspace $R\le\mathcal H$ is the support of a density operator.
\end{proposition}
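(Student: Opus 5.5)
The statement to prove is Proposition~\ref{prop:everysubspace}: every nonzero subspace $R\le\mathcal H$ is the support of a density operator. The plan is to write one down explicitly: the normalized projection onto $R$,
\[
\rho_R=\frac{1}{\dim R}\,J_R .
\]
Here $J_R$ is the orthogonal projection onto $R$. Since $R\ne\mathbf 0$ and $\mathcal H$ is finite-dimensional, $d=\dim R$ is a positive integer, so $\rho_R$ is well defined.

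First I check that $\rho_R$ is a density operator. It is self-adjoint and positive, because $J_R$ is an orthogonal projection and $1/d>0$. Its trace is $\operatorname{tr}(J_R)/d=d/d=1$. In the notation of~\eqref{eq:spectral}, this is the spectral decomposition with $k=1$, $x_1=1/d$ and $J_1=J_R$, and the normalization $\sum_i x_i\dim(\operatorname{ran}J_i)=1$ holds.

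Next I identify the support. The spectral projections of $\rho_R$ for nonzero eigenvalues consist of $J_R$ alone, since $1/d$ is its only nonzero eigenvalue. The support projection is therefore $J_R$, and the support subspace is $\operatorname{ran}J_R=R$, as required.

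No step is a real obstacle. The only point needing care is that $R\neq\mathbf 0$ is what makes $\dim R\ge1$, so the normalization is possible. It may be worth adding that the support does not determine the state: any $\sum_i x_i\,|v_i\rangle\langle v_i|$ with an orthonormal basis $\{v_i\}$ of $R$, all $x_i>0$ and $\sum_i x_i=1$ also has support $R$. This is the non-uniqueness underlying the remark in Section~\ref{pm} that $\mathfrak M_a$ captures only the support class of a state. Combined with Proposition~\ref{prop:mix}, it shows that every support-state certainty model $\{a\}^*$ arises from an actual mixed state whenever $g(a)$ is not an atom.
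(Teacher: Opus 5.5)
Your proof is correct and is essentially the paper's argument. The paper takes $\rho=\sum_i a_i|e_i\rangle\langle e_i|$ over an orthonormal basis of $R$ with arbitrary weights $a_i>0$ summing to $1$. Your $J_R/\dim R$ is the equal-weight case $a_i=1/\dim R$, and the general family you mention in your closing remark is precisely the paper's construction.
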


\begin{proof}
Let $e_1,\dots,e_m$ be an orthonormal basis of $R$ and choose numbers
$a_i>0$ with $\sum_i a_i=1$.
Then
\[
\rho=\sum_{i=1}^m a_i|e_i\rangle\langle e_i|
\]
is positive, has trace~$1$, and has support exactly $R$.
\end{proof}

Density operators with the same support have the same probability-one
propositions, although they may assign different probabilities below~$1$.
Hence support subspaces classify density operators only up to equality of their
certainty sets.
This motivates the following abstract terminology.

\begin{definition}
\label{def:absmix}
Let $(\mathcal V,\Pi)$ be an atomic $n$-dimensional partial Boolean algebra.
A nonzero non-atomic element $r\in\mathcal V$ is called a \emph{support-state
element}.
If $g(a)=r$, the model
\[
\mathfrak M_a=\{a\}^*=\{q\in U:r\le g(q)\}
\]
is called the \emph{support-state certainty model} based on $a$.
\end{definition}

In the Hilbert-space case, every support-state element is the support of at least
one mixed density operator by Proposition~\ref{prop:everysubspace}, and
Proposition~\ref{prop:mix} shows that $\mathfrak M_a$ records exactly its
probability-one propositions.
No claim is made that an abstract support-state element carries the convex or
probabilistic structure of a density operator; the correspondence concerns only
the induced certainty set.

\section{The partial Boolean algebra of a single module}
\label{app:single-module}

This optional appendix records the quotient associated with one module.  It is
not used in the proofs of the main results, but it gives a smaller illustration
of the construction underlying $\mathcal V_{140}$.  Note that the resulting
partial Boolean algebra has $36$ vertices; the count is carried out in full
below, because the number is easy to get wrong.

Let
\[
X=\{S_1,S_2,S_3,S_4,T_1,T_2,T_3,T_4\}
\]
and let the four sides be
\[
\begin{aligned}
l&=\{S_1,S_2,S_3,S_4\},
& r&=\{T_1,T_2,T_3,T_4\},\\
k_1&=\{S_1,S_2,T_1,T_2\},
& k_2&=\{S_3,S_4,T_3,T_4\}.
\end{aligned}
\]
These are the four sides of the module in Figure~\ref{module}.  A \emph{word}
is a subset of one of these sides.  If $w\subseteq x$, write
$w^x=x\setminus w$.  For words $u$ and $v$, define
\[
u\approx_0 v
\quad\Longleftrightarrow\quad
u^x=v^y
\quad\text{for some sides }x\supseteq u\text{ and }y\supseteq v.
\]
Write $[[w]]_0$ for the equivalence class of $w$.

\begin{proposition}
\label{prop:single-module-count}
The relation $\approx_0$ is an equivalence relation, and it has exactly
$36$ equivalence classes.  Their profile is
\[
\begin{array}{c|c|c|c}
\text{word size}&\text{distinct words}&\text{class profile}&\text{classes}\\ \hline
0&1&1\text{ singleton}&1\\
1&8&8\text{ singletons}&8\\
2&20&2\text{ classes of size }2+16\text{ singletons}&18\\
3&16&8\text{ classes of size }2&8\\
4&4&1\text{ class of size }4&1
\end{array}
\]
Consequently,
\[
1+8+18+8+1=36.
\]
\end{proposition}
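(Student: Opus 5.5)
The statement is Proposition~\ref{prop:single-module-count}: the single-module congruence has $36$ classes with the displayed profile. I would argue exactly as in Proposition~\ref{prop:cong-equivalence}, size by size. First I record the incidence of the four sides. Each letter lies on exactly two sides: $S_1,S_2$ on $l,k_1$; $S_3,S_4$ on $l,k_2$; $T_1,T_2$ on $r,k_1$; $T_3,T_4$ on $r,k_2$. The intersections are $l\cap r=k_1\cap k_2=\emptyset$, while $l\cap k_1=\{S_1,S_2\}$, $l\cap k_2=\{S_3,S_4\}$, $r\cap k_1=\{T_1,T_2\}$ and $r\cap k_2=\{T_3,T_4\}$. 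Reflexivity and symmetry of $\approx_0$ are immediate. Congruence preserves word size, since $|w^x|=4-|w|$. Hence it suffices to describe, for each size, which complements coincide.

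I would then treat the sizes in turn.
\begin{itemize}
\item[] \emph{Size $0$ and size $4$.} The empty word is a subset of every side with complement the side itself, so $\emptyset^x=\emptyset^y$ forces $x=y$; it is congruent only to itself. All four full sides have empty complement, so they form one class of size $4$.
\item[] \emph{Size $1$.} The three-letter complements of $\{a\}$ and $\{b\}$ can be equal only if they lie on a common side and omit the same letter. This forces $a=b$, giving $8$ singleton classes.
\item[] \emph{Size $3$.} A three-letter word lies on exactly one side, because sides meet in at most two letters. Two three-letter words are therefore congruent exactly when their missing letters coincide. Each letter is missing from exactly two three-letter words, one on each of its two sides. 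This gives $8$ classes of size $2$ covering all $16$ three-letter words.
\item[] \emph{Size $2$.} There are $4\binom42=24$ occurrences of two-letter words. The four intersection pairs $\{S_1,S_2\},\{S_3,S_4\},\{T_1,T_2\},\{T_3,T_4\}$ each occur twice, so there are $20$ distinct words.
\end{itemize}

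The main obstacle is identifying the nonsingleton two-letter classes, which I would handle by direct inspection. On $k_1$, the complement of $\{S_1,S_2\}$ is $\{T_1,T_2\}$. On $r$, the complement of $\{T_3,T_4\}$ is $\{T_1,T_2\}$. So $\{S_1,S_2\}\approx_0\{T_3,T_4\}$, and symmetrically $\{S_3,S_4\}\approx_0\{T_1,T_2\}$. I would check that no other coincidences occur. A mixed pair such as $\{S_1,T_1\}$ lies only on $k_1$ and has complement $\{S_2,T_2\}$; that complement is attained only from $\{S_1,T_1\}$ itself. A pair such as $\{S_1,S_3\}$ lies only on $l$, and its complement $\{S_2,S_4\}$ is likewise attained only from itself. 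For an intersection pair, say $\{S_1,S_2\}$, the two complements are $\{S_3,S_4\}$ (on $l$) and $\{T_1,T_2\}$ (on $k_1$). The first is the complement of no word other than $\{S_1,S_2\}$. The second is also the complement of $\{T_3,T_4\}$ on $r$, which gives exactly the pairing above.

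Transitivity follows because each class description partitions the words, and $\approx_0$ is equality of blocks. I must be careful here that the intersection pairs have two different complements, so a chain could a priori merge classes. The check above shows that each intersection pair is linked to exactly one other word, so the partition is consistent. This gives $2$ classes of size $2$ and $16$ singletons, hence $18$ classes. Summing over sizes yields $1+8+18+8+1=36$.
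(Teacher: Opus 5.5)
Your size-by-size enumeration is the paper's argument, and your final classification (two non-singleton two-letter classes, sixteen two-letter singletons, $36$ classes in all) is correct. You also go further than the paper by explicitly ruling out chains through the intersection pairs.

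One sentence in that check is false, however. You say that $\{S_3,S_4\}$, the complement of $\{S_1,S_2\}$ on $l$, ``is the complement of no word other than $\{S_1,S_2\}$.'' In fact $k_2\setminus\{T_3,T_4\}=\{S_3,S_4\}$, so $\{S_3,S_4\}$ is also the complement of $\{T_3,T_4\}$ on $k_2$. This is exactly the common complement the paper cites for $\{S_1,S_2\}\approx_0\{T_3,T_4\}$. The slip is harmless, because it yields the same partner $\{T_3,T_4\}$. Your conclusion that each intersection pair is linked to exactly one other word therefore still holds, but as written the verification asserts something untrue.

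A cleaner way to make the exhaustiveness airtight, without checking representative pairs, is the following. If $u\ne v$ and $u^x=v^y$, then $x\ne y$ and $u^x=v^y\subseteq x\cap y$. For two-letter words this forces $|x\cap y|=2$ and $u^x=v^y=x\cap y$, so $u=x\setminus(x\cap y)$ and $v=y\setminus(x\cap y)$. Running through the four side pairs $(l,k_1)$, $(l,k_2)$, $(r,k_1)$, $(r,k_2)$ gives exactly $\{S_3,S_4\}\approx_0\{T_1,T_2\}$ and $\{S_1,S_2\}\approx_0\{T_3,T_4\}$, each twice, and nothing else. Transitivity on two-letter words is then immediate. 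The same containment argument also handles sizes $1$ and $3$ uniformly, since it needs $|x\cap y|\ge 3$ and $|x\cap y|\ge 1$ respectively.
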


\begin{proof}
The empty word forms a singleton class, and all four sides are congruent because
their complements are empty.  Every one-letter word is congruent only to
itself.  Each letter lies on exactly two sides, so the two three-letter
complements of that letter form one class; hence there are eight classes of
three-letter words.

There are twenty distinct two-letter words.  The only non-singleton classes are
\[
\{S_1,S_2\}\approx_0\{T_3,T_4\},
\qquad
\{T_1,T_2\}\approx_0\{S_3,S_4\},
\]
because the respective common complements are $\{S_3,S_4\}$ and
$\{S_1,S_2\}$.  The remaining sixteen two-letter words are singleton classes.
These disjoint classes exhaust all words, so they form a partition; equivalently,
$\approx_0$ is an equivalence relation.  The displayed count follows.
\end{proof}

Let $\mathcal V_{36}$ denote the set of these classes.  For a side $x$, put
\[
\mathcal B_x=\{[[w]]_0:w\subseteq x\}.
\]
If two vertices have representatives $u,v\subseteq x$ on a common side, define
\[
[[u]]_0\sqcap[[v]]_0=[[u\cap v]]_0,
\qquad
[[u]]_0\sqcup[[v]]_0=[[u\cup v]]_0,
\qquad
[[u]]_0^\perp=[[u^x]]_0.
\]

\begin{theorem}
\label{thm:single-module-pba}
The operations above are independent of the chosen common side and of the
chosen representatives.  With components consisting of
$\mathcal B_l,\mathcal B_r,\mathcal B_{k_1},\mathcal B_{k_2}$ and all their
Boolean subalgebras, $\mathcal V_{36}$ is a four-dimensional partial Boolean
algebra.  Its four maximal components are precisely the four displayed
$16$-element Boolean algebras.
\end{theorem}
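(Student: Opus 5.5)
The plan is to mirror the proof of Proposition~\ref{prop:v140-welldefined} and Theorem~\ref{thm:v140-pba}, now for the single module, where all verifications are small enough to do by hand.

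\textbf{Step 1: injectivity of $q_x$.} For each side $x$, the map $q_x:\mathcal P(x)\to\mathcal B_x$, $w\mapsto[[w]]_0$, should be a bijection. Congruence preserves word size, so it suffices to check, size by size, that two distinct congruent words never lie on a common side.
\begin{itemize}
\item For sizes $0$ and $1$ this is trivial.
\item Size $4$ is the side itself, so there is nothing to check.
\item For size $3$, a class $\{x\setminus\{a\},\,y\setminus\{a\}\}$ has its two members on the two different sides through $a$. Each three-letter word lies on a unique side, because sides meet in at most two letters ($l\cap k_1=\{S_1,S_2\}$, $l\cap r=\emptyset$, and so on).
\item For size $2$, the only nontrivial classes are $\{S_1,S_2\}\approx_0\{T_3,T_4\}$ and $\{T_1,T_2\}\approx_0\{S_3,S_4\}$. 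In each, the union of the two members is not a side, so they share no side.
\end{itemize}
Hence each $\mathcal B_x$ has $16$ elements and inherits the Boolean structure of $\mathcal P(x)$.

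\textbf{Step 2: well-definedness, the main step.} It suffices to show that for $x\ne y$ the intersection $\mathcal B_x\cap\mathcal B_y$ is a Boolean subalgebra of both, with the two structures agreeing on it. There are six pairs, falling into two types up to symmetry.
\begin{itemize}
\item \emph{Opposite pairs} $(l,r)$ and $(k_1,k_2)$. These sides are disjoint, but for $(l,r)$ the identifications $\{S_1,S_2\}\sim\{T_3,T_4\}$ and $\{S_3,S_4\}\sim\{T_1,T_2\}$ give the intersection $\{\mathbf 0,\mathbf 1,[[S_1S_2]]_0,[[S_3S_4]]_0\}$. On $l$ these two words are complements of each other, and on $r$ their images $T_3T_4$ and $T_1T_2$ are also complements, so the intersection is a four-element subalgebra with matching complements. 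For $(k_1,k_2)$ the intersection is $\{\mathbf 0,\mathbf 1,[[S_1S_2]]_0,[[T_1T_2]]_0\}$, which is handled the same way.
\item \emph{Adjacent pairs} such as $(l,k_1)$. These meet in $\{S_1,S_2\}$, giving the common vertices $\mathbf 0$, $[[S_1]]_0$, $[[S_2]]_0$, $[[S_1S_2]]_0$, the three-letter classes of $S_1$ and $S_2$, $\mathbf 1$, and the complement class $[[S_3S_4]]_0=[[T_1T_2]]_0$. This is an eight-element subalgebra generated by $S_1$, $S_2$ and the block $S_3S_4\sim T_1T_2$.
\end{itemize}
I expect this case check to be the main obstacle: one must confirm that no further coincidences occur and that meets and joins agree on each intersection. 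Once it holds, the argument of Proposition~\ref{prop:local-operations} makes $\sqcap$, $\sqcup$ and $^\perp$ independent of the chosen side and representatives.

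\textbf{Step 3: the axioms (P1)--(P6).} I would argue as in Theorem~\ref{thm:v140-pba}.
\begin{itemize}
\item (P1) and (P2) are immediate: the common zero is $[[\emptyset]]_0$ and the common unit is the class of the sides.
\item (P3) holds because an intersection of subalgebras of $\mathcal B_x$ and $\mathcal B_y$ lies in $\mathcal B_x\cap\mathcal B_y$, which is a subalgebra of both by Step 2.
\item (P4) holds by construction of the component family.
\item (P6): the order is the Boolean order on components. Transitivity follows once (P5) places $a\le b\le c$ in one component; if not, it needs direct inspection.
\item (P5) requires showing that pairwise compatible finite families lie in a single $\mathcal B_x$. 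I would check this through the compatibility graph. Vertices lying on only one side are trivially fine, so only the shared vertices need checking. For example, $[[S_1]]_0$ is compatible with vertices on $l$ and $k_1$ only, so a family containing it together with a vertex from $r\setminus k_1$ is impossible. Verifying this requires a short enumeration of which sides each class lies on.
\end{itemize}
Maximality of the four $\mathcal B_x$ and four-dimensionality then follow as in Corollary~\ref{cor:v140-dimension}: the atoms are the eight singleton classes, and the frames are the four sides.
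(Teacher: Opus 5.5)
Your proposal is correct and follows essentially the paper's route. You get injectivity of each $q_x$ from the class enumeration, check that the Boolean structures agree on the $4$- and $8$-element pairwise intersections, inspect (P5) directly, and read off the eight atoms and four frames. You also give more detail than the paper does.

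The one step you leave open, transitivity in (P6), genuinely cannot be obtained from (P5), because $a\simeq c$ is not known in advance (Remark~\ref{rem:global-order}). The inspection is short, though:
\begin{itemize}
\item If $b$ is $\mathbf 0$, $\mathbf 1$, an atom, a coatom, or a vertex lying on a single side, the conclusion is immediate.
\item If $b=[[S_1S_2]]_0$ or $b=[[S_3S_4]]_0$, every atom below $b$ shares a side with, and lies below, every coatom above $b$.
\end{itemize}
In fact $\mathcal V_{36}$ is isomorphic to the orthomodular lattice $\mathrm{MO}_2\times\mathrm{MO}_2$.
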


\begin{proof}
The class enumeration in Proposition~\ref{prop:single-module-count} permits a
direct check.  If a class has representatives on two sides, taking the
side-complement of either representative produces the same class.  The same
enumeration shows that unions and intersections of representatives on a common
side depend only on their classes.  Thus each map
\[
\mathcal P(x)\longrightarrow\mathcal B_x,
\qquad w\longmapsto[[w]]_0,
\]
is a Boolean-algebra isomorphism.

The pairwise intersections of the four maximal components have respectively
$4$ or $8$ elements and are Boolean subalgebras of both components.  A direct
inspection of the four sides shows that every pairwise compatible family is
contained in one of the $\mathcal B_x$.  Hence the partial Boolean algebra
axioms hold.

The atoms are the eight singleton classes $[[\{a\}]]_0$, $a\in X$.  The atoms
on each side form a frame of four elements, and the four maximal components are
exactly the Boolean algebras generated by these four frames.  Therefore the
partial Boolean algebra is four-dimensional.
\end{proof}

\begin{remark}
The natural miscount here gives $28$: it arises from recording only eight
singleton classes of two-letter words.  There are in fact sixteen such singleton
classes; with the two non-singleton two-letter classes, the number of classes of
two-letter words is $18$, and the total is $36$.
\end{remark}

\section{Finite verification certificate for $\mathcal V_{140}$}
\label{app:v140-certificate}

The construction in Section~\ref{sec:4dim} is finite, and every incidence and
well-definedness assertion used there can be checked exhaustively from the
side list~\eqref{eq:sigma-sides}.  This appendix records the verification
procedure so that the finite part of the argument is reproducible independently
of the geometric drawing.

Represent each side by a four-element set and form the set $W$ of all subsets
of all sides.  For $u,v\in W$, declare
\[
u\approx v
\quad\Longleftrightarrow\quad
x\setminus u=y\setminus v
\quad\text{for some sides }x\supseteq u,
\ y\supseteq v.
\]
The verification proceeds as follows.

\begin{enumerate}
\item Enumerate the $24$ sides and all of their subsets.  After duplicate words
      are removed, one obtains $253$ distinct words.
\item Evaluate the relation $\approx$ on all ordered pairs in $W\times W$ and
      check reflexivity, symmetry, and transitivity directly.
\item Enumerate the resulting equivalence classes and sort them by word size.
      The output is
      \[
      1,\ 24,\ 90,\ 24,\ 1,
      \]
      with the two-letter classes split into eighteen classes of size two and
      seventy-two classes of size one.  Hence there are $140$ classes.
\item For every side $x$ and every $u,v\subseteq x$, record the class identifiers
      of $x\setminus u$, $u\cup v$, and $u\cap v$.  Group these records by the
      class identifiers of $u$ and $v$.  Each group contains a single output
      value.  This proves representative-independence of complement, join, and
      meet.
\item For each pair of side components, check closure of their intersection
      under the recorded complement, join, and meet operations.
\item Construct the compatibility graph on the $140$ classes, joining two
      classes when they have representatives on a common side.  Exhaustive
      maximal-clique enumeration returns exactly $24$ maximal cliques, each of
      size $16$, and these are precisely the side components $\mathcal B_x$.
\item For the nine top loops used in Theorem~\ref{thm:noKS}, count letter
      incidences.  Each of $A,\ldots,R$ occurs twice and each of
      $U,\ldots,Z$ occurs zero times, certifying the parity calculation.
\item Independently, perform an exhaustive search over the $2^{24}$ functions
      $f:\Lambda\to\{0,1\}$ (pruned by unit propagation on the sides) and
      confirm that none assigns the value $1$ to exactly one letter of every
      side.  This certifies Theorem~\ref{thm:noKS} without appeal to the parity
      argument.
\item Construct the twenty-four Peres rays of Remark~\ref{rem:peres}, form
      their orthogonality graph, enumerate its maximal cliques, and test the
      resulting point--block incidence structure for isomorphism with
      $(\Lambda,\mathcal S)$.  The test succeeds, certifying the Hilbert-space
      realization.
\end{enumerate}

The authors have carried out each of these steps in a short Python program,
which confirms every assertion listed above: the $24$ sides and $253$ distinct
words; the class profile $1+24+90+24+1=140$; the fact that $\approx$ is an
equivalence relation; representative-independence of complement, join, and meet;
closure of component intersections; the $24$ maximal compatible components, each
a Boolean algebra of size $16$; the $24$ four-element frames, one per side; the
parity incidences of Theorem~\ref{thm:noKS}; the nonexistence of a KS-function
by exhaustive search; and the incidence isomorphism with the Peres $24$-ray
configuration of Remark~\ref{rem:peres}.  The same program reproduces the counts
quoted for $\mathcal T_{12}$ in Remark~\ref{rem:empty-pointer} and after
Proposition~\ref{prop:disp-sharp}, namely $53$ models, of which $5$ are
dispersion-free, $18$ make both $P_0$ and $Q_0$ sharp, and $11$ leave one of them
with an empty pointer set.  The code is available from the authors on request.

The procedure is entirely deterministic: no random choices, numerical
tolerances, floating-point arithmetic, or external data enter it, and every
step above can equally be carried out by hand from the side
list~\eqref{eq:sigma-sides}.

\section*{Author information}

\noindent{\bf$^*$Othman Q. Malhas (1942--2024)}\\
Othman Q. Malhas earned his Ph.D. from the University of Sussex,
England, and was a professor of mathematics at Yarmouk University,
Irbid, Jordan.

\medskip

\noindent{\bf$^\dag$Bacim Alali}\\
Professor, Department of Mathematics, Kansas State University\\
Manhattan, Kansas, USA\\
Email: \texttt{bacimalali@math.ksu.edu}

\section*{Declarations}

\noindent\textbf{Author contributions.}
Othman Q. Malhas developed the original conception and mathematical
framework of this work. Bacim Alali carried out the mathematical
validation, substantial revision, completion, and preparation of the
manuscript for publication.

\medskip

\noindent\textbf{Completion of the manuscript.}
Following the death of Othman Q. Malhas in 2025, his wife, Enaya I. Mango,
informed Bacim Alali, a former student of Professor Malhas, that Professor
Malhas had wished for him to complete the manuscript and pursue its publication.
\medskip

\noindent\textbf{Code availability.}
The finite verification of the construction $\mathcal V_{140}$ described
in Appendix~\ref{app:v140-certificate} was performed using a short
deterministic Python program written by Bacim Alali for the revision of
this manuscript. The program is not required for the mathematical proofs:
each computation consists of a finite combinatorial check that can also
be carried out directly from the side list~\eqref{eq:sigma-sides}.
The verification code is available from Bacim Alali upon request.
\bibliographystyle{acm}
\bibliography{refs}

\end{document}